\documentclass[12pt]{article}
\usepackage{amsmath}
\usepackage{graphicx}
\usepackage{enumerate}
\usepackage{natbib}
\usepackage{url} % not crucial - just used below for the URL 

%\pdfminorversion=4
% NOTE: To produce blinded version, replace "0" with "1" below.
\newcommand{\blind}{1}

% DON'T change margins - should be 1 inch all around.
\addtolength{\oddsidemargin}{-.5in}%
\addtolength{\evensidemargin}{-1in}%
\addtolength{\textwidth}{1in}%
\addtolength{\textheight}{1.7in}%
\addtolength{\topmargin}{-1in}%

\usepackage{epsfig,epstopdf,multirow,rotating,times,setspace}
\usepackage{graphicx}
\usepackage{amsthm,amsmath,calligra}
\usepackage{amssymb}
\usepackage{booktabs}
\usepackage[all]{xy}
\usepackage{mathtools}
\usepackage{color}
\usepackage{natbib}
\usepackage{mathrsfs}
\usepackage{apptools}
\AtAppendix{\counterwithin{lemma}{section}}

\setcounter{secnumdepth}{3}

\def\argmin{\mathop{\rm argmin}}

\def\T{{ \mathrm{\scriptscriptstyle T} }}

\newtheorem{theorem}{Theorem}
\newcounter{dummy} \numberwithin{dummy}{subsection}
\newtheorem{lem}[dummy]{Lemma}
\newtheorem{lemma}{Lemma}

\newtheorem{cor}{Corollary}

\usepackage{psfrag}
\usepackage{mathrsfs}
\usepackage{accents}
\usepackage{authblk}
\usepackage{blindtext}

\usepackage{graphicx}
\usepackage{amssymb,amsmath,bbm,xcolor,mathrsfs}
\RequirePackage{float,epsfig,multirow,rotating,times}
\RequirePackage{upgreek,wrapfig}
\usepackage{mhequ}
\usepackage{natbib}
\usepackage{accents,url}
\usepackage{comment}
\usepackage{times}
\usepackage{bm}

\usepackage[plain,noend]{algorithm2e}
\usepackage{caption}
\RequirePackage{graphicx,subcaption,latexsym}
\RequirePackage{float}
\RequirePackage{multirow,rotating}
\RequirePackage{upgreek,wrapfig}
\usepackage{mathrsfs}
\usepackage{epstopdf}
\usepackage{psfrag}
\usepackage{mathrsfs}
\usepackage{blindtext}
\usepackage{epstopdf}

\def\T{{ \mathrm{\scriptscriptstyle T} }}

%\DeclareFontFamily{U}{matha}{\hyphenchar\font45}
%\DeclareFontShape{U}{matha}{m}{n}{
%  <-6> matha5 <6-7> matha6 <7-8> matha7
%  <8-9> matha8 <9-10> matha9
%  <10-12> matha10 <12-> matha12
%  }{}
%\DeclareFontShape{U}{matha}{m}{n}{
%  <5> <6> <7> <8> <9> <10> gen * matha
%  <10.95> matha10 <12> <14.4> <17.28>
%  <20.74> <24.88> matha12
%  }{}
%
%\DeclareSymbolFont{matha}{U}{matha}{m}{n}
%\DeclareMathSymbol{\Lt}{3}{matha}{"CE}

\newcommand{\be}{\begin{equs}}
\newcommand{\ee}{\end{equs}}
\newcommand{\bpm}{\begin{pmatrix}}
\newcommand{\epm}{\end{pmatrix}}
\DeclareMathOperator{\E}{\mathbf E}

\newcommand{\ind}{\mathbbm 1}

\newcommand{\xj}{x_j}

\newcommand{\klt}{\underaccent{\tilde}{\kappa}}
\newcommand{\klb}{\underaccent{\bar}{\kappa}}

\def\T{{ \mathrm{\scriptscriptstyle T} }}

\def\ind{\mathbbm{1}}
\def \bX{\mathbf{X}}
\def \bx{\mathbf{x}}
\def \bz{\mathbf{z}}
\def \mz{\mathrm{z}}

%\graphicspath{{code/}}

%\def\munderbar#1{\underline{\sbox\tw@{$#1$}\dp\tw@\z@\box\tw@}}
\usepackage{accents}
%\newcommand\munderbar[1]{%
%  \underaccent{\bar}{#1}}

%% PACKAGES ADDDED BY JACOB
\usepackage{multirow}
\usepackage{array}
\usepackage{hyperref}
\usepackage{xr}

%\makeatletter
%\newcommand*{\addFileDependency}[1]{% argument=file name and extension
%  \typeout{(#1)}
%  \@addtofilelist{#1}
%  \IfFileExists{#1}{}{\typeout{No file #1.}}
%}
%\makeatother

%\newcommand*{\myexternaldocument}[1]{%
%    \externaldocument{#1}%
%    \addFileDependency{#1.tex}%
%    \addFileDependency{#1.aux}%
%}

%\myexternaldocument{JASA_CovDepGM_supplementary}

%\pdfminorversion=4
% NOTE: To produce blinded version, replace "0" with "1" below.

\begin{document}
\def\spacingset#1{\renewcommand{\baselinestretch}%
{#1}\small\normalsize} \spacingset{1}

\newtheorem{assumption}{Assumption}

%%%%%%%%%%%%%%%%%%%%%%%%%%%%%%%%%%%%%%%%%%%%%%%%%%%%%%%%%%%%%%%%%%%%%%%%%%%%%%

\if1\blind
{
  \title{\bf An Approximate Bayesian Approach to Covariate-dependent Graphical Modeling}
  
  \author{Sutanoy Dasgupta, Peng Zhao, Jacob Helwig, Prasenjit Ghosh,
  	Debdeep Pati and Bani K. Mallick \\
  	Department of Statistics, Texas A\&M University}
  
  \maketitle
} \fi
\if0\blind
{  \author{}
  \bigskip
  \bigskip
  \bigskip
   \title{\bf  An Approximate Bayesian Approach to Covariate-dependent Graphical Modeling}
   
} \fi
 
\maketitle

	\bigskip
	\begin{abstract}
 Gaussian graphical models typically assume a homogeneous structure across all subjects, which is often restrictive in applications. In this article, we propose a weighted pseudo-likelihood approach for graphical modeling which allows different subjects to have different graphical structures depending on extraneous covariates. The pseudo-likelihood approach replaces the joint distribution by a product of the conditional distributions of each variable.  We cast the conditional distribution as a heteroscedastic regression problem, with covariate-dependent variance terms, to enable information borrowing directly from the data instead of a hierarchical framework.  This allows independent graphical modeling for each subject, while retaining the benefits of a hierarchical Bayes model and being computationally tractable. An efficient embarrassingly parallel variational algorithm is developed to approximate the posterior and obtain estimates of the graphs. Using a fractional variational framework, we derive asymptotic risk bounds for the estimate in terms of a novel variant of the $\alpha$-R\'{e}nyi divergence. We theoretically demonstrate the advantages of information borrowing across covariates over independent modeling. We show the practical advantages of the approach through simulation studies and illustrate the dependence structure in protein expression levels on breast cancer patients using CNV information as covariates.
	\end{abstract}
	\noindent%
	{\it Keywords:}  Bayesian Gaussian graphical model, heterogeneous graphs, mean-field,   pseudo-likelihood, variational inference.
	\vfill

	\newpage
\spacingset{1.75}

\section{Introduction}
Undirected graphical models provide a widely used framework for modeling multivariate distributions, with applications ranging across diverse disciplines such as statistical physics, bioinformatics, computational biology and sociology. Here, one exploits the structure in the distribution in the form of assumptions of conditional independence among the involved variables. Suppose we observe a $p$-dimensional sample $\bx=(x_{1},x_{2}, \ldots, x_{p})$ from a multivariate Gaussian distribution with a non-singular covariance matrix. Then the conditional independence structure of the distribution can be represented with a graph $\mathrm{G}$. The graph $\mathrm{G}=(\mathrm{V},\mathrm{E})$ is characterized by a node set $\mathrm{V}=(1,2, \ldots, p)$ corresponding to the $p$ variables, and an edge set $\mathrm{E}$ such that $(i,j) \in \mathrm{E}$ if, and only if, $x_i$ and $x_j$ are conditionally dependent given all other variables. 

Several methods have been developed with the goal of estimating this underlying graph $\mathrm{G}$ given $n$ independent and identically distributed observations $\bx_1, \bx_2, \ldots, \bx_n$, such as \cite{friedman2008sparse,yuan2007model,giudici1999decomposable}. However, in practice, the $n$ observations might not be identically distributed, that is, there is no {\it homogeneous} underlying graph describing the conditional dependence structure among the variables for all the observations.  It is imperative, therefore, to develop efficient graph modeling schemes that can take into account the variability in the graph structure across observations depending on additional covariate information. 

\subsection{Current Literature}
Perhaps surprisingly, the literature on handling this {\it heterogeneity} in the underlying graph structure is relatively sparse. 
Some approaches attempt to model heterogeneous graphs without using covariate information, as in \cite{guo2011joint,danaher2014joint,peterson2015bayesian,ha2015dingo,ren_gaussian_2022}. These methods depend on the criteria of first splitting the data into homogeneous groups and then sharing information within and across groups as appropriate. However, a clear criterion for the choice of homogeneous groups is difficult to obtain without extraneous data, and the performance can suffer when the identified groups have small samples. A second approach focuses on adding the covariates into the mean structure of Gaussian graphical models as multiple linear regressions such that the mean is a continuous function of the covariates. \cite{bhadra2013joint} proposed a Bayesian joint model for estimating the mean structure and the graph together. \cite{yin2011sparse, cai2013covariate, lee2012simultaneous}  studied similar models from a frequentist
 perspective. Such an approach estimates the graph after eliminating the effects of the
 covariates from its mean structure. However, the graph structure is still assumed to be homogeneous for all observations. Another approach for estimating heterogeneous graphs is to model the underlying covariance matrix as a function of the covariates, as considered in \cite{hoff2012covariance,fox2015bayesian,pourahmadi1999joint,pourahmadi2000maximum,pourahmadi2013high,zhang2012moving}. The main challenge here is to enforce sparsity in the precision matrix while being positive definite, as the sparsity in the covariance matrix does not normally carry to the precision matrix through matrix inversion. Recently, \cite{liu2010graph} developed a graph-valued regression model that partitions the covariate space into different groups by classification and regression trees (CART). This method assumes that there exists a true partition of the covariate space such that the graph structure is homogeneous inside each of the partitions. Second, tree structures may not be flexible enough to capture the true partition, even if such a partition exists. \cite{ni2019bayesian} proposed a graphical regression method that estimates covariate-dependent continuously varying directed acyclic
graphs (DAGs). But, the conditional dependence structure cannot be extended to undirected graphs. In related literature, \cite{kolar2010sparse,wang2014inference} developed a penalized kernel smoothing method for conditional precision matrices under an additional simplifying assumption that the precision matrix is a function of a low-dimensional
index variable. \cite{kolar2010estimating,zhou2010time,qiu2016joint} proposed methods for
inferring time-varying graphs, which are however, difficult to extend to non-time indexed covariates. 

%\cite{qiu2016joint}, in particular, considers time-varying graphical models which require multiple replicates per subject to construct their estimate, that is typically not available in the covariate-dependent graphical modeling setting. 

\subsection{Proposed formulation}
In what follows, $\bX \in \mathbb{R}^{n \times p}$ refers to the data matrix corresponding to $n$ individuals on $p$ variables, with the rows $\bX_i \in \mathbb{R}^p$ corresponding to the observation on individual $i$. The columns  $x_j \in \mathbb{R}^n, j=\{1, \ldots, p\}$ correspond to the $p$ variables.
The main goal of this paper is to learn the graph structure $\mathrm{G}$ from a collection of $p$-variate independent samples $\bX_i$, as a function of some extraneous covariates $\bz_i$ corresponding to the samples. The only assumption on the dependence structure is that the graph parameters vary smoothly with respect to the covariates, that is, if $\bz_i$ and $\bz_j$ are similar, then the graph structure corresponding to $\bX_i$ and $\bX_j$ will be similar. To the best of our knowledge, there is no method available in the literature that can model the graph itself as a continuous function of covariates without putting additional restrictive or simplifying assumptions on the dependence structure of the graphs on the covariates. A natural way to achieve the sharing of information through covariates is to consider a hierarchical model in a Bayesian setting. Embedding a complex graphical modeling framework in a hierarchical setting involves manifold challenges. In the following, we develop a novel weighted pseudo-likelihood based approach that obviates these challenges, which is computationally efficient and yet retains all the benefits of a hierarchical model.  

Our modeling scheme can be organized into two main steps. First, we use a novel weighted pseudo-likelihood (W-PL) function (described in Section \ref{sec:methods}) to obtain a posterior distribution for the graph structure for a fixed individual, with the weights defined as a function of the covariates. The idea of a pseudo-likelihood approach is to tackle each of the variables $x_j, j=\{1,2,\ldots,p\}$ separately instead of trying to jointly model them. See, for instance, \cite{meinshausen2006high} and \cite{atchade2019quasi} for a more detailed discussion in this context. It is important to note that the pseudo-likelihood model is not a valid probability model, as the conditional distributions for a multivariate Gaussian distribution do not determine the joint distribution. However, consistency and other benefits of the pseudo-likelihood models have been extensively explored in \cite{besag1975statistical}, demonstrating the efficacy of such an approach. The standard pseudo-likelihood approach replaces the original joint likelihood function by the product of the conditional likelihoods of the random variables $x_j$s. Thus, this approach casts the conditional distribution of each of the variables $x_j$ given the remaining variables as a standard homoscedastic regression problem. Instead, we cast the conditional distribution as a {\it weighted} regression problem, by introducing covariate-dependent weights in the error variance term, which leads to a weighted pseudo-likelihood function. 

 Second, we use a variational algorithm to efficiently approximate the posterior distribution and obtain an estimate of the graph for a fixed individual. Repeating this process for every individual, we obtain an empirical distribution of the graph structure over the support of the covariates associated with the individuals. 
The advantages of this two-step approach are manifold.

\noindent {\bf Embarrassingly parallel:} 
The approach allows independent  estimation of the graph structure parameters for different individuals, by sharing information across individuals directly from the data rather than through the parameters themselves. 

\noindent {\bf Borrowing of information:}
Observe that the standard approach to sharing information across the parameters would be to consider a full-blown hierarchical Bayesian model. To illustrate this idea, consider the following simple setup as an example.
Assume we have $k$ groups $\{y_{ij}, i=1, \ldots, n_j, j=1, \ldots, k\}$ and $\bar{y}_j$ denotes the mean of the $j$-th group. 
%Then a simple hierarchical model for estimating the true group means is 
% \begin{eqnarray*}
% \bar{y}_j \mid \theta_j \sim \mathcal{N}(\theta_j, \sigma_j^2/n_j), \quad 
% \theta_j \sim N(\mu, \tau), \quad 
%  (\mu, \tau) \sim \pi,
% \end{eqnarray*}
% which results in a posterior mean of $\theta_j$ conditional on $\mu, \tau$ being 
% \begin{eqnarray*}
% E(\theta_j \mid \mu, \tau) = \bar{y}_j \lambda_j + (1- \lambda_j) \mu
% \end{eqnarray*}
% where $
% \lambda_j = \frac{n_j /\sigma_j^2}{ n_j /\sigma_j^2+ \mu/ \tau^2}$.
% Clearly, the estimate $\theta_j$ is a convex combination of $\mu$ (which may be estimated by the pooled mean) and the group mean, demonstrating borrowing of information.  
% The challenge in such a hierarchical model is the definition of homogeneous groups, as well as computational complexity for borrowing information across all parameters.  Alternatively, 
Instead of considering a hierarchical model to estimate the true group mean $\theta_j$, consider for every fixed $j \in \{1, \ldots, k\},$
\begin{eqnarray*}
\bar{y}_l \mid \theta_j \sim \mathcal{N}(\theta_j, (\sigma^2/n_l)w(\bar{y}_j, \bar{y}_l)^{-1}), l=1, \ldots, k,
\end{eqnarray*}
for some similarity function $w(x,y)$ which 
takes higher values for $x \approx y$, and lower  values as $x$ moves further away from $y$.
Then, assuming a balanced sample, the weighted MLE of $\theta_j$ is simply 
% \begin{eqnarray*}
% \frac{\sum_{l=1}^k \bar{y}_l \omega(\bar{y}_j, \bar{y}_l)}{\sum_{l=1}^k\omega(\bar{y}_j, \bar{y}_l) }.
% \end{eqnarray*}
\begin{eqnarray}\label{eq:wmle}
\hat{\theta}_j = \frac{\sum_{l=1}^k \bar{y}_l w(\bar{y}_j, \bar{y}_l)}{\sum_{l=1}^k w(\bar{y}_j, \bar{y}_l)}.
\end{eqnarray}
In \eqref{eq:wmle}, the information is still shared among the different groups, but the sharing of information comes directly through the data instead of a common prior. This is because of the nature of the weighted MLE which borrows more information from subjects with similar group means. As a result, the weighted likelihood of the $j$-th and $k$-th true group means $\theta_j$ and $\theta_k$ would be similar if $\bar{y}_j$ and $\bar{y}_k$ are similar. This approach avoids the computational overhead of a hierarchical Bayesian model and forms the basis of our covariate-dependent graphical model where we share information across the model parameters directly through the data via a weighted (pseudo)-likelihood function, rather than the standard hierarchical modeling framework.

Finally, we derive risk bounds for the variational estimator by casting it in a fractional variational framework adopting \cite{yang2020alpha}, using a novel variant of the $\alpha$-R\'{e}nyi divergence. In particular, assuming a careful interplay between the sparsity and smoothness in the conditional regression coefficients, we showed that the W-PL framework achieves optimal variational risk bounds irrespective of whether the underlying distribution is homogeneous or heterogeneous across different covariates. Our theory also shows how W-PL improves on the independent modeling framework in the case of imbalanced samples corresponding to different covariate levels, leveraging its ability to borrow information from the entire data while estimating the graph for a specific covariate level.   In the fractional variational framework, the term $\alpha$ controls the relative trade-off between the model fit and the prior regularization term.  In the current study, the variational estimator corresponds to $\alpha=1$. However, for technical simplicity, we restrict the theoretical analysis to estimators with $\alpha<1$. The results for $\alpha=1$ can be derived under stronger assumptions on prior tails, as discussed in \cite{yang2020alpha}. However, that extension does not alter the main message of the theoretical results, and has been omitted.
% \noindent {\bf Approximate Bayesian inference:}
% Even with independent estimation of graph parameters across samples, the computation time can get high if there are a lot of observations, especially when a lot of variables are involved. We resort to a variational approach that makes the method computationally scalable to  very large datasets.

% \noindent {\bf Ability to predict graph at an arbitrary covariate:} This method allows one to describe an empirical distribution of the graph structure on the support of covariates, thus facilitating graph prediction for test individuals.  Estimating a graph given a specific individual effectively amounts to estimating the graph given a realization of the covariate value. Thus, given a test individual for whom only the covariate information is available, one can predict the underlying graphical structure for the individual using a nearest neighborhood approach, for example. This is particularly beneficial when the information on the $\bX_i$'s are not easily available. For example, protein expression data are often not available for subjects, but gene level covariate information, like RNA Sequence and CNV information, are more widely available.

The rest of the paper is organized as follows.
Section \ref{sec:methods} describes the proposed weighted pseudo-likelihood approach. Section \ref{sec:var} describes the variational algorithm used to approximate the posterior distribution obtained in Section \ref{sec:methods}. Section \ref{sec:vartheory} provides variational risk bounds for the parameter estimates for both discrete and continuous covariates and demonstrates the advantage of our W-PL model over a standard approach that assumes the observations for each covariate level to be independent.  A thorough simulation study is conducted in Section  \ref{sec:sims}. Finally,  Section \ref{sec:real} illustrates the performance of the approach to estimate the dependence structure in protein expression levels in cancer patients using copy number variation values as covariates.

\section{A weighted pseudo-likelihood (W-PL) approach}\label{sec:methods}
Likelihood based approaches provide a sound basis for comparing the plausibility of different graphs given the observations. Unfortunately, likelihood based approaches to modeling graph structures are intractable in general for non-chordal graphs because of an intractable normalizing term. This has sparked a lot of interest in tractable learning of non-chordal graphs in high dimensions. A pseudo-likelihood approach discussed in \cite{besag1975statistical,besag1977efficiency} became a convenient alternative to likelihood approaches for modeling the underlying graph dependence structure.  Over the last few years, the pseudo-likelihood approach has been used widely for learning Markov random fields and neighborhood detection in Markov fields, as in \cite{ji1996consistent,csiszar2006consistent} and others.
\cite{heckerman1995learning,freno2009scalable} discussed a pseudo-likelihood based model class to learn the dependence structure in Bayesian networks. Recently, \cite{pensar2017marginal} used the idea of marginal pseudo-likelihood and proved the consistency of the pseudo-marginal likelihood estimator in learning the dependence structure (neighborhood detection) of the Markov network. The set of {\it neighbors} of a variable $x_j$ are variables such that given these neighbor variables, the conditional distribution of $x_j$ is independent of all other variables. \cite{besag1975statistical} argued the consistency of the maximum pseudo-likelihood estimator as the dimension of the random variable $\bX$ increases, under the assumption that the number of neighbors is small and finitely bounded. \cite{besag1977efficiency} further studies the efficiency of the pseudo-likelihood estimators under Gaussian schemes. 

\begin{comment} \color{blue}Suppose that, the true data generating distribution $p(\bX)$ is a zero-mean multivariate Gaussian with a precision matrix ${\Omega}^*$:
\begin{equation*}
   \bX \sim \mathcal{N}(0,\Omega^{*-1}).
\end{equation*}
\end{comment}
The pseudo-likelihood approach can be described as follows: Suppose there are $n$ individuals, indexed $i=1,2, \ldots, n$, in a study. Let the $i$-th observation in the dataset $\bX$ be denoted as $\bX_i=(x_{i,1},x_{i,2},\ldots, x_{i,p})$, which corresponds to the $i$-th individual. Let $x_{i,-j} \in \mathbb{R}^{p-1}$ denote the vector of the $i$-th observation including all variables except $x_{i,j}$. 
%Let $x_{-j}$ denote the $p-1$-variate random variable including all variables $x_l, l=1, \ldots,p$ except $x_j$. 
This approach tries to model the conditional distribution of each of the $x_j$'s given all other variables, denoted by $\bX_{-j} \in \mathbb{R}^{n \times (p-1)}$. Let the $p-1$ dimensional vector $\beta_{j}$ indicate the regression effect of $\bX_{-j}$ on $x_{j}$. The assumption commonly used here is that the conditional distribution of a variable $x_j$ depends on only a few of the remaining variables referred to as the neighbors of $x_j$, and can be completely specified in terms of a regression function comprising of the neighbors as predictors. Here, we further assume a Gaussian likelihood. Then, the conditional likelihood of $x_j$, denoted by $\mathrm{L}(j)$, can be written as
 \begin{equation}
 \mathrm{L}(j)=p(x_{j}\mid\bX_{-j}, \beta_j) \propto \prod_{i=1}^n \exp{\left\{-({x_{i,j} - x_{i, -j}^{\T}\beta_{j})}^2/2\sigma^2 \right\}}
 \label{eq1}
\end{equation}
with a possibly sparse coefficient vector $\beta_{j}$.  Consequently, the pseudo-likelihood $\mathrm{L}(\mathrm{G})$ for a fixed graph $\mathrm{G}$ can be written as
 \begin{equation}
     \mathrm{L}(\mathrm{G})=\prod_{j=1}^p \mathrm{L}(j) = \prod_{j=1}^p p(x_{j}\mid \bX_{-j}, \beta_{j}).
     \label{eq2}
\end{equation}
If the true data generating distribution $p(\bX)$ is a zero-mean multivariate Gaussian with a precision matrix ${\Omega}^*$, i.e., 
\begin{equation}\label{homogen}
   \bX \sim  \mathcal{N}(0,\Omega^{*-1}),
\end{equation}
then it is well-known that the conditional distribution of $x_{j}\mid\bX_{-j}$ is given by \eqref{eq1} with $\beta_{jk} = - \Omega^*_{jk}/\Omega^*_{jj}$. However,  \eqref{eq2} is the product of conditional distributions and is not a valid probability density in general. However, it serves as an effective computational tool for estimating the precision matrix.

In this paper, we propose a novel adaptation of this approach, and define a weighted version of this conditional likelihood for each individual in the study. We assume that the underlying graph structure is a function of extraneous covariates $\mathrm z$. That is, given a covariate $\bz=(\bz_1,...,\bz_n)'$, our population model assumes the true data generating distribution $p(\bx_i)$ is a zero-mean multivariate Gaussian with a precision matrix ${\Omega}^*(\bz_i)$ for $i=1,...,n$. 
\begin{equation}\label{eq:population}
    \bx_i \mid \bz_i \sim \mathcal{N}(0,{\Omega}^{*-1}(\bz_i)), \quad i=1,...,n, \quad \bX = [\bx_1,...,\bx_n]'.
\end{equation}
Thus, we allow the coefficient vector $\beta_j$'s to be different for different individuals, depending on the extraneous covariates.  We use the notation $\beta_j^l \in \mathbb{R}^{p-1}$ to denote the coefficient vector corresponding to the regression of the variable $x_j$ on the remaining variables for individual $l$. More generally, we use the notation $\beta_j(\mathrm{z})$ to denote the coefficient vector given an arbitrary covariate $\mathrm{z}$. Let $\bz_i$ denote the covariate vector associated with the $i$-th individual in the study, and define $\bz=(\bz_1, \bz_2,\ldots, \bz_n)$. Next, relative to the covariate $\mz$, we assign weights $\mathrm{w}(\mz,\bz_i) = \phi_\tau(\|\mz - \bz_l\|)$ to every individual in the study, where $\phi_\tau$ is the Gaussian density with mean $0$ and variance $\tau^2$. When $\mz=\bz_l$ corresponds to the $l$-th individual in the study, we use the notation $\mathrm{w}_{l}(\bz_i)=\mathrm{w}(\bz_l, \bz_i)$ to denote the weight associated with the $i$-th individual in the study. Next, we provide a brief overview of our approach, and then move on to more details about the steps involved.
\begin{enumerate}
    \item For a fixed individual $l$, we attach weights $\mathrm{w}_{l}(\bz_i), i \in \{1,2,\ldots, n\}$, to every individual in the study relative to the $l$-th individual, and perform a weighted regression of the $j$-th variable on the remaining variables.
    \item The likelihood function of regression parameters $\beta_j^l \in \mathbb{R}^{(p-1)}$ corresponding to the $j$-th variable for the $l$-th individual is given by  $\prod_{i=1}^n \exp{\left\{-{({x_{i,j} - x_{i, -j}^{\T}\beta_{j}^l)}^2\mathrm{w}_{l}(\bz_i)}/{2\sigma^2 }\right\}}$. We place a suitable spike-and-slab prior on the parameters to enforce sparsity on the conditional dependency structure of $x_j$ given $x_{-j}$. 
    
    \item The collection of vectors $\beta_j^l, j \in \{1,2, \ldots, p\}$ together form the parameter of interest $\mathrm{B}^l$ for the $l$-th individual in the study, which is used to learn the undirected conditional dependency structure among $x_1, \ldots, x_p$, for the $l$-th individual. 
    
    \item Rather than a fully Bayesian approach, a variational approximation is made to the posterior to obtain estimates of the coefficients.  
\end{enumerate}
   Note that the likelihood of the parameter $\beta_j$, associated with individual $l$, is independent of the regression parameters associated with other individuals in the study, allowing parallel estimation of the parameters associated with the $n$ different individuals in the study. However, information is being borrowed from every individual for each regression parameter estimation through the associated weights which attach more importance to individuals with covariates similar to the individual $l$ and lower weights to individuals with different covariate values. 

In what follows, we describe the steps involved in the pseudo-likelihood approach for graph estimation given an arbitrary covariate $\mz$ in more detail. First, we introduce some more notations. Throughout, $i,l,l_1$ and $l_2$ have been used as indices for the $n$ individuals in the study (or their corresponding observations), while $j$ and $k$ have been used to index the $p$ variables. 
%Fix an  $l \in \{1, 2, \ldots, n\}$, and suppose we are interested in modeling the underlying graph dependence structure of the $l$-th individual. For each $i \in \{1, 2, \ldots, n\}$, define $\mathrm{w}_{l}(\bz_i) = \phi_\tau(\|\bz_i - \bz_l\|)$, where $\phi_\tau$ is the Gaussian density with mean $0$ and variance $\tau^2$. Here $\tau$ plays the role of a bandwidth parameter.  The sparsity structure in the collection of the vectors $\mathrm{B}^l = \{\beta_{j}^l; j=1, \ldots, p\}$ defines a graph $\mathrm{G}^l$ corresponding to the $l$-th individual. 
We propose the following conditional working model
\begin{eqnarray}
x_{i, j} \mid x_{i, -j}, \bz, \mz \sim  \mathcal{N}(x_{i, -j}^{\T}\beta_{j}(\mz), \sigma^2/\mathrm{w}(\mz, \bz_i)), \quad i=1, \ldots, n. 
\label{eqmodel1}
\end{eqnarray}
This results in a weighted version of the conditional likelihood form shown in \eqref{eq1}. Let $x_j$ denote the $n$ observations on the $j$-th variable. Then, the weighted conditional likelihood of $x_j$ given a covariate value $\mz$, denoted by $\mathrm{L}_j^w(\mz)$, is given by
\begin{equation*}\label{condwt0}
 \mathrm{L}_j^w(\mz)=p^w(x_{j}\mid \bX_{-j}, \beta_j(\mz), \bz, \mz) ~~\propto~~ \prod_{i=1}^n \exp{\left\{-\frac{({x_{i,j} - x_{i, -j}^{\T}\beta_{j}(\mz))}^2\mathrm{w}(\mz, \bz_i)}{2\sigma^2 }\right\}}.
\end{equation*} 
Here, the superscript $w$ is used to indicate that the conditional distribution $p$ involves a weighted likelihood function.
% Let $\mathrm{W}(\mz, \bz)$ denote the diagonal matrix $\mbox{Diag}\{\mathrm{w}(\mz, \bz_1)$, $\mathrm{w}(\mz, \bz_2), \ldots, \mathrm{w}(\mz, \bz_n)\}$. The resultant conditional distribution function $p^w(x_j \mid \bX_{-j}, \beta_j(\mathrm{z}),\bz, \mz)$ has the form
% \begin{eqnarray}
% p^w(x_j \mid \beta_j(\mathrm{z}), \bX_{-j}, \bz, \mz) = \prod_{i=1}^n \frac{\sqrt{\mathrm{w}(\mz, \bz_i)}}{\sqrt{2\pi}\sigma_*}\exp\Bigg \{-\frac{{(\xj - \bX_{-j}\beta_j(\mathrm{z}))}^{\mathrm{T}}\mathrm{W}(\mz, \bz)(\xj - \bX_{-j}\beta_j(\mathrm{z}))}{2\sigma_*^2}\Bigg\}.
% \label{condwt0}
% \end{eqnarray}
% Here, the superscript $w$ is used to indicate that the conditional distribution $p$ involves a weighted likelihood function.
% Note that, the weighted conditional likelihood of $x_j$ given a covariate value $\mz$, denoted by $\mathrm{L}_j^w(\mz)$, is given by
% \begin{equation*}
%  \mathrm{L}_j^w(\mz)=p^w(x_{j}\mid \bX_{-j}, \beta_j(\mz), \bz, \mz) ~~\propto~~ \prod_{i=1}^n \exp{\left\{-\frac{({x_{i,j} - x_{i, -j}^{\T}\beta_{j}(\mz))}^2\mathrm{w}(\mz, \bz_i)}{2\sigma^2 }\right\}}.
%  \label{eq4a}
% \end{equation*}
Thus the weighted pseudo-likelihood for the graph $\mathrm{G}(\mz)$ corresponding to a covariate value $\mz$, denoted by $\mathrm{L}^w(\mathrm{G}(\mz))$ can be written as
\begin{equation*}
     \mathrm{L}^w(\mathrm{G}(\mz))=\prod_{j=1}^p \mathrm{L}_j^w(\mz) = \prod_{j=1}^p p^w(\xj \mid \bX_{-j}, \beta_{j}(\mz),\bz, \mz).
     \label{eq5a}
\end{equation*}

Next, suppose that we are interested in the underlying graph structure of an individual $l$ in the study, that is, $\mz=\bz_l$. Then, the weighted conditional likelihood of $x_j$ for the $l$-th individual in the study, denoted by $\mathrm{L}_l^w(j)$, is given by
\begin{equation*}
 \mathrm{L}_l^w(j)=p^w(x_{j}\mid \bX_{-j}, \beta_j^l, \bz) ~~\propto~~ \prod_{i=1}^n \exp{\left\{-\frac{({x_{i,j} - x_{i, -j}^{\T}\beta_{j}^l)}^2\mathrm{w}_{l}(\bz_i)}{2\sigma^2 }\right\}}.
 \label{eq4}
\end{equation*}
 Also, the weighted pseudo-likelihood for the graph $\mathrm{G}^l$ corresponding to the $l$-th individual, denoted by $\mathrm{L}^w(\mathrm{G}^l)$ can be written as
\begin{equation}
     \mathrm{L}^w(\mathrm{G}^l)=\prod_{j=1}^p \mathrm{L}_l^w(j) = \prod_{j=1}^p p^w(\xj \mid \bX_{-j}, \beta_{j}^l,\bz).
     \label{eq5}
\end{equation}

For the graph $\mathrm{G}^l$, \eqref{eqmodel1} can be expressed in a structural equation form, similar to the form discussed in \cite{han2016estimation,pearl2000models}. Given the covariate matrix $\bz$, let us define a $p\times p$ coefficient matrix $\mbox{A}^l$ with $\mbox{A}_{jj}^l=0, j \in \{1, \ldots,p\}$, and $\mbox{A}_{jk}^l=\beta_{j,k}^l, j,k \in \{1,2,, \ldots, p\}$. For $i\in \{1, \ldots, n\}$, let $\epsilon_i^l$ be $(p-1)$-variate independent random variables such that given $\bz$, $\epsilon_i^l \sim \mathcal{N}(\mu_{\epsilon^l}={\bf 0},\Sigma_i^l)$, where 
% \[
% \Sigma = \begin{bmatrix} 
%     \frac{\sigma^2}{w_{1l}} & 0 & 0 & \dots \\
%     0 & \frac{\sigma^2}{w_{2l}} &0 & \dots \\
%     \vdots & & \ddots  &\\
%     0 &     &   & \frac{\sigma^2}{w_{nl}} 
%     \end{bmatrix}
% \]
$\Sigma_i^l=\sigma^2\mathrm{w}_{l}(\bz_i)^{-1}\mathbb{I}_p$. Now, define the $n\times n$ diagonal covariance matrix $\Sigma_0^l=\mathrm{Diag}( \sigma^2 w_{l}(\bz_1)^{-1},\sigma^2 w_{l}(\bz_2)^{-1}, \ldots,\sigma^2 w_{l}(\bz_n)^{-1})$, and $\epsilon^l={(\epsilon_1^l, \epsilon_2^l, \ldots, \epsilon_n^l)}$. Then, it can be shown that given $\bz$,  $\epsilon^l\sim \mathcal{MN}({\bf 0}, \mathbb{I}, \Sigma_0^l)$, where $\mathcal{MN}(\cdot, \cdot, \cdot)$ is the matrix normal distribution with suitably chosen parameters. Thus, conditioned on the covariate $\bz$, we can express $\bX^{\mathrm{T}}$ as $\bX^{\mathrm{T}}=\mbox{A}^l\bX^{\mathrm{T}} + \epsilon^l$. Then, for $\Xi^l={(\mathbb{I}-\mbox{A}^l)}^{-1}$, we have $\bX^{\mathrm{T}}=\Xi^l\epsilon^l$, and given $\bz \textrm{ and } \Xi^l$, 
$$\bX^{\mathrm{T}} \sim \mathcal{MN}(0, \Xi^l{\Xi^l}^{\mathrm{T}},  \Sigma_0^l).$$
Thus, estimating the graph $\mathrm{G}^l$ is equivalent to estimating the coefficient matrix $\mbox{A}^l$, with $\Sigma_0^l$ being the nuisance parameter, and then setting $\mathrm{G}^l_{ij}=\mathbb{I}{\{\mbox{A}^l_{ij} \neq 0\}}$, similar to the literature on directed acyclic graphs. This graph estimate $\hat{\mathrm{G}}^l$ given the covariates $\bz$ is not a proper undirected graph, and one needs to perform appropriate post-processing to obtain a proper undirected graph.

In this paper, we compare the covariance matrix $\Sigma_0^l=\mathrm{Diag}( \sigma^2w_{l}(\bz_1)^{-1},\sigma^2 w_{l}(\bz_2)^{-1}, \ldots,\sigma^2 w_{l}(\bz_n)^{-1})$ of $\epsilon^l$ in the proposed setup with the covariate-independent setup in traditional DAG literature, where $\Sigma_0 \equiv \sigma^2 \mathbb{I}_n$, for all individuals. The borrowing of information in standard DAG literature is uniform across all observations resulting in a common graph structure for all individuals. However in the proposed setup, the amount of information borrowed from the $i$-th observation is covariate-dependent, via the associated weights $\mathrm{w}_{l}(\bz_i)$, resulting in different graph estimates for different individuals. Recall that $\mathrm{w}_{l}(\bz_i) = \phi_\tau(\|\bz_i - \bz_l\|)$, where $\phi_\tau$ is the Gaussian density with mean $0$ and variance $\tau^2$. So, the amount of borrowing is controlled by the bandwidth parameter $\tau$. As $\tau \rightarrow \infty$, the weights become equal for all the observations $i \in \{1, \ldots, n\}$, and thus the amount of information borrowed becomes uniform across observations. In this situation, conditioned on $\bz$ and $\Xi^l$, \ $\bX^{\mathrm{T}} %\xrightarrow[]{D} \mathcal{MN}(0, \Omega^l \Sigma_0 {\Omega^l}^{T})$ for $l=\{1,2, \ldots, n\}$. 
\xrightarrow[]{D} \mathcal{MN}(0,  \Xi^l {\Xi^l}^{T}, \Sigma_0)$ for $l=\{1,2, \ldots, n\}$. As a result, the graph estimates $\mathrm{G}^l$ would be the same for all individuals in the study, as is assumed in the classic DAG literature. On the other hand, when $\tau \rightarrow 0$, we have  $\mathrm{w}_{l}(\bz_i) \rightarrow c_0\mathbb{I}{\{i=l\}}$ (for some constant $c_0$). In this scenario, ${(\Sigma_0^l)}^{-1}$ reduces to $\sigma^{-2}\mathrm{Diag}(\mathbb{I}\{1=l\},\mathbb{I}\{2=l\}, \ldots, \mathbb{I}\{n=l\})$. When the covariates have a discrete distribution, this results in a separate estimation algorithm where one estimates the underlying graphs corresponding to the different covariate levels separately with no information shared across different covariate levels. For practical experiments, the covariates vary across different observations, and the choice of the bandwidth parameter $\tau$ used for defining the weights becomes important for efficient borrowing of information. Ideally, we want to obtain a bandwidth estimate $\hat{\tau}$ such that $\hat{\tau}$ is larger for individuals when there are relatively few remaining individuals with similar covariates (sparse region in the support of the covariates). Conversely, we want $\hat{\tau}$ to be smaller for individuals when there are several other individuals with similar covariates. Towards this end, we follow the estimate discussed in \cite{dasgupta2020two,abramson1982bandwidth,van2003adaptive} and others. Specifically, if $k(\bz)$ is a kernel density estimate of the covariate $\bz$ with bandwidth parameter $h$, then $\hat{\tau}(\bz_0)=h/\sqrt{k(\bz_0)}$ is the adaptive bandwidth parameter at covariate value $\bz_0$. When the covariates are multi-dimensional, the bandwidth parameter $h$ can be replaced by the harmonic mean of the bandwidths in each direction. If the dimension of the associated covariates is low, one can efficiently estimate the probability density function of the covariates and obtain a variable kernel bandwidth. 
% For our experiments, we borrow the importance sampling technique of \cite{carbonetto2012scalable} to tune the value of $\sigma^2$, and use grid search to optimize other hyperparameters.
In our simulation study, we follow this approach to select bandwidth hyperparameters. To select the prior residual variance $\sigma^2$, the prior slab variance $\sigma^2_\theta$ and the prior inclusion probability $\pi$, we use a hybrid of grid search and model averaging, which is discussed in greater detail in Supplement \ref{ssec:algo}.    

%In this situation, there is no borrowing of observation across observations, and ${L}_l(j)$ in Equation \ref{eq4} reduces to 
%\[
%{L}_l(j) \propto \exp \{ \frac{{(x_{l,j} - x_{l,-j}^{\mathrm{T}} \theta_{j})}^2}{2\sigma^2}\}.
%\]
%Thus in this situation, only the $l$-th observation appears in the conditional likelihood function. 

 Note that, for $l_1 \neq l_2$, our (weighted) likelihood in \eqref{eq5} for the parameter $\mathrm{B}^{l_1}$ is different from the (weighted) likelihood of the parameter $\mathrm{B}^{l_2}$ because of the associated weights, that is, we do not have a {\em single, coherent} probability model consisting of the parameters $\mathrm{B}^l, l \in \{1, \ldots, n\}$. This step is crucial in ensuring that different individuals have potentially different underlying graph structures, depending on the covariate values. Indeed, if we have $\bz_{l_1}=\bz_{l_2}$, then $\mathrm{B}^{l_1}$ and $\mathrm{B}^{l_2}$ do have the same probability model.  

Using a standard hierarchical model approach, we would put a common prior structure on the parameters $\mathrm{B}^l$s to facilitate the borrowing of information across different individuals. However, in our proposed approach, we allow the sharing of information to come directly from the observations, effectively borrowing information while simultaneously allowing independent estimation of each of the parameters $\mathrm{B}^l$. Thus, this approach allows us to define an empirical distribution on the parameters $\mathrm{B}^l$, and hence the underlying graph $\mathrm{G}^l$, given the covariates $\bz$.

Next, we specify the prior distribution for the coefficient parameters corresponding to the regression problem introduced in \eqref{eqmodel1}. Fix an observation $l \in \{1, \ldots, n\}$, and a variable $j \in \{1, \ldots, p\}.$ Note that, a significantly non-zero regression coefficient corresponds to an edge in the underlying graph structure. With this goal in mind, we use a spike-and-slab prior on the parameter $\beta_{j}^l.$ That is, for $k \in \{1, \ldots, p\}$, $\beta_{j,k}^l$ is assumed to come from a zero-mean Gaussian density with a variance component $\sigma^2 \sigma_{\beta}^2$ (``slab" density) with a probability $\pi$, and equals zero (``spike" density) with probability $1-\pi$. Let us define $\gamma_{j,k}^l=\mathbb{I}\{\beta_{j,k}^l \neq 0\}$ which can be treated as Bernoulli random variables with a common probability of success $\pi$. Define the row vector $\gamma_{j}^l= (\gamma_{j,1}^l, \gamma_{j,2}^l, \ldots, \gamma_{j,p}^l)$, and $\Gamma^l=\{\gamma_{j}^l,j=1,2, \ldots, p\}$. Then, we use the following prior distribution for $(\beta_{j}^l,\gamma_{j}^l)$ given by
\[
p_0\left(\beta_{j}^l,\gamma_{j}^l\right)=\prod_{k=1, k \neq j}^{p} \delta_{\{0\}}{\left(\beta_{j,k}^l\right)}^{1-\gamma_{j,k}^l}{\mathcal{N}\left(\beta_{j,k}^l; 0, \sigma^2\sigma_{\beta}^2\right)}^{\gamma_{j,k}^l}\prod_{k=1, k \neq j}^{p} \pi^{\gamma_{j,k}^l}{\left(1-\pi\right)}^{(1-\gamma_{j,k}^l)}.
\]
Using the data model as described in \eqref{eqmodel1} for an individual $l$, we obtain the following posterior distribution for $(\beta_{j}^l,\gamma_{j}^l)$ as
\[
p(\beta_{j}^l,\gamma_{j}^l \mid \bX) \propto \exp\Bigg\{-\frac{1}{2\sigma^2}{\sum_{i=1}^n {\bigg(x_{ij} -\sum_{\substack{k\neq j,k=1}}^p x_{ik}\beta_{j,k}^l\bigg)}^2\mathrm{w}_{l}(\bz_i)}\Bigg\}p_0(\beta_{j}^l,\gamma_{j}^l).
\]

Note that the posterior distributions of $\mathrm{B}^{l_1}$ and $\mathrm{B}^{l_2}$ are independent, but are similar depending on the covariates through the associated weights. This notion allows independent and fast estimation of the parameters $\mathrm{B}^l, l=1,2 ,\ldots, n$, while ensuring that subjects with similar covariates have similar coefficient parameter estimates. This also allows us to effectively gauge the variability of the graph structure across subjects. However, since the posterior distribution does not have a closed-form solution, we would require MCMC samples in order to obtain posterior samples for the parameters. This can be very time consuming when $p$ is large, especially since we have to essentially compute $np$ such distributions. In the following, we develop an efficient parallelized mean-field variational inference to approximate the posterior distribution.

\section{An efficient parallelized block mean-field variational inference}\label{sec:var}
Variational Bayes approximations are deterministic approaches where instead of finding the posterior probability distributions, we aim to find an approximation of them by first introducing a class of approximating distributions and then finding the distribution that best approximates the posterior obtained through some optimizing criterion over the aforesaid class. See, for instance, \cite{jordan1999introduction, wainwright2008graphical, ormerod2010explaining,Blei_2017}, only to name a few. In this section, we adopt the block-mean-field approach proposed by \cite{carbonetto2012scalable} in the context of Bayesian variable selection with spike-and-slab priors in high dimensional regression problems.

%  Variational approximations are deterministic approaches where the problem of computing posterior probabilities is altered into an optimization problem by introducing a class of approximating distributions, then optimizing some criterion to find the distribution within this class that best matches the posterior, refer to \cite{jordan1999introduction, wainwright2008graphical, ormerod2010explaining,Blei_2017}  for example. In this section, we shall adopt the block-mean-field inference by \cite{carbonetto2012scalable} in the context of Bayesian variable selection with spike-and-slab priors. %\cite{logsdon2010variational,carbonetto2012scalable} introduced a variational algorithm for Bayesian variable selection in a regression model. \cite{huang2016variational} introduced a variation of the algorithm in \cite{carbonetto2012scalable} and derived the asymptotic properties of the variational parameters. 

Suppose we have a parameter of interest $\xi$ with an intractable posterior distribution $p(\xi)$, an observed data vector $y$, and the variational tractable family of densities $q(\xi)$. Let $\mathrm{D_{KL}}(\cdot \| \cdot)$ denote the Kullback-Leibler divergence between two density functions. Then the ``best approximating density" over a tractable family of densities $\Gamma$ is a density $q^*(\xi)$  such that
\[
q^*(\xi)=\underset{q \in \Gamma}{\argmin} \, \mathrm{D_{KL}}(q \, \| \, p(\xi\mid y)).
\]
% Then we have, 
% \begin{eqnarray*}
%     \log p(y)= \log p(y) \int q(\xi) d\xi %&=& \int q(\xi) \log \left\{ \frac{p(y,\xi)/q(\xi)}{p(\xi|y)/q(\xi)} \right\} d\xi\\
%   % &=& \int q(\xi) \log \left\{ \frac{p(y,\xi)}{q(\xi)} \right\} d\xi + \int q(\xi) \log \left\{ \frac{q(\xi)}{p(\xi|y)} \right\} d\xi\\
%     &=&\int q(\xi) \log \left\{ \frac{p(y,\xi)}{q(\xi)} \right\} d\xi + D(q\|p(\cdot|y))\\
%     \text{Log(Evidence)}&=& \text{ELBO} + D(q\|p(\cdot|y)).
% \end{eqnarray*}
Since $\mathrm{D_{KL}}(q\|p(\cdot|y))\geq 0$, we have $\log p(y) \geq$ ELBO, where ELBO$=\int q(\xi) \log \left\{ {p(y,\xi)}/{q(\xi)} \right\} d\xi$ is the evidence-lower bound. 
% Define $\underbar{p}(y;q)$ as
% \[
% \underbar{p}(y;q)\equiv \exp \int q(\xi) \log \left\{ \frac{p(y,\xi)}{q(\xi)} \right\} d\xi
% \]
% Then, we can write the ELBO as
% $p(y)\geq \underbar{p}(y;q)$ where equality occurs if and only if $q(\xi)=p(\xi|y)$ almost everywhere. The essence of the density transformation approach is to restrict $q$ to a tractable class of densities such that $\underbar{p}(y;q)$ is firstly, more manageable than $p(\xi |y)$, and secondly, can be easily maximised over that class of densities.

In the current study, the parameter of interest is $\xi=(\beta_{j}^l,\gamma_{j}^l)$. Here, we adopt the block mean-field approach for the variational approximation considered in \cite{carbonetto2012scalable} given by 
\[
q(\beta_{j}^l,\gamma_{j}^l;\phi_j^l)=\prod_{k=1, k \neq j}^{p}q_k(\beta_{j,k}^l,\gamma_{j,k}^l;\phi_{j,k}^l).
\]
In the above expression, $\phi_j^l$'s are free parameters corresponding to the $l$-th individual, and the separate factors $q_k$ have the following form:
\begin{equation*}
q_k(\beta_{j}^l,\gamma_{j}^l;\phi_j^l)=
      \mathcal{N} \left(\beta_{j,k}^l; \mu_{j,k}^l, {(s_{j,k}^l)}^2\right)^{\gamma_{j,k}^l} \delta_0\left(\beta_{j,k}^l\right)^{1-\gamma_{j,k}^l}\left(\alpha_{j,k}^l\right)^{\gamma_{j,k}^l} \left(1-\alpha_{j,k}^l\right)^{1-\gamma_{j,k}^l}
\end{equation*}
where $\phi_{j,k}^l=(\alpha_{j,k}^l,\mu_{j,k}^l,{(s_{j,k}^l)}^2), k=\{1, \ldots, p, k \neq j\}$ are the free parameters, and $\delta_0$ is the  ``spike" density degenerate at zero. Thus, the individual factors $q_k$ are independent spike-and-slab densities. The parameter $\beta_{j,k}^l$ comes from a Gaussian density with mean $\mu_{j,k}^l$ and standard deviation $s_{j,k}^l$ (the ``slab" part) with probability $\alpha_{j,k}^l$, and is zero (the ``spike" part) with probability $1-\alpha_{j,k}^l$. Therefore, we can write the variational density family as 
\[
q(\beta_{j}^l,\gamma_{j}^l)=\prod_{k=1}^{p-1} 
\mathcal{N} \left(\beta_{j,k}^l; \mu_{j,k}^l, {(s_{j,k}^l)}^2\right)^{\gamma_{j,k}^l} \delta_0\left(\beta_{j,k}^l\right)^{1-\gamma_{j,k}^l}\left(\alpha_{j,k}^l\right)^{\gamma_{j,k}^l} \left(1-\alpha_{j,k}^l\right)^{1-\gamma_{j,k}^l},
% \delta_{\{0\}}{(\beta_{j,k}^l)}^{1-\gamma_{j,k}^l}{\mathcal{N}\left(\beta_{j,k}^l; \mu_{j,k}^l, {(s_{j,k}^l)}^2\right)}^{\gamma_{j,k}^l}{\alpha_{j,k}^l}^{\gamma_{j,k}^l}{\left(1-\alpha_{j,k}^l\right)}^{(1-\gamma_{j,k}^l)},
\]
where the ``best approximating density" $q^*(\beta_{j}^l,\gamma_{j}^l)$ can be obtained by optimizing over the free parameters $\phi_{j,k}^l=(\alpha_{j,k}^l,\mu_{j,k}^l,{(s_{j,k}^l)}^2)$ in order to maximize the ELBO over the variational family.

%The density transformation variational approach reduces the problem of computing posterior probabilities to an optimization problem of the free parameters of the fully-factorized distribution $q(\beta_{j}^l,\gamma_{j}^l).$
The coordinate descent updates for the variational parameters can be obtained by taking partial derivatives of the ELBO,  setting them to zero, and solving for $\alpha_{j,k}^l,\mu_{j,k}^l$ and ${(s_{j,k}^l)}^2$. \cite{carbonetto2012scalable} proposed a component-wise algorithm where one iterates between updating $\mu_{j,k}^l$ and $\alpha_{j,k}^l$ for a fixed $j$, and then updating $j \in \{1,2,\ldots, p\}.$ \cite{huang2016variational} proposes a batch-wise updating scheme where one iterates between updating ${(s_{j,k}^l)}^2, j=1,2,\ldots,p$, in a batch, followed by updating $\mu_{j,k}^l,j=1,2,\ldots,p$, in a batch, and subsequently by updating $\alpha_{j,k}^l, j=1,2,\ldots, p$, in a batch. \cite{huang2016variational} argued that the component-wise update scheme in high-dimensional settings might lead to noise accumulation, and can cause the variational estimates to move away from the true parameter value. \cite{huang2016variational} also asserted that the batch-wise updating algorithm achieves frequentist as well as Bayesian consistency even when the dimension $p$ diverges to infinity at an exponential rate as the sample size grows to infinity.
We closely follow the batch-wise updating algorithm discussed in \cite{huang2016variational}, which leads to the following variational parameter updates.  
\begin{eqnarray*}
{(s_{j,k}^l)}^2 &=&\frac{\sigma^2}{(1/\sigma_{\beta}^2 +\sum_{i=1}^n x_{ik}^2\mathrm{w}_{l}(\bz_i))}; \quad \mbox{logit}(\alpha_{j,k}^l)= \mbox{logit}(\pi) + \frac{(\mu_{j,k}^l)^2}{{2s_{j,k}^l}^2} +\log \frac{s_{j,k}^l}{\sigma\sigma_{\beta}};\\
{\mu_{j,k}^l}&=&\frac{{(s_{j,k}^l)}^2}{\sigma^2} \sum_{i=1}^n \Bigg\{ \mathrm{w}_{l}(\bz_i)x_{ik}\bigg(x_{ij} - \sum_{m \neq j,k} x_{im}\mu_{j,m}^l\alpha_{j,m}^l\bigg)\Bigg\}.
\end{eqnarray*}
% Let the ELBO of the variational family approximating the posterior distribution of $(\theta_{j}^l,\gamma_{j}^l)$ corresponding to the $l$-th individual in the study be denoted by $\text{ELBO}_{-j}^l$. For ease of presentation, we drop the $j$ subscript from the variational parameters. Then, dropping terms constant with respect to the variational parameters, we have
% \begin{eqnarray*}
% \text{ELBO}_{-j}^l(\mu,\sigma,\alpha) &=&-\sum_{i=1}^n \frac{\mathrm{w}_{l}(\bz_i)}{2\sigma^2}{\left(x_{ij} - \sum_{\substack{k=1\\ k \neq j}}^p x_{ik}\mu_{k}^l\alpha_{k}^l\right)}^2 -\sum_{i=1}^n \frac{\mathrm{w}_{l}(\bz_i)}{2\sigma^2} \sum_{\substack{k=1\\ k \neq j}}^p x_{ik}^2\left[\alpha_{k}^l({\mu_k^l}^2 + {s_k^l}^2) -{\alpha_k^l}^2{\mu_k^l}^2\right] \\
% &&+ \sum_{\substack{k=1\\ k \neq j}}^p \frac{\alpha_{k}^l}{2}\left[(1 + \log {s_k^l}^2) + \log 2\pi\right] -\sum_{\substack{k=1\\ k \neq j}}^p\left[\alpha_{k}^l\log \alpha_{k}^l + (1-\alpha_{k}^l)\log (1-\alpha_{k}^l)\right]\\
% &&-\sum_{\substack{k=1\\ k \neq j}}^p \alpha_{k}^l\left[(\frac{{\mu_k^l}^2 + {s_k^l}^2}{2\sigma^2\sigma_{\theta}^2})  + \frac{1}{2}\log 2\pi + \frac{1}{2}\log \sigma^2 \sigma_{\theta}^2\right].
% \end{eqnarray*} 
%The variational updates turn out to be as below, following the batch-wise update scheme of \cite{huang2016variational}:

Note that, the two-step approach proposed in this paper might not result in a proper undirected graph as the posterior inclusion probability estimates $\hat{\alpha}_{j,k}^l$ and $\hat{\alpha}_{k,j}^l$ might not be the same. Hence, we perform post-processing steps in order to obtain a bonafide undirected graph estimate in practice. For our purposes, we set $\tilde{\alpha}_{j,k}^l=\tilde{\alpha}_{k,j}^l=(\hat{\alpha}_{j,k}^l+\hat{\alpha}_{k,j}^l)/2$ in order to symmetrize the inclusion probabilities and obtain a proper undirected graph estimate.

\section{Variational risk bounds}\label{sec:vartheory}
In this section, we derive risk bounds for our variational estimates and demonstrate the efficiency of the weighted pseudo-likelihood model over a model that treats the covariate levels independently. To that end, we start by defining a few notations. First, it is well-known that the conditional distribution of all variables $x_j, j\in \{1, \ldots, p\}$,  given the remaining variables $x_{-j}$, is
\begin{equation}
{p}(x_{i,j} \mid \bX_{i,-j}, \bz_i) \sim \mathcal{N}\left(- \sum_{k \neq j} \frac{{\Omega}_{kj}^*(\bz_i)}{{\Omega}_{jj}^*(\bz_i)}x_{i,k}, \frac{1}{{\Omega}_{jj}^*(\bz_i)}\right), i \in \{1, \ldots, n\}.
\end{equation}
 %Let the $i$-th observation in the dataset, denoted as $\bX_i={(x_{i,1},x_{i,2},\ldots, x_{i,p})}^{\mathrm{T}}$, corresponds to the $i$-th individual on the $p$ variables $x_1, \ldots, x_p$. Suppose $\bX$ represents the matrix of $n$ observations ${\{\bX_1; \ldots; \bX_n\}}^{\mathrm{T}}$ and $\bz$ is the set of all covariates $\{\bz_1, \ldots, \bz_n\}$. Let $x_{i,j}$ be the $i$-th observation on the $j$-th variable and $x_j$ denote the $n \times 1$ observation vector on the $j$-th variable. Let $x_{i,-j}$ denote the vector of the $i$-th observation including all variables except $x_{i,j}$ and let $\bX_{-j}$ denote the $n \times (p-1)$ observation matrix without the $j$-th variable.
%Specifically, we assume the usual {\it homogeneous} underlying graph condition to begin with, that is, all the observations have the same underlying dependence structure.
% Under this assumption, the likelihood of the $p$ dimensional observations $\bX_1, \ldots, \bX_n$ is given by 
% % \begin{eqnarray*}
% % \mathcal{L}(\Theta \mid \bX, \bz) = \prod_{l=1}^n \mathcal{L}(\Theta \mid \bX, \bz)
% % \end{eqnarray*}
% % where 
% \begin{eqnarray*}
% \mathcal{L}(\Theta \mid \bX) = \prod_{i=1}^n \prod_{j=1}^p p(x_{i,j} \mid \bX_{i,-j}, \theta_{j}, \bz) 
% \end{eqnarray*}
 Let $\Gamma^*(\mathrm{z})$ represent the $p \times (p-1)$ matrix of parameters controlling the latent indicator variables corresponding to the sparsity structure of ${\Omega}^*(\mathrm{z})$. $\gamma^*_j(\mathrm{z})$, the $j$-th row of $\Gamma^*(\mathrm{z})$, is the indicator variable corresponding to the $j$-th variable.  Let $\mathrm{B}^*(\mathrm{z})$ denote the true coefficient parameter matrix given covariate $\mathrm{z}$, with $j$-th row $\beta_{j,k}^*(\mathrm{z})= {{\Omega}_{kj}^*(\mathrm{z})}/{{\Omega}_{jj}^*(\mathrm{z})}, k\in \{ 1, \ldots, p\}.$ Here $\gamma_j^*(\mathrm{z})$ is the indicator variable associated with the truth $\beta_j^*(\mathrm{z})$. 
 %Let $\Gamma^*(\mathrm{z})$ be the $p \times (p-1)$ matrix denoting the correct sparsity structure of the underlying model, i.e., $\Gamma_{jk}(\mathrm{z})=\mathbb{I}\{\beta_{jk}(\mathrm{z})\neq 0\}$. 
 For simplicity of presentation, we assume that the true variance parameter $\sigma_*^2$ is correctly specified in the model. 
  Let $\Theta^l(\bz)=(\mathrm{B}^l(\bz), \Gamma^l(\bz))$, where $\mathrm{B}^l(\bz)$ represents the $p \times (p-1)$ coefficient matrix with $\beta^l_j(\bz)$ as the $j$-th row corresponding to the $j$-th variable as a function of covariates $\bz$.  Denote by $\theta^l_j(\bz)=(\beta^l_j(\bz),\gamma^l_j(\bz))$ the parameter associated with the $j$-th variable given the covariates $\bz$. 
 Let $p_{\theta^l_j}=p_{\gamma^l_j}p_{\beta^l_j \mid \gamma^l_j}$ denote the spike-and-slab prior distribution of $\theta^l_j$ used in the analysis, $\Gamma$ denote the variational family of distributions $q(\theta^l_j)$ for the parameter $\theta^l_j$ and $\Lambda$ denote the parameter space for the coefficient parameter $\mathrm{B}$, where $\Lambda^l_j$ denotes the parameter space for $\beta^l_j$ for an individual $l \in \{1,...,n\}$. 
 %Give the covariates for which the parameter $\Theta^l(\bz)$ will be estimated. 
 Then, $\theta_j^l(\bz)$ denotes the parameter associated with the conditional distribution of the $j$-th variable given the other variables  for the $l$-th individual. For the estimation of the parameters associated with the $l$-th individual, we assign weights $\mathrm{W}_l=\mbox{Diag}\{w_{l}(\bz_1), w_{l}(\bz_2), \ldots, w_{l}(\bz_n)\}$ to the likelihood contribution of the $n$ individuals under study, depending on the covariates associated with the individuals. In what follows, the results are derived for a fixed individual, and is valid for each individual graph parameter $\Theta^l(\bz)$ in the study.
 
 %We use the mean-field variational family, 
%\begin{eqnarray*}
%q_{\theta_j}(\theta_j)= q_{\beta_j}(\beta_j)\prod_{k=1, k \neq j}^{p}q_{\gamma_{jk}}(\gamma_{jk}).\\
%\end{eqnarray*}
%  Notice that the indicator variables are associated with the parameter $\mathrm{B}(\mathrm{z})$, and not with the data.

%Next, we introduce some more notations.
%Note that, we are considering the case where extraneous covariates $\bz_i$ are associated with the $i$-th observation. Then, For each of the $n$ individuals, we model a graph separately.

We derive risk bounds for the weighted pseudo-likelihood model by casting it as a misspecified model, as in \cite{kleijn2006misspecification}. Following \eqref{condwt0}, define the misspecified conditional distribution function $p^w(x_j \mid \bX_{-j}, \theta^l_j(\bz),\bz)$ as
\begin{eqnarray}
p^w(x_j \mid \theta^l_j(\bz), \bX_{-j}, \bz) = \left(\prod_{i=1}^n \frac{\sqrt{\mathrm{w}_{l}(\bz_i)}}{\sqrt{2\pi}\sigma_*}\right)\exp \Bigg \{ -\frac{{(\xj - \bX_{-j}\beta^l_j(\bz))}^{\mathrm{T}}\mathrm{W}_l(\xj - \bX_{-j}\beta^l_j(\bz))}{2\sigma_*^2}\Bigg \}.
\label{condwt}
\end{eqnarray}

%Consider the conditional distribution of the $j$-th variable $x_j$ on the remaining variables. Let $\mathrm{W}$ be the diagonal matrix with $\mathrm{W}=\mathrm{Diag}(w
%_i)$. Similarly, let $\mathrm{W}_{\bz_l}=\mathrm{Diag}(w_{\bz_l})$ and $\mathrm{W}_{\bz_l^c}=\mathrm{Diag}(w_{\bz_l^c})$.  
Let $p(x_j \mid \bX_{-j}, \theta_j^*(\bz_l))$ denote the true well-specified conditional distribution with respect to the true parameter $\theta_j^*(\bz_l)$.
Misspecified models as treated in \cite{kleijn2006misspecification} gives rise to Kullback-Leibler balls centered around $\tilde{\theta}^l_j(\bz)$ where
\begin{eqnarray}
\tilde{\theta}^l_j(\bz)=\underset{\theta_j(\bz) \in \mathbb{R}^{(p-1)}\times {\{0,1\}}^{(p-1)}}{\argmin}{\int p(x_j \mid \bX_{-j}, \theta_j^*(\bz_l)) \log \frac{p(\xj \mid \bX_{-j}, \theta_j^*(\bz_l))}{p^w(\xj \mid \theta^l_j(\bz), \bX_{-j},\bz)} d x_j}.
\label{modelproj}
\end{eqnarray}
For $\tilde{\theta}^l_j(\bz)$ as defined in \eqref{modelproj} and the true parameter value $\theta_j^*(\bz_l)$, and $\alpha \in (0,1)$, we measure closeness between $\tilde{\theta}^l_j(\bz)$ and a value $\theta_j^l(\bz)$ using the divergence
\begin{eqnarray*}
D_{\theta_j^*(\bz_l),\alpha}(\theta_j^l(\bz), \tilde{\theta}^l_j(\bz) \mid \bX_{-j},\bz)=\frac{1}{\alpha -1} \log \int \bigg\{\frac{p^w(\xj \mid \bX_{-j}, \tilde{\theta}^l_j(\bz),\bz)}{p^w(\xj \mid \bX_{-j}, {\theta}^l_j(\bz),\bz)}\bigg\}^\alpha p(x_j \mid \bX_{-j}, \theta_j^*(\bz_l))d x_j.
\end{eqnarray*}
Here, $D_{\theta_j^*(\bz_l),\alpha}(\theta_j^l(\bz), \tilde{\theta}^l_j(\bz) \mid \bX_{-j},\bz)$ is the $\alpha$-R\'{e}nyi divergence measure between the Kullback-Leibler minimizer $\tilde{\theta}^l_j(\bz)$ and a candidate parameter value $\theta_j^l(\bz)$ with respect to the true underlying distribution, conditioned on $\bX_{-j}$ and covariates $\bz$.  
\begin{comment}
In what follows, we derive risk bounds for a variant of the $\alpha$- R\'{e}nyi divergence measure, for $\alpha \in (0,1)$, in the presence of extraneous covariates. Our current method is a special case with $\alpha=1$.
\end{comment}
 We define a weighted version of our divergence measure between $\tilde{\Theta}^l(\bz)$ and $\Theta^l(\bz)$ as
\begin{eqnarray}
\label{newdivwt}
d_{\Theta^*(\bz), \alpha}(\Theta^l(\bz), \tilde{\Theta}^l(\bz)) &=& \max_{j} d_{\theta_j^*(\bz_l),\alpha} (\theta^l_j(\bz), \tilde{\theta}^l_j(\bz)), \\
d_{\theta_j^*(\bz_l),\alpha} (\theta^l_j(\bz), \tilde{\theta}^l_j(\bz)) &=& -\frac{1}{1-\alpha}\log \E_{-j} \exp \left \{ -(1- \alpha)D_{\theta_j^*(\bz_l),\alpha}(\theta_j^l(\bz), \tilde{\theta}^l_j(\bz)| \bX_{-j},\bz)\right \}.
\end{eqnarray}
\begin{comment}
Note that $D_{\theta_j^*(\bz_l),\alpha}(\theta_j(\bz_l), \tilde{\theta}^l_j(\bz) \mid \bX_{-j},\bz)$ is a valid divergence.
\end{comment}
Since 
$$
0 < \exp \left \{ -(1- \alpha)D_{\theta_j^*(\bz_l),\alpha}(\theta_j^l(\bz), \tilde{\theta}^l_j(\bz) \mid \bX_{-j},\bz)\right \} \leq 1,
$$ with the right equality holds if and only if $\theta_j^l(\bz)=\tilde{\theta}^l_j(\bz)$, $d_{\theta_j^*(\bz_l),\alpha}(\theta_j^l(\bz), \tilde{\theta}^l_j(\bz))$ is a valid divergence measure. Let $\mathrm{D_{KL}}(q_{\theta} \, \| \, p_{\theta})$ denote the Kullback-Leibler divergence. Let $\hat{q}_{\theta_j}(\theta_j^l(\bz))$ be the $\alpha$-variational estimate associated with the fractional posterior distribution of $\theta_j^l(\bz)$ as in \cite{yang2020alpha}. 
\begin{equation}
    \hat{q}_{\theta_j}(\theta_j^l(\bz))=\underset{q_{\beta_j,\gamma_j}=\prod_{k=1}^p q_{\beta_{jk},\gamma_{jk}}}{\argmin} \Bigg \{-\int_{\Lambda_j} \sum_{\delta \in {\{0,1\}}^{p-1}}  R d\theta_j^l(\bz) + \alpha^{-1} \mathrm{D_{KL}}({q}_{\theta_j^l(\bz)} \| p_{\theta_j})\Bigg \},
    \label{varapprox}
\end{equation}
where $R=\log \frac{p^w(\xj \mid \beta^l_j(\bz),\gamma^l_j(\bz),\bX_{-j},\bz)}{p^w(\xj \mid \beta_j^*(\bz_l),\gamma_j^*(\bz_l),\bX_{-j},\bz)}{q}_{\theta_j}(\theta^l_j(\bz))$. Let ${\| \mbox{A}\|}_2=\sqrt{\tau_{\max}(\mbox{A}^{\mathrm{T}}\mbox{A})}$ denote the operator norm of a matrix $\mbox{A}$, where $\tau_{\max}(\mbox{A})$ is the maximum eigenvalue of $\mbox{A}$. Let $\|a\|_\infty=\max_i a_i, \|a\|_2=\sqrt{a^\mathrm{T}a}, \|a\|_0=\sum_i \ind\{{a_i\ne 0\}}$ be the corresponding $\ell_\infty,\ell_2$ and $\ell_0$ norms of a vector $a$. For any $\epsilon \in (0,1)$, let 
\begin{eqnarray}
\underaccent{\bar}{r}(n,\epsilon)=\frac{\alpha \epsilon^2}{(1-\alpha)}    + \frac{s^* \log p}{n(1-\alpha)}.
\label{riskeps}
\end{eqnarray}
Although the methodology in \S \ref{sec:var} corresponds to $\alpha =1$, we present the risk bounds for $\alpha \in (0, 1)$ as it greatly simplifies the technicalities without deviating from the main idea.  
We now adapt \cite{yang2020alpha} to develop risk bounds for this variational estimate in three situations. First, we examine the case in which the true covariates are continuously distributed, and then we discuss the case in which they are discrete.  Finally, we consider the situation where the underlying true distribution is independent of the available covariates.  We list our assumptions of theoretical analysis in  Supplement~\ref{ssec:conditions}. All proofs are deferred to  Supplement \ref{ssec:proofs} with auxiliary results in  Supplement \ref{ssec:aux}.

     \subsection{Continuous covariate-dependent model} In this subsection, we consider the case where the covariates $\mathrm{z}$ are drawn from a density which is absolutely continuous with respect to the Lebesgue measure. For simplicity, we consider $\mathrm{z}$ as a scalar. Define $\beta_j^*(\mathrm{z})$ as the coefficient corresponding to the $j$-th variable as a function of the covariate value $\mathrm{z}$. Define $\dot{\beta}_{j}^{*}(\mathrm{z})$ and $\ddot{\beta}_{j}^{*}(\mathrm{z})$ as the first and second order point-wise derivatives respectively of  $\beta_{j}^{*}(\mathrm{z})$ as a function of $\mathrm{z}$. Similarly, define $\Sigma(\mathrm{z})$ as the covariance matrix as a function of $\mathrm{z}$, with $\dot{\Sigma}(\mathrm{z})$, $\ddot{\Sigma}(\mathrm{z})$ as the point-wise first and second order derivatives with respect to $\mathrm{z}$.

     We consider the predefined misspecified weighted pseudo-likelihood with the parameter space: $\|\beta_j^{l}(\bz)\|_0 \leq C_0 s_j^*$, $l=1,...,n,j=1,...,p$ for constant $C_0 \geq 1$.
         Then  for a subject $l$, the KL divergence between the truth and weighted pseudo-likelihood is
         \begin{equation}\label{eq:KL}
         \int p(\bX \mid  \Theta^{*}(\bz_1),...,\Theta^{*}(\bz_n))  \log  \frac{p(\bX \mid  \Theta^{*}(\bz_1),...,\Theta^{*}(\bz_n)  ) }{\prod_{j=1}^p  p^{w_l}(x_{j} \mid \bX_{-j}, \Theta_{-j}^l(\bz) )}d\bX, 
         \end{equation} 
      where the true model $ p(\bX \mid  \Theta^{*}(\bz_1),...,\Theta^{*}(\bz_n))$ is induced from equation~\eqref{eq:population} and $\Theta_{-j}^l(\bz) $ is the parameter of interest  given the constraint that $\|\beta_j^{l}(\bz)\|_0 \leq C_0 s_j^*$ with $C_0 \geq 1$.  We choose   $w_{l}(\bz_k) = {c_l}  K  \left((\bz_k-\bz_l)/\tau \right) /{\tau}$, where $c_l$ is a subject-specific constant.	Then the following lemma characterizes the property of the KL minimizer.

          \begin{lemma}\label{lem2}
          	Under  {\bf Assumptions K, T1- T4} in Supplement~\ref{ssec:conditions}, let $\tilde \beta^l_j(\bz)$ be the minimizer of the KL divergence~\eqref{eq:KL} under the constraint $\|\beta^l_j(\bz)\|_0 \leq C_0 s_j^*$ for constant $C_0\geq 1$. If $\tau \rightarrow 0$ and $n\tau \rightarrow \infty$, then we have for $j=1,..,p$ and $l=1,...,n$
          	\begin{equation*}
         \E_{\bz}\| \tilde \beta_j^{l}(\bz) - \beta^{*}_{j}(\bz_l) \|^2 = O \left(s_j^*\tau^4+\frac{s_j^*}{n\tau}\right).
          	\end{equation*}
          \end{lemma}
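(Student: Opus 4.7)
The strategy is to recognize $\tilde\beta^l_j(\bz)$ as the solution of a local kernel-weighted regression problem and then perform a standard bias--variance decomposition, using the sparsity budget $C_0 s_j^*$ to keep the $\ell_2$ error concentrated on $O(s_j^*)$ coordinates.

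First, taking expectation over $\bX_{-j}$ in the KL divergence~\eqref{eq:KL} and dropping terms that do not depend on $\beta^l_j(\bz)$, the minimization reduces to
\[
\min_{\|\beta\|_0 \leq C_0 s_j^*}\; F(\beta), \quad F(\beta) := \sum_{i=1}^n w_l(\bz_i)\,(\beta - \beta^*_j(\bz_i))^{\T}\Sigma_{-j,-j}(\bz_i)(\beta - \beta^*_j(\bz_i)).
\]
The unconstrained minimizer has the Nadaraya--Watson form $\tilde\beta^{\mathrm{unc}} = \bigl[\sum_i w_l(\bz_i)\Sigma(\bz_i)\bigr]^{-1}\sum_i w_l(\bz_i)\Sigma(\bz_i)\beta^*_j(\bz_i)$, and since $\beta^*_j(\bz_l)$ is itself feasible with $\|\beta^*_j(\bz_l)\|_0 = s_j^*$, the constrained minimizer satisfies the oracle inequality $F(\tilde\beta^l_j(\bz)) \leq F(\beta^*_j(\bz_l))$.

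Second, I would Taylor expand the maps $\bz \mapsto \beta^*_j(\bz)$ and $\bz \mapsto \Sigma(\bz)$ around $\bz_l$ using the smoothness from Assumptions T1--T4, together with the covariate-density expansion $p(\bz_l+\tau v) = p(\bz_l)+O(\tau)$. Symmetry of $K$ (Assumption K) forces the odd moment $\int v\, K(v)\,dv$ to vanish, so the $O(\tau)$ contribution cancels and the mean of the unconstrained minimizer satisfies $\|\E_{\bz}\tilde\beta^{\mathrm{unc}} - \beta^*_j(\bz_l)\|_\infty = O(\tau^2)$. A second-moment calculation on the kernel-weighted sums, viewed as sums of i.i.d.\ summands with effective sample size $n\tau$, shows that each coordinate of $\tilde\beta^{\mathrm{unc}}$ has variance $O(1/(n\tau))$.

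Finally, because $\tilde\beta^l_j(\bz)$ has at most $C_0 s_j^*$ non-zeros and $\beta^*_j(\bz_l)$ has at most $s_j^*$ non-zeros, the error $\tilde\beta^l_j(\bz) - \beta^*_j(\bz_l)$ is supported on at most $(C_0+1)s_j^*$ coordinates. Coupled with the $\ell_\infty$ bias bound, the per-coordinate variance bound, and the assumed lower bound on the minimum eigenvalue of $\Sigma(\bz)$ (which controls the matrix inverse on the relevant support), this yields
\[
\E_{\bz}\|\tilde\beta^l_j(\bz) - \beta^*_j(\bz_l)\|_2^2 = O\!\left(s_j^*\tau^4 + \tfrac{s_j^*}{n\tau}\right).
\]
The main obstacle I foresee is the interplay between the matrix inversion $[\sum_i w_l(\bz_i)\Sigma(\bz_i)]^{-1}$ and sparsity: the Taylor-bias residual, though sparse in origin, could a priori spread across all $p$ coordinates after inversion. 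This is circumvented by the sparsity constraint in the parameter space and by the restricted-eigenvalue control on $\Sigma$, which together confine the entire analysis to a sparse submatrix of dimension at most $(C_0+1)s_j^*$.
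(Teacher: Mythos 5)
Your proposal is correct and follows essentially the same route as the paper's proof: reduce the KL minimization to the kernel-weighted quadratic form in $\beta$, apply the basic (oracle) inequality using feasibility of $\beta_j^*(\bz_l)$, Taylor expand $\beta_j^*(\cdot)$, $\Sigma(\cdot)$ and the covariate density so that $\int u K(u)\,du=0$ kills the first-order term and leaves an $O(\tau^2)$ bias, compute an $O(1/(n\tau))$ variance from the effective sample size, and use sparsity to pick up the $s_j^*$ factor. The only cosmetic difference is that you convert an $\ell_\infty$ bias bound to $\ell_2$ via support counting on $(C_0+1)s_j^*$ coordinates, whereas the paper bounds the $\ell_2$ norm of the bias vector directly using the assumed $s_j^*$-sparsity of $\dot\beta_j^*$ and $\ddot\beta_j^*$ together with the bounded operator norm of $\Sigma^*_{-j,-j}$.
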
    
                Therefore, by balancing the bias $O(\tau^4)$ and variance $O(1/(n\tau))$, one can achieve an MSE of order $n^{-4/5}s_j^*$ corresponding to the optimal bandwidth $\tau=n^{-1/5}$. It is important to note that the optimal bandwidth has the same optimal rate as kernel density estimation so that we can borrow the tuning strategy from these problems in practice. There could be multiple KL minimizers due to the non-convexity of the $\ell_0$ constraint, however, all the KL minimizers have the same convergence behavior depicted in Lemma~\ref{lem2}. The following theorem characterizes the convergence rate of the obtained estimator towards the KL minimizers.

  \begin{theorem}\label{cor3}
 	Under {\bf Assumptions T1-T5} and {\bf P, K} in Supplement~\ref{ssec:conditions},  let $\hat{q}^l(\theta^l_j(\bz))$ be the variational estimate of the $j$-th graph coefficients for subject $l$. Suppose that $\tau =n^{-1/5}$ and $\|\tilde \beta_j^{l}(\bz)-\beta_j^{*}(\bz_l)\| \leq c_1 n^{-4/5}s_j^*$ for $\tilde \beta_j^{l}(\bz)$ specified in Lemma~\ref{lem2} for $l=1,...,n$ and $j=1,...,p$.  Then with probability at least $1-c_2n\exp(-c_3 n)-c_4 n/p^{c_0}-e^{- (c_5s^*-1) \log(np)}$, we have
 	\begin{equation*}
 \max_{l=1,...,n}	\max_{j=1,...,p}\int \frac{1}{n}d_{\alpha}(\theta_j^l(\bz), \tilde{\theta}^l_{j}(\bz))\hat{q}_{\theta_j^l(\bz)}(\theta_j^l(\bz)) d\theta_j^l(\bz) \leq C\frac{1+\alpha}{1-\alpha} \left( \frac{s^* \log(np)}{n} + \frac{s^*}{n^{\frac{3}{5}}} \right) , 
 	\end{equation*}
 	for positive constants $c_0,c_1,c_2,c_3,c_4,c_5,C>0$.
 \end{theorem}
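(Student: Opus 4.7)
The plan is to adapt the fractional variational Bayes risk machinery of \cite{yang2020alpha} to our misspecified weighted pseudo-likelihood setup, establish the bound for a single pair $(l,j)$, and finish with a union bound. For a fixed $(l,j)$, the optimality of $\hat q_{\theta_j^l(\bz)}$ in \eqref{varapprox} gives the variational evidence inequality: for every competitor $q$ in the mean-field spike-and-slab family,
\begin{equation*}
-\alpha\int \sum_{i=1}^n \log\frac{p^w(x_{i,j}\mid \theta_j^l(\bz),\bX_{i,-j},\bz)}{p^w(x_{i,j}\mid \beta_j^*(\bz_l),\bX_{i,-j},\bz)}\,\hat q\,d\theta_j^l + \mathrm{KL}(\hat q\|p_{\theta_j}) \leq \bigl(\text{same with } q\bigr).
\end{equation*}
Following the PAC-Bayes route of \cite{yang2020alpha}---exponentiating the log-likelihood ratio, applying Markov's inequality to its data-dependent exponential moment, and invoking the Donsker--Varadhan formula---this converts, with high probability, into
$$
\int d_\alpha\bigl(\theta_j^l(\bz),\tilde\theta_j^l(\bz)\bigr)\hat q \leq \tfrac{\alpha}{1-\alpha}\int \Bigl[-\sum_i \log\text{ratio}\Bigr]\,q + \tfrac{1}{1-\alpha}\mathrm{KL}(q\|p_{\theta_j}).
$$
The tail event where the exponential-moment bound fails, summed over $i=1,\ldots,n$, contributes the $c_2 n\exp(-c_3 n)$ slack.

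Next, I would construct an oracle competitor $q^\dagger$ supported on the $s^*$-sparse pattern of $\tilde\beta_j^l(\bz)$, with a Gaussian slab of width of order $\sigma_*/\sqrt{n}$ centered at $\tilde\beta_j^l(\bz)$ and spike-indicator probabilities $\alpha^\dagger_{j,k}$ pinned near $0$ or $1$ according to whether $k$ is a neighbor of $j$. A routine spike-and-slab calculation against the prior $p_{\theta_j}$ then gives $\mathrm{KL}(q^\dagger\|p_{\theta_j}) \lesssim s^*\log(np)$, which after division by $n$ delivers the $s^*\log(np)/n$ rate. The expected log-ratio under $q^\dagger$ reduces, via the explicit Gaussian form~\eqref{condwt} of $p^w$, to a weighted Mahalanobis quadratic in $\beta - \beta_j^*(\bz_l)$ with matrix $\bX_{-j}^{\T}\mathrm{W}_l\bX_{-j}/\sigma_*^2$; bounding this by its expectation under Assumption~T4 and using the hypothesis $\|\tilde\beta_j^l(\bz) - \beta_j^*(\bz_l)\|_2^2 \lesssim n^{-4/5}s^*$ from Lemma~\ref{lem2} together with $\tau = n^{-1/5}$ produces the $s^*/n^{3/5}$ bias rate. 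Tracking the $\alpha$-dependent constants through the PAC-Bayes chain yields the displayed $(1+\alpha)/(1-\alpha)$ prefactor.

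The final ingredient is a restricted-eigenvalue-type lower bound on $d_\alpha(\theta_j^l(\bz),\tilde\theta_j^l(\bz))$ in terms of the empirical weighted quadratic $(\beta-\tilde\beta)^{\T}\bX_{-j}^{\T}\mathrm{W}_l\bX_{-j}(\beta-\tilde\beta)/\sigma_*^2$, valid uniformly over $\|\beta\|_0 \leq C_0 s^*$, so that the PAC-Bayes upper bound can be transferred into control of the divergence under $\hat q$. Assumption~T5 supplies population-level curvature of $\E[w_l(\bz)\Sigma(\bz)]$, and Assumption~K combined with a standard covering-number / sub-Gaussian concentration argument over $s^*$-sparse directions transfers this to the empirical weighted Gram. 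The exceptional events where this transfer fails contribute the $c_4 n/p^{c_0}$ and $e^{-(c_5 s^*-1)\log(np)}$ slacks. A union bound over $(l,j)\in\{1,\ldots,n\}\times\{1,\ldots,p\}$ then produces the stated maximum.

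The main obstacle is uniformity across the $n$ distinct weight matrices $\mathrm{W}_l$: restricted-eigenvalue and design-concentration estimates are typically stated for a single Gram matrix, whereas here they must hold simultaneously for all $l$, despite the local effective sample size being only $n\tau = n^{4/5}$. This is what forces Assumption~K's uniform kernel tail control and is responsible for the extra factor of $n$ in both the $c_2 n\,e^{-c_3 n}$ and $c_4 n/p^{c_0}$ terms. Aligning the exponential-moment scaling with the $\alpha$-Renyi normalization in \eqref{newdivwt}, so that the constant emerges exactly as $(1+\alpha)/(1-\alpha)$, is the second delicate step.
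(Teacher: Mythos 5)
Your proposal follows essentially the same route as the paper: a Donsker--Varadhan/PAC-Bayes inequality for the weighted misspecified likelihood (the paper's Lemma C.1), an oracle competitor concentrated in an $O(\tau\epsilon/\sqrt{s_j^*})$-neighborhood of the KL minimizer whose prior mass gives the $s^*\log(np)/n$ term, a bound on the expected weighted log-likelihood ratio via $\|\mathrm{W}_l^{1/2}\bX_{-j}(\tilde\beta_j^l-\beta_j^*)\|_2^2\lesssim n\tau^{-1}\cdot n^{-4/5}s_j^*$ giving the $s^*n^{-3/5}$ term (Lemma C.2), and a union bound over $(l,j)$. The only cosmetic difference is that in the paper the divergence $d_\alpha$ sits directly on the left of the PAC-Bayes inequality, so no separate restricted-eigenvalue lower bound on $d_\alpha$ is needed; the design-regularity event enters only to upper-bound the empirical weighted quadratic, contributing the $c_4 n/p^{c_0}$ slack exactly as you describe.
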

While the first term in the upper bound of the risk is due to the model selection associated with estimating a sparse precision matrix and cannot be improved, observe that the second term is primarily due to the misspecification error under the true data generating distribution \eqref{eq:population}. Note that this term ($s^* n^{-3/5}$) is in contrast with the convergence rate in Lemma~\ref{lem2}, which is upper bounded by $s^*_j n^{-4/5}$. The extra factor of $n^{1/5}$ is due to the discrepancy in the Euclidean norm and the size of the misspecified Kullback-Leibler neighborhood, which is upper bounded by $\|\mathrm W^{1/2}_l\bX_{-j}(\tilde \beta_j^{l}(\bz)-\beta^{*}_{j}(\bz_l))\|^2_2$. Since, $\|\mathrm W_l\|_2 \leq c \max_x K(x)/\tau \leq c/\tau \leq c n^{1/5}$ and the rate of convergence of  $d_\alpha(\theta_j^l(\bz), \tilde{\theta}^l_{j}(\bz)) $ is associated with the size of the misspecified KL ball, the second term is slowed down by a factor of $n^{1/5}$. We conjecture that this cannot be improved unless one considers the discrete covariate setting as we discuss below.

According to the Theorem \ref{cor3}, graph estimation can be carried out consistently for each observation under proper smoothness assumptions. Note that due to the continuous covariate structure, one cannot perform a valid separate estimation across different covariate values. We also compare our Theorem~\ref{cor3} with Theorem 3.4 of \cite{qiu2016joint} who  considered joint estimation of multiple graphical models. Our result is  non-asymptotic and considers joint risk across all subjects $l=1,...,n$, while \cite{qiu2016joint} obtained risk bound for estimating a single graph.

\subsection{Discrete covariate-dependent underlying graph}

Next, we consider the scenario where there are $K$ covariate levels $\bz_1, \bz_2, \ldots, \bz_K$ corresponding to $K$ different populations with parameters $\Theta^*(\bz_l)$ corresponding to the $l$-th covariate level. Let $n_l, \ l \in \{1, \ldots, K\}$, be the number of sample observations corresponding to the $l$-th group, with $n=\sum_{l=1}^{K} n_l$ being the total number of observations.  Let $\bX_{\bz_l}$ denote the $n_l \times p$ data matrix corresponding to the covariate $\bz_l$.
We assume that the true data generating distribution $p(\bX_{i,\bz_l})$ for the rows of $\bX_{\bz_l}$  is a zero-mean multivariate Gaussian with a sparse precision matrix ${\Omega}^*(\bz_l), l=1,\ldots, K$. In this case, the borrowing of information in the misspecified weighted pseudo-likelihood approach causes the Kullback-Leibler minimizer $\tilde{\theta}^{l}_{j}(\bz)$ as defined in \eqref{modelproj} to be different from $\theta_j^{*}(\bz_l)$, where $\theta_j^{*}(\bz_l)=(\beta_j^{*}(\bz_l),\delta_{j}^*(\bz_l))$ is the $j$-th row of $\Theta^*(\bz_l)$.  

%Let $\bX_{\bz_l^c}$ be the data matrix for all individuals with covariate levels different from $\bz_l$, and $\bX_{\bz_l^c,-j}$ be the subset without the $j$-th variable. Let $x_{\bz_l,j}^{(n_l)}$ denote the observations on the $j$ th variable for the individuals with covariate levels $\bz_l$, and let $x_{i;\bz_l,j}^{(n_l)}$ denote the $i$-th such observation, $i \in \{1, \ldots, n_l\}$.  In this case, the borrowing of information in the misspecified weighted pseudo-likelihood approach causes the Kullback-Leibler minimizer $\tilde{\theta}^{l}_{j}(\bz)$ as defined in \eqref{modelproj} to be different from $\theta_j^{*}(\bz_l)$, where $\theta_j^{*}(\bz_l)=(\beta_j^{*}(\bz_l),\delta_{j}^*(\bz_l))$ is the $j$-th row of $\Theta^*(\bz_l)$.  

\begin{lemma}\label{lem:2r}
If {\bf Assumption W} in Supplement~\ref{ssec:conditions} is satisfied, let $\tilde \beta_j^{l}(\bz)$ be the minimizer of the KL divergence~\eqref{modelproj} under the constrained $\|\beta_j^{l}(\bz)\|_0 \leq C_0 s_j^*$ for constant $C_0\geq 1$. Then we have for $j=1,...,p$ and $l=1,...,n$:
\begin{equation}
 \| \tilde \beta_j^{l}(\bz)- \beta_{j}^{*}(\bz_l)\|^2_2 \leq c  \exp{ (- 2c_{l}^2/\tau^2+ 2\log(n/n_l-1))} s_j^*,
\end{equation}
for some constant $c>0$.
\end{lemma}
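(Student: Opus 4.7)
The plan is to reduce the KL-minimization for $\tilde\beta_j^l(\bz)$ to a sparsity-constrained weighted least-squares problem, compute its unconstrained minimizer in closed form, bound that minimizer's distance to $\beta_j^*(\bz_l)$ using the exponential decay of the kernel weights across covariate levels, and finally transfer the bound to the sparse minimizer via an oracle-inequality argument against the (feasible) oracle point $\beta_j^*(\bz_l)$.

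First I would observe that the KL divergence in \eqref{eq:KL} decouples across $j$ up to terms that do not depend on $\Theta^l(\bz)$, and that because the true model is Gaussian with $x_{i,j}\mid x_{i,-j}\sim\mathcal{N}(x_{i,-j}^{\T}\beta_j^*(\bz_{l(i)}),1/\Omega_{jj}^*(\bz_{l(i)}))$ (where $l(i)$ denotes the discrete level of observation $i$), the $\beta_j^l(\bz)$-dependent part of \eqref{eq:KL} is, up to an additive constant,
\[
F_j(\beta) \;=\; \sum_{l'=1}^{K} W_{l'}\,(\beta-\beta_j^*(\bz_{l'}))^{\T}\Sigma_{-j,-j}(\bz_{l'})(\beta-\beta_j^*(\bz_{l'})),\quad W_{l'}:=\sum_{i:\,l(i)=l'} w_l(\bz_i).
\]
Thus $\tilde\beta_j^l(\bz)=\argmin_{\|\beta\|_0\leq C_0 s_j^*}F_j(\beta)$ is a sparsity-constrained quadratic program with constant Hessian $H:=\sum_{l'}W_{l'}\Sigma_{-j,-j}(\bz_{l'})$.

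Next, I would locate the unconstrained minimizer $\bar\beta = H^{-1}\sum_{l'}W_{l'}\Sigma_{-j,-j}(\bz_{l'})\beta_j^*(\bz_{l'})$ and use $\sum_{l'}W_{l'}\Sigma_{-j,-j}(\bz_{l'})=H$ to obtain the normal-equation identity
\[
\bar\beta - \beta_j^*(\bz_l) \;=\; H^{-1}\sum_{l'\neq l} W_{l'}\,\Sigma_{-j,-j}(\bz_{l'})\bigl(\beta_j^*(\bz_{l'})-\beta_j^*(\bz_l)\bigr).
\]
Under Assumption W — uniform eigenvalue bounds on $\Sigma_{-j,-j}(\bz_{l'})$ (yielding $\lambda_{\min}(H)\gtrsim W_l$ and $\|\Sigma_{-j,-j}(\bz_{l'})\|_2\lesssim 1$) and boundedness of the entries of $\beta_j^*(\bz_{l'})$ (so that the $2s_j^*$-sparse difference $\beta_j^*(\bz_{l'})-\beta_j^*(\bz_l)$ has $\ell_2$-norm $\lesssim\sqrt{s_j^*}$) — this gives $\|\bar\beta-\beta_j^*(\bz_l)\|_2\lesssim \sqrt{s_j^*}\sum_{l'\neq l}W_{l'}/W_l$. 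The Gaussian-type decay of the kernel in Assumption W then forces $W_{l'}/W_l\leq (n_{l'}/n_l)\exp(-c_l^2/\tau^2)$, with $c_l$ encoding the minimum separation of $\bz_l$ from the other covariate levels; summing gives $\sum_{l'\neq l}W_{l'}/W_l\leq (n/n_l-1)\exp(-c_l^2/\tau^2)$, and squaring yields
\[
\|\bar\beta-\beta_j^*(\bz_l)\|_2^2 \;\lesssim\; s_j^*\exp\bigl(-2c_l^2/\tau^2+2\log(n/n_l-1)\bigr).
\]

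Finally, I would transfer this bound to the constrained minimizer. Since $\beta_j^*(\bz_l)$ is $s_j^*$-sparse (hence feasible for $C_0\geq 1$), $F_j(\tilde\beta_j^l(\bz))\leq F_j(\beta_j^*(\bz_l))$, and writing $F_j(\beta)=(\beta-\bar\beta)^{\T}H(\beta-\bar\beta)+F_j(\bar\beta)$ this becomes $(\tilde\beta_j^l(\bz)-\bar\beta)^{\T}H(\tilde\beta_j^l(\bz)-\bar\beta)\leq (\beta_j^*(\bz_l)-\bar\beta)^{\T}H(\beta_j^*(\bz_l)-\bar\beta)$. A restricted-eigenvalue bound on $H$ over $(C_0+K+1)s_j^*$-sparse vectors (with eigenvalue comparable to $W_l$) then yields $\|\tilde\beta_j^l(\bz)-\bar\beta\|_2\lesssim \|\beta_j^*(\bz_l)-\bar\beta\|_2$, and the triangle inequality delivers the claim. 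The main obstacle is this last step: the $\ell_0$ constraint means $\tilde\beta_j^l(\bz)-\bar\beta$ can have support of size up to $(C_0+K)s_j^*$ (because $\bar\beta$ is a weighted mixture of the $\beta_j^*(\bz_{l'})$ with potentially disjoint supports), so one must verify that the restricted eigenvalue of $H$ on this sparsity scale is proportional to $W_l$ rather than deflated by the exponentially smaller $W_{l'}$, $l'\neq l$ — this is precisely the role of the eigenvalue assumptions in Assumption W.
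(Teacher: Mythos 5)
Your proposal is correct and follows essentially the same route as the paper's proof: both reduce the KL objective to the weighted quadratic $\sum_{k}\mathrm{w}_l(\bz_k)(\beta-\beta_j^*(\bz_k))^{\T}\Sigma^*_{-j,-j}(\bz_k)(\beta-\beta_j^*(\bz_k))$, invoke the basic inequality at the feasible oracle $\beta_j^*(\bz_l)$, lower-bound the Hessian's spectrum by a constant times $\sum_k\mathrm{w}_l(\bz_k)\gtrsim n_l/\tau$, and extract the rate from the Gaussian-kernel decay $K((\bz_k-\bz_l)/\tau)\leq e^{-c_l^2/\tau^2}$ for $\bz_k\neq\bz_l$ together with the $2s_j^*$-sparsity of $\beta_j^*(\bz_{l'})-\beta_j^*(\bz_l)$. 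The only substantive difference is your detour through the unconstrained minimizer $\bar\beta$, and the obstacle you flag at the end dissolves: $H=\sum_{l'}W_{l'}\Sigma_{-j,-j}(\bz_{l'})$ is a population-level matrix whose \emph{full} minimum eigenvalue is already bounded below by a constant multiple of $\sum_{l'}W_{l'}\geq W_l$ (Weyl's inequality, as the paper notes), so no restricted-eigenvalue argument is needed and the (incorrect) claim that $\tilde\beta_j^l(\bz)-\bar\beta$ has support of size $(C_0+K)s_j^*$ is harmless because it is never used.
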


Note that when $\tau<\min_l c_l/\sqrt{\log n} $, the convergence rate of the KL minimizer towards the truth is faster than $s_j^*/n$.  {\bf Assumption W} implies that the weighted estimator will finally converge to the separate estimation, so that our method is no worse than separate estimation asymptotically.

 Assume that $s^* \log (np) /n \rightarrow 0$. Then  we have the following theorem where the risk bound is derived for an individual with covariate value $\bz_l$.
\begin{theorem}
For any $\zeta \in (0,1)$ and $\epsilon \in (0,1)$ and $\underaccent{\bar}{r}(n,\epsilon)$ as in \eqref{riskeps}, if {\bf Assumptions T, W, P} in Supplement~\ref{ssec:conditions} are satisfied,  we have  with probability at least $1 - \zeta - c_2p^{-c_1} - p\exp\{-a_2n_l\}$,
\begin{equation*}
  \int \frac{1}{n}d_{\Theta^{*}(\bz_l), \alpha}(\Theta^l(\bz), \tilde{\Theta}^l(\bz))\hat{q}_{\Theta^l(\bz)}(\Theta^l(\bz)) d\Theta^l(\bz) \leq \underaccent{\bar}{r}(n,\epsilon) + \frac{s^*}{n(1-\alpha)} \log \left(\frac{s^*\sqrt{n}}{\epsilon\sqrt{n_l}}\right)  + \frac{\log(p/\zeta)}{n(1-\alpha)},
\end{equation*}
for some positive constants $a_1, a_2, c_1, c_2$ and $D$.
\label{cor2}
\end{theorem}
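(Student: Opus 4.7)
The plan is to cast the result as an application of the general fractional variational Bayes risk bound of \cite{yang2020alpha} to the misspecified weighted pseudo-likelihood defined in \eqref{condwt}. Because W-PL is not a valid probability model, the natural ``pseudo-truth'' is the KL minimizer $\tilde\Theta^l(\bz)$ introduced in \eqref{modelproj}, and the divergence $d_{\Theta^*(\bz_l),\alpha}(\Theta^l(\bz),\tilde\Theta^l(\bz))$ is precisely the object on the left-hand side. Starting from the variational characterization \eqref{varapprox} of $\hat q_{\theta_j^l}$, for any competitor density $q^\dagger$ in the factorized family,
\begin{equation*}
\int \frac{1}{n}d_{\theta_j^{*}(\bz_l),\alpha}(\theta_j^l(\bz),\tilde\theta_j^l(\bz))\,\hat q_{\theta_j^l}(\theta_j^l)\,d\theta_j^l \;\leq\; \frac{1}{n(1-\alpha)}\!\left[\alpha \int r(\theta_j^l)\,q^\dagger(\theta_j^l)\,d\theta_j^l + \mathrm{D_{KL}}(q^\dagger\|p_{\theta_j})\right],
\end{equation*}
where $r$ is the log-likelihood-ratio integrand in \eqref{varapprox}. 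This reduces the problem to (i) a \emph{prior-mass} task (construct a good $q^\dagger$ with controllable $\mathrm{D_{KL}}(q^\dagger\|p_{\theta_j})$), (ii) a \emph{concentration} task (bound the expected log-ratio under $q^\dagger$ on a high-probability event), and (iii) a union bound over $j=1,\dots,p$ to pass from $\theta_j^l$ to $\Theta^l$ via the max in \eqref{newdivwt}.

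For step (i), I would take $q^\dagger$ to be a product spike-and-slab distribution putting all mass on the sparsity pattern $\tilde\gamma_j^l$ of the KL minimizer (which by Lemma~3.1 inherits sparsity $\lesssim s_j^*$), and Gaussian slabs centered at the nonzero components of $\tilde\beta_j^l$ with variance $\epsilon^2 n_l/(n \cdot\lambda_{\min})$, calibrated to the smallest eigenvalue of the weighted Gram matrix $\bX_{-j}^{\T}\mathrm W_l \bX_{-j}/n$. The divergence $\mathrm{D_{KL}}(q^\dagger\|p_{\theta_j})$ then decomposes into a sparsity cost $\lesssim s_j^*\log p$, producing the $s^*\log p/(n(1-\alpha))$ term of $\underaccent{\bar}{r}(n,\epsilon)$, and a slab cost $\lesssim s_j^* \log(s_j^* \sqrt{n}/(\epsilon\sqrt{n_l}))$, which exactly explains the middle term of the stated bound — the factor $\sqrt{n/n_l}$ appearing because $\mathrm W_l$ effectively reduces the Fisher information from $n$ to $n_l$ under \textbf{Assumption W}. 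The center cost adds an $\alpha\epsilon^2/(1-\alpha)$ contribution matching $\underaccent{\bar}{r}(n,\epsilon)$, after invoking Lemma~3.1 to bound $\|\tilde\beta_j^l-\beta_j^*(\bz_l)\|_2$ by a negligible amount relative to $\epsilon$.

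For step (ii), the expected log-ratio under $q^\dagger$ decomposes into a bias term (handled by Lemma~3.1) and a quadratic form $\|\mathrm W_l^{1/2}\bX_{-j}(\beta - \tilde\beta_j^l)\|_2^2/(2\sigma_*^2)$. Concentration of this quadratic form around its expectation, uniformly in the $2^{p-1}$ sparsity patterns of size $\lesssim s^*$, requires two-sided eigenvalue control of $\bX_{-j}^{\T}\mathrm W_l\bX_{-j}/n_l$ on every $s^*$-sparse submodel. This is the source of the event of probability $1-p\exp\{-a_2 n_l\}$ (restricted-eigenvalue concentration for the $l$-th subgroup whose effective sample size is $n_l$) and of the $p^{-c_1}$ term (union bound over models) under \textbf{Assumption T}. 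The $\zeta$-probability slack absorbs a tail of the likelihood-ratio martingale, yielding the $\log(p/\zeta)/(n(1-\alpha))$ residue after also union bounding over $j$.

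The main obstacle I anticipate is step (ii): uniform control of the weighted restricted eigenvalue $\lambda_{\min}(\bX_{S,-j}^{\T}\mathrm W_l \bX_{S,-j}/n_l)$ over sparse supports $S$ is delicate because the weight matrix $\mathrm W_l$ is highly non-uniform — most of its mass concentrates on roughly $n_l$ rows once $\tau$ is small, so classical restricted-eigenvalue results must be applied to an effectively sub-sampled design, and one must verify that the effective sample size in \textbf{Assumption W} is indeed $n_l$. A secondary difficulty is aligning the sparsity of $\tilde\beta_j^l$ with that of $\beta_j^*(\bz_l)$ under the $\ell_0$ constraint $\|\beta_j^l\|_0 \leq C_0 s_j^*$, which must be handled at the variational level so that the test distribution $q^\dagger$ is supported on the correct sparse model class.
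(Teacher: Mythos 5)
Your proposal matches the paper's intended argument: the paper omits the proof of this theorem, stating it follows the same steps as Theorem~\ref{thm2} and Corollary~\ref{indep}, and those steps are exactly your outline --- the PAC-Bayes/Donsker--Varadhan oracle inequality from the variational characterization (Lemma~\ref{lem1}), a sparse competitor density concentrated near the KL minimizer $\tilde\theta_j^l(\bz)$ whose prior-mass cost yields the $s^*\log p$ and $s^*\log(s^*\sqrt{n}/(\epsilon\sqrt{n_l}))$ terms, concentration of the weighted quadratic forms under restricted-eigenvalue control of $\bX_{-j}^{\T}\mathrm W_l\bX_{-j}$ (Lemma~\ref{lema2}, the source of the $p^{-c_1}$ and $p\exp\{-a_2 n_l\}$ probabilities), and a union bound over $j$. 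The only cosmetic difference is that the paper takes the competitor to be the prior restricted to a coordinatewise neighborhood of the KL minimizer rather than your recentered Gaussian slabs, which changes nothing essential.
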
 
The optimal error rate is obtained by balancing $\underaccent{\bar}{r}(n,\epsilon)$ and $\log(1/\epsilon)$. The proof of the above theorem is similar to that of Theorem~\ref{thm2} and Corollary~\ref{indep} below, and is therefore omitted.

%If the weights satisfy $\bX_{\bz_l^c,-j}^{\mathrm{T}}\mathrm{W}_{\bz_l^c}\bX_{\bz_l^c,-j}/2\sigma_*^2=\mathrm{o}(s_j^*(\log p )/n)$, then the asymptotic risk bounds of the proposed weighted approach dominates the risk bounds of independent modeling approach. This property is satisfied when, for example, the weights are taken as Gaussian kernels, which is what we use in practice. 

\subsection{Covariate-independent underlying graph}
Here, we consider the situation where the underlying structure is independent of the covariate levels. In this case, we assume that the underlying graph structure is  homogeneous, as described in \eqref{homogen}. Thus there is a common true graph parameter $\Theta^*=(\mathrm{B}^*, \Gamma^*)$ associated with every individual in the study. Then we have the following lemma.

\begin{lemma}
For the weighted likelihood model, when the true conditional distribution is covariate-independent, we have $\tilde{\beta}^l_j(\bz)=\beta_j^*$ for $l=1,...,K$.
\label{lemwtequiv}
\end{lemma}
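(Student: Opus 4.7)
The plan is to unpack the definition of $\tilde{\theta}^l_j(\bz)$ from \eqref{modelproj} and show, by a direct Gaussian computation, that under homogeneity the weighted KL objective is minimized at the true parameter. The key observation is that the weights $w_l(\bz_i)$ only rescale the per-observation residual variance in $p^w$, and therefore do not shift the optimum of the conditional mean once the mean model is correctly specified across all observations.

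First I would write both densities in \eqref{modelproj} explicitly. Under \eqref{homogen} with $\theta_j^*(\bz_l)=\theta_j^*$ for every $l$, the true conditional distribution factorizes as $\prod_{i=1}^n \mathcal{N}(x_{i,j};\, x_{i,-j}^{\T}\beta_j^*,\, \sigma_*^2)$ with $\sigma_*^2 = 1/\Omega_{jj}^*$, while the misspecified weighted pseudo-likelihood \eqref{condwt} factorizes as $\prod_{i=1}^n \mathcal{N}(x_{i,j};\, x_{i,-j}^{\T}\beta_j^l(\bz),\, \sigma_*^2/w_l(\bz_i))$. Substituting these into the log-ratio and taking expectation with respect to the truth, every term not involving $\beta_j^l(\bz)$ collapses into an additive constant. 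Using
$E_p[(x_{i,j}-x_{i,-j}^{\T}\beta_j^l(\bz))^2 \mid x_{i,-j}] = \sigma_*^2 + (x_{i,-j}^{\T}(\beta_j^*-\beta_j^l(\bz)))^2$,
the remaining $\beta_j^l(\bz)$-dependent contribution to the KL reduces to
\begin{equation*}
\frac{1}{2\sigma_*^2}\sum_{i=1}^n w_l(\bz_i)\bigl(x_{i,-j}^{\T}(\beta_j^l(\bz)-\beta_j^*)\bigr)^2
\;=\; \frac{1}{2\sigma_*^2}\bigl\|\mathrm W_l^{1/2}\bX_{-j}(\beta_j^l(\bz)-\beta_j^*)\bigr\|_2^2.
\end{equation*}

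This quadratic form is nonnegative and vanishes at $\beta_j^l(\bz)=\beta_j^*$. Because $\|\beta_j^*\|_0 \leq s_j^* \leq C_0 s_j^*$, the true coefficient vector satisfies the sparsity constraint imposed in the minimization (cf.\ Lemma~\ref{lem2}) and is therefore feasible. Hence $\beta_j^l(\bz)=\beta_j^*$ attains the minimum value $0$ of the relevant term and is a KL minimizer, and the accompanying indicator $\tilde\gamma_j^l(\bz)$ recovers the true support $\gamma_j^*$. Crucially, this argument is insensitive to the particular choice of weights $w_l$: the weighted normal equations are solved by the true mean parameter regardless of how the subjects are reweighted, which is exactly why borrowing of information through weights does not introduce asymptotic bias in the homogeneous case.

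The main (and only) subtlety is uniqueness, since in high dimensions two distinct sparse vectors could in principle produce the same weighted fit $\mathrm W_l^{1/2}\bX_{-j}\beta_j^l(\bz)$. I would close this gap by invoking a restricted eigenvalue / compatibility condition on $\mathrm W_l^{1/2}\bX_{-j}$ of the type already assumed for the earlier risk bounds (see Supplement~\ref{ssec:conditions}), under which $\beta_j^*$ is the unique sparse solution of the weighted normal equations. Everything else in the argument is a routine Gaussian KL computation, so the lemma follows.
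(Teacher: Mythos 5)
Your proof is correct and rests on the same key observation as the paper's: after a Gaussian completion-of-squares, the only $\beta_j^l(\bz)$-dependent part of the KL objective is a weighted quadratic form in $\beta_j^l(\bz)-\beta_j^*$, which is nonnegative, vanishes at the feasible point $\beta_j^*$, and is unaffected in its minimizer by the choice of weights. The difference is in which KL you minimize and, consequently, how uniqueness is obtained. You work conditionally on the design, faithfully following the literal definition \eqref{modelproj}, so your quadratic form is $\|\mathrm{W}_l^{1/2}\bX_{-j}(\beta_j^l(\bz)-\beta_j^*)\|_2^2$ built on the empirical Gram matrix; in the regime $p>n$ this is only positive semi-definite, which is why you must import a restricted eigenvalue condition to rule out other sparse minimizers, and that condition holds only on a high-probability event over $\bX$. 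The paper instead reuses the marginalized objective from the proof of Lemma~\ref{lem:2r} (i.e., it integrates over $\bX_{-j}$ as in \eqref{eq:KL}), so under homogeneity the objective collapses to
\begin{equation*}
(\beta_j^*-\beta_j^l(\bz))^{\T}\,\frac{n}{2\sigma_*^2}\,\Sigma^*_{-j,-j}\,(\beta_j^*-\beta_j^l(\bz)),
\end{equation*}
and positive definiteness of the population covariance $\Sigma^*_{-j,-j}$ (Assumption T5/T) gives uniqueness immediately and deterministically, with no restricted eigenvalue argument and no exceptional event. Your route buys nothing extra here and pays a small price in the strength of the conclusion (almost-sure versus high-probability identification of the minimizer), so if you keep the conditional formulation you should either state the lemma on the event where the RE condition holds or take the expectation over $\bX_{-j}$ as the paper does.
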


\begin{theorem}
Under Assumptions {\bf P, W, A} in Supplement~\ref{ssec:conditions} and data generating process~\eqref{homogen}, for any $\zeta \in (0,1)$ and $\epsilon \in (0,1)$ and $\underaccent{\bar}{r}(n,\epsilon)$ as in \eqref{riskeps}, for any $l=1,...,K$, we have  with probability at least $1 - \zeta - c_2 p^{-c_1} - p\exp(-a_2n_l)$, for some positive constants $a_1, a_2,c_1,c_2$ and $D$, 
\begin{equation*}\label{thm2}
  \int \frac{1}{n}d_{\Theta^*, \alpha}(\Theta^l(\bz), \Theta^*)\hat{q}_{\Theta^l(\bz)}(\Theta^l(\bz)) d\Theta^l(\bz) \leq \underaccent{\bar}{r}(n,\epsilon)+ \frac{s^*}{n(1-\alpha)} \log \left(\frac{s^*\sqrt{n}}{\epsilon\sqrt{n_l}}\right) + \frac{\log (p/\zeta)}{n(1-\alpha)}.
\end{equation*}
\end{theorem}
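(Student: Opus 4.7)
The plan is to follow the general fractional-variational machinery of \cite{yang2020alpha} and essentially mirror the proof of Theorem~\ref{cor2}, exploiting the crucial simplification that under the homogeneous data generating process~\eqref{homogen}, Lemma~\ref{lemwtequiv} gives $\tilde{\beta}_j^l(\bz)=\beta_j^*$ for every $l$. Thus the KL minimizer coincides with the truth, the misspecification bias vanishes, and $d_{\Theta^*,\alpha}(\Theta^l(\bz),\Theta^*)$ is exactly the divergence we need to control. The risk bound therefore reduces to a prior-concentration term around $\Theta^*$ plus a stochastic remainder coming from the weighted design concentrating on the $n_l$ samples at covariate level $\bz_l$.

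First I would invoke the variational inequality underlying Proposition~2.2 of \cite{yang2020alpha}: for any test distribution $q^*$ supported in the variational family,
\begin{equation*}
\int d_{\Theta^*,\alpha}(\Theta^l(\bz),\Theta^*)\,\hat q(\Theta^l(\bz))\,d\Theta^l(\bz) \;\leq\; \frac{\alpha}{n(1-\alpha)}\,\E_{q^*}\!\left[-\log\frac{p^w(\cdot\mid\Theta^l)}{p(\cdot\mid\Theta^*)}\right] \;+\; \frac{\KL(q^*\|p_0)}{n(1-\alpha)} \;+\; R_n,
\end{equation*}
where $R_n$ is a stochastic remainder. Next I would take $q^*$ supported on the true sparsity pattern $\gamma^*$ and, on its nonzero coordinates, Gaussian with mean $\beta^*_{j,k}$ and variance of order $\epsilon^2/s^*$. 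Then $\KL(q^*\|p_0)$ splits into a Bernoulli piece of order $s^*\log p$ (which matches the $s^*\log p/(n(1-\alpha))$ term in $\underaccent{\bar}{r}(n,\epsilon)$) and a Gaussian slab piece of order $s^*\log(s^*/\epsilon)$.

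The expected log-likelihood ratio under $q^*$ is a weighted quadratic form $\|\mathrm{W}_l^{1/2}\bX_{-j}(\beta_j^l-\beta_j^*)\|_2^2/(2\sigma_*^2)$. Here \textbf{Assumption~W} is doing the heavy lifting: the weights concentrate on the observations with covariate $\bz_l$, so $\tr(\mathrm{W}_l)\sim n_l$ rather than $n$. On the restricted-eigenvalue event (which holds with probability at least $1-p\exp(-a_2 n_l)$ under \textbf{Assumption~A}) the quadratic form has expectation of order $n_l\cdot \epsilon^2$. After dividing by $n$ this contributes $\epsilon^2\cdot (n_l/n)$ to the first term of the inequality, and when balanced against the slab KL one gets the logarithm $\log(s^*\sqrt{n}/(\epsilon\sqrt{n_l}))$ in the stated bound, reflecting the loss of effective sample size from $n$ down to $n_l$. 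Finally, I would control $R_n$ and obtain the high-probability statement by a Chernoff tail bound applied to the centered log-likelihood ratio, unioned over $j=1,\ldots,p$ (because $d_{\Theta^*,\alpha}$ is the maximum over coordinates), producing the $\log(p/\zeta)/(n(1-\alpha))$ term and the failure probability $\zeta$; the $c_2p^{-c_1}$ piece comes from sub-Gaussian concentration of the residuals under \textbf{Assumption~P}.

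The main obstacle I anticipate is the second step: carefully tracking the weighted concentration so that the effective sample size appears as $n_l$ rather than $n$. Quantitatively this means proving that even though the likelihood integrates over all $n$ subjects, the weights force the information content about $\beta_j^*$ to be governed by the $n_l$ homogeneous samples at level $l$, up to a logarithmic factor. This is precisely the quantitative content of Lemma~\ref{lemwtequiv} and Assumption~W, and it is what produces the $\sqrt{n/n_l}$ slowdown of the variational risk relative to separate estimation. Once that bookkeeping is complete, combining the three contributions and optimising over $\epsilon\in(0,1)$ yields exactly the bound in the theorem.
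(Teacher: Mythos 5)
Your proposal follows essentially the same route as the paper's proof: the Donsker--Varadhan/PAC--Bayes inequality of \cite{yang2020alpha} (the paper's Lemmas~\ref{lem1} and \ref{lema2}), the identification $\tilde\theta_j(\bz)=\theta_j^*$ from Lemma~\ref{lemwtequiv} so that the misspecification bias vanishes, a test measure concentrated on the true support near $\beta_j^*$ whose KL to the spike-and-slab prior yields the $s^*\log p$ and $s^*\log(\cdot/\epsilon)$ terms, control of the weighted quadratic form on a restricted-eigenvalue event, and a union bound over $j$. The only discrepancies are cosmetic or interpretive: the paper takes the test measure to be the prior restricted to a neighborhood rather than a Gaussian, the $c_2p^{-c_1}$ failure probability actually comes from a Gaussian maximal inequality applied to the cross term $\langle \mathrm{W}_l\bX_{-j}(\tilde\beta_j-\beta_j),\varepsilon_j\rangle$ (not from Assumption~P), and the $\sqrt{n/n_l}$ appears only inside a logarithm, so the theorem's overall message is a gain over separate estimation (whose bound in Corollary~\ref{indep} carries $1/n_l$ prefactors), not a slowdown.
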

The strength of the proposed approach is that the risk bound for the variational estimate is sharper for every individual in the current study as compared to independent modeling of the two groups separately. If one were to perform independent modeling of a {\it homogeneous} graphical structure for the two covariate levels separately. The risk bound of the parameters of an individual belonging to a group with size $n_l$ would be as below:
\begin{cor}
Under Assumptions {\bf P, A} in Supplement~\ref{ssec:conditions} and data generating process~\eqref{homogen}, for any $\zeta \in (0,1)$ and $\epsilon \in (0,1)$ and $\underaccent{\bar}{r}(n_1,\epsilon)$ as in \eqref{riskeps}, we have  with probability at least $1 - \zeta - c_2 p^{-c_1} - p\exp\{-a_2n_l\}$,
\begin{align*}
 \int \frac{1}{n_l}d_{ \alpha}(\Theta, \Theta^*)\hat{q}_{\Theta}(\Theta) d\Theta \leq \underaccent{\bar}{r}(n_l,\epsilon)+ \frac{s^*}{n_l(1-\alpha)} \log \left(\frac{s^*}{\epsilon}\right) + \frac{1}{n_l(1-\alpha)}\log \left(\frac{p}{\zeta}\right),
\end{align*}
for some positive constants $a_1, a_2,c_1,c_2$ and $D$.
\label{indep}
\end{cor}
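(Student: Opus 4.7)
The plan is to recognize that Corollary \ref{indep} is a direct specialization of the argument used for the preceding theorem, simplified by two facts: (i) only the $n_l$ observations in group $l$ are used, so the effective sample size drops from $n$ to $n_l$; and (ii) because the true data-generating distribution is homogeneous under \eqref{homogen} and no weights are applied under separate modeling, the Kullback--Leibler minimizer trivially coincides with $\Theta^*$ (this is the unweighted analogue of Lemma~\ref{lemwtequiv}), so $d_\alpha(\Theta,\Theta^*)$ collapses to the ordinary conditional $\alpha$-R\'enyi divergence between the Gaussian pseudo-likelihood models indexed by $\Theta$ and $\Theta^*$. The proof therefore reduces to the standard fractional variational risk bound of \cite{yang2020alpha} applied to a spike-and-slab prior with $n_l$ samples.

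Starting from the variational characterization in \eqref{varapprox} (with $\bX_{\bz_l}$ in place of $\bX$ and $n_l$ in place of $n$), one obtains the deterministic inequality that for any admissible product spike-and-slab test distribution $q^*$,
\[
\int \tfrac{1}{n_l}\, d_\alpha(\Theta,\Theta^*)\,\hat q_\Theta(\Theta)\,d\Theta \;\leq\; \frac{1}{\alpha\, n_l}\left\{-\int q^*(\Theta)\log\frac{p(\bX_{\bz_l}\mid \Theta)}{p(\bX_{\bz_l}\mid \Theta^*)}\,d\Theta \;+\; \mathrm{D_{KL}}(q^*\,\|\,p_\Theta)\right\}.
\]
I would take $q^*$ supported on the true edge set $\Gamma^*$: Dirac masses on inactive coordinates and independent Gaussians with mean $\beta^*_{j,k}$ and variance of order $\epsilon^2/s^*$ on active coordinates. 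A standard spike-and-slab prior mass calculation (as in \cite{yang2020alpha}) yields $\mathrm{D_{KL}}(q^*\|p_\Theta) \lesssim s^* \log p + s^* \log(s^*/\epsilon)$, producing the $s^*\log p/[n_l(1-\alpha)]$ contribution absorbed into $\underaccent{\bar}{r}(n_l,\epsilon)$ together with the $s^*\log(s^*/\epsilon)/[n_l(1-\alpha)]$ term. The expected log-likelihood-ratio under $q^*$ reduces to $\sum_j \E_{q^*}\|\bX_{-j,\bz_l}(\beta_j-\beta_j^*)\|_2^2/(2\sigma_*^2)$, which by the choice of $q^*$ is of order $n_l \epsilon^2$ and, after being scaled by $1/(\alpha n_l)$ and combined with the $1/(1-\alpha)$ factor inherent to the $\alpha$-R\'enyi divergence, delivers exactly the $\alpha\epsilon^2/(1-\alpha)$ term in $\underaccent{\bar}{r}(n_l,\epsilon)$.

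The high-probability tail follows from Assumption~\textbf{A}: sub-Gaussianity of the rows of $\bX_{\bz_l}$ combined with a Hanson--Wright-type concentration and a union bound over supports of cardinality at most $s^*$ give that the quadratic forms $\|\bX_{-j,\bz_l}(\beta_j-\beta_j^*)\|_2^2$ concentrate uniformly about their expectations with failure probability $\lesssim p\exp(-a_2 n_l)$; a separate union bound over $p$ sub-Gaussian score statistics supplies the $c_2 p^{-c_1}$ slack. The remaining $\zeta$ mass is converted into the $\log(p/\zeta)/[n_l(1-\alpha)]$ term via the standard exponential Markov argument used in \cite{yang2020alpha} to upgrade in-expectation bounds on the variational risk to the displayed in-probability bound.

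The main obstacle is the precise balancing of prior mass and likelihood contribution for the spike-and-slab model under the $\ell_0$-constrained variational family, but this is mechanical once the argument of the preceding theorem is in hand and simply substitutes $n_l$ for $n$ throughout. That substitution is precisely what produces the contrast of interest: since $n_l < n$ whenever one or more covariate levels carry strictly positive mass, the separate-modeling risk bound in Corollary~\ref{indep} is strictly inflated by the factor $n/n_l$ relative to the weighted pseudo-likelihood bound, thereby formalizing the information-borrowing advantage claimed in the main text.
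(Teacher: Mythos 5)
Your proposal is correct and follows essentially the same route as the paper: reduce to the well-specified, unweighted (identity-weight) case so that the KL minimizer coincides with $\Theta^*$, apply the fractional-variational PAC-Bayes bound of \cite{yang2020alpha} with a spike-and-slab test distribution concentrated on the true support within an $\epsilon$-neighborhood of $\beta_j^*$, bound its KL to the prior by $s^*\log p + s^*\log(s^*/\epsilon)$, control the likelihood term via restricted-eigenvalue/sub-Gaussian concentration, and union-bound over $j=1,\dots,p$ before substituting $n_l$ for $n$. The only cosmetic differences are that the paper takes the test distribution to be the prior restricted to the neighborhood rather than a product of Gaussians, and your displayed intermediate inequality misstates the $\alpha/(1-\alpha)$ prefactors (which your surrounding prose nonetheless accounts for correctly).
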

If one of the groups has a sample size $n_l=\mathcal{O}(n^h)$ with $h < 1$, then the risk bounds for that group will increase compared to the proposed model because the information is borrowed from every subject in the study for the proposed approach.

\section{Simulation Study}\label{sec:sims}

We begin our simulation study with a setting defined by a unidimensional covariate before considering a multidimensional covariate. In both cases, the covariate $\bz$ is randomly drawn from a uniform distribution. To generate the data for each of the settings, we first define the precision matrix $\Omega_i$ for the $i$-th individual as a function of the covariate $\bz_i$. We then generate the observation $\bX_i$ for the $i$-th individual according to the population model~\eqref{eq:population} from a mean-zero $(p+1)$-dimensional normal distribution with precision matrix $\Omega_i$. Finally, we apply W-PL to estimate the graphs $\hat{\mathrm{G}}^i$ describing the sparsity structure of $\Omega_i$. In addition to varying the dimensionality of the covariate, we also perform experiments in each setting with varying data dimensionality, examining performance for $p\in\{10, 30, 50\}$. 

We perform 50 trials for each experiment, repeating the data generation process in each trial. In each trial, we first select an individual-specific bandwidth hyperparameter $\tau$ for W-PL using a two-step kernel density estimation technique. We then average over a grid of $\pi$ candidates using the exponentiated ELBO as our unnormalized model averaging weights, and conduct a two-dimensional grid search to select $\sigma^2$ and $\sigma^2_\beta$ for each $\pi$ candidate, using the ELBO as our grid search objective function. More details on this hyperparameter specification scheme are included in Supplement \ref{ssec:algo}. These hyperparameters are used in our final variational estimate to the posterior inclusion probabilities $\alpha_{j,k}$, which we symmetrize as $\tilde\alpha_{j,k} = (\alpha_{j,k} + \alpha_{k,j})/2$. Finally, we threshold the symmetrized probabilities at $0.5$ as $\hat{\mathrm{G}}^i_{j,k}=\ind\{\tilde{\alpha}_{j,k} >0.5\}$ to construct the final graph estimates $\hat{\mathrm{G}}^i$. To evaluate the performance of W-PL, we compute the sensitivity and specificity of these estimates compared to  to the ground-truth precision structure $\mathrm{G}^*$, where these metrics are defined as: 

\[
\text{sensitivity}=\frac{\#\{(j,k):(\mathrm{G}_{jk}^*=1) \cap (\hat{\mathrm{G}}_{jk}^i=1)\}}{\#\{(j,k):(\mathrm{G}_{jk}^*=1)\}}, \quad
\text{specificity}=\frac{\#\{(j,k):(\mathrm{G}_{jk}^*=0) \cap (\hat{\mathrm{G}}_{jk}^i=0)\}}{\#\{(j,k):(\mathrm{G}_{jk}^*=0)\}}.
\]

We consider two competitors for W-PL in these experiments. The first is a time-varying graphical model from \cite{haslbeck_mgm_2020} that uses kernel smoothing and elastic net regularization (mgm). The second is also a time-varying graphical model from \cite{yang_estimating_2020} that uses a local group LASSO penalty (loggle). In both cases, we select hyperparameters using cross-validation.   

We consider experiments where the covariate is discrete in Supplement \ref{app:disc} and where the data distribution departs from Gaussian in Supplement \ref{ssec:dGauss}, as well as a comparison to the method of \cite{qiu2016joint} in Supplement \ref{app:Qiu_comp}.  

\subsection{Unidimensional Covariate}\label{1cont}

We first consider a unidimensional covariate $\bz_i\in [-3,3]$ and define the $j,k$ entry of the ground-truth precision matrices as $\Omega^i_{j,k} = 2$ if $j = k$,  $\Omega^i_{j,k} = 1$ if $(j,k)\in\{(2,3), (3,2)$, $\Omega^i_{j,k} = \ind\{\bz_i < 1\}\cdot \min(1, \frac12 - \frac12\bz_i)$ if $(j,k)\in\{(1, 2), (2, 1)\}$ and  $\Omega^i_{j,k} = \ind\{\bz_i > -1\}\cdot \min(1, \frac12 + \frac12\bz_i)$ if $(j,k)\in\{(1, 3), (3, 1)$.
%\begin{eqnarray*}
%    \Omega^i_{j,k} = 
%    \begin{cases}
%        2 & j = k
%        \\
%        1 & (j,k)\in\{(2,3), (3,2)\}
%        \\
%        \ind\{\bz_i < 1\}\cdot \min(1, \frac12 - \frac12\bz_i) & (j,k)\in\{(1, 2), (2, 1)\}
%        \\
%        \ind\{\bz_i > -1\}\cdot \min(1, \frac12 + \frac12\bz_i) & (j,k)\in\{(1, 3), (3, 1)\}
%    \end{cases}
%\end{eqnarray*}
The ground truth precision structures are given in Supplement \ref{uni_gr}. To generate the covariate, we sample from uniform distributions on $[-3, -1],[-1,1],$ and $[1,3]$ 50 times each. Thus, in this experiment, $n=150$. 

We present the results for these experiments in Table \ref{tab:cont_cov_dep}. At each of the considered dimensionalities, W-PL outperforms loggle and mgm in terms of sensitivity. mgm consistently offers the lowest false positive rate, although W-PL remains competitive in this metric. Further, although the sensitivity differential between W-PL and loggle is roughly constant across the different dimensionalities, as $p$ get larger, the performance of mgm relative to W-PL substantially decreases. 

\begin{table}[h]
    \centering
\begin{tabular}[t]{llll}
\toprule
$p$ & Method & Sensitivity$(\uparrow)$ & Specificity$(\uparrow)$\\
\midrule
 & W-PL & $\mathbf{0.8382} (0.0743)$ & $0.9951 (0.0057)$\\

 & loggle & $0.7802 (0.0707)$ & $0.9926 (0.0071)$\\

\multirow{-3}{*}{\raggedright\arraybackslash 10} & mgm & $0.7057 (0.0953)$ & $\mathbf{0.9991} (0.0018)$\\
\cmidrule{1-4}
 & W-PL & $\mathbf{0.7758} (0.1068)$ & $0.9977 (0.0014)$\\

 & loggle & $0.7211 (0.0935)$ & $0.9981 (0.0009)$\\

\multirow{-3}{*}{\raggedright\arraybackslash 30} & mgm & $0.5894 (0.1112)$ & $\mathbf{0.9999} (0.0002)$\\
\cmidrule{1-4}
 & W-PL & $\mathbf{0.7387} (0.0907)$ & $0.9984 (0.0009)$\\

 & loggle & $0.6982 (0.0895)$ & $0.9984 (0.0005)$\\

\multirow{-3}{*}{\raggedright\arraybackslash 50} & mgm & $0.5149 (0.0761)$ & $\mathbf{1.0000} (0.0000)$\\
\bottomrule
\end{tabular}
    \caption{\it Results for 1-dimensional continuous covariate-dependent setting}
    \label{tab:cont_cov_dep}
\end{table}

In order to gauge the practical performance of the proposed method, we look at the estimated inclusion probability, specifically $\alpha_{12}$ for the edge between $x_1$ and $x_2$, and $\alpha_{13}$ for the edge between $x_1$ and $x_3$. To gauge the variability in the estimates, we study not only the mean posterior inclusion probability across the trials, but also the $5$-th and $95$-th quantiles. Figure \ref{fig5cs} illustrates the true precision value between the edges and the corresponding mean inclusion probability. This figure shows that the presence (or absence) of an edge between pairs of variables is almost always correctly recovered for the first and third clusters. The behavior of the inclusion probability completely mimics the behavior of the true precision value across individuals, and the variability is naturally most apparent in the middle cluster where the precision matrix varies with the covariate. 
Note that the dependence structure for variables where the corresponding entry in the precision matrix does not change across subjects is correctly recovered for all subjects across all trials.

\begin{figure}
\begin{center}
\begin{tabular}{cc}
\includegraphics[width=0.3\linewidth]{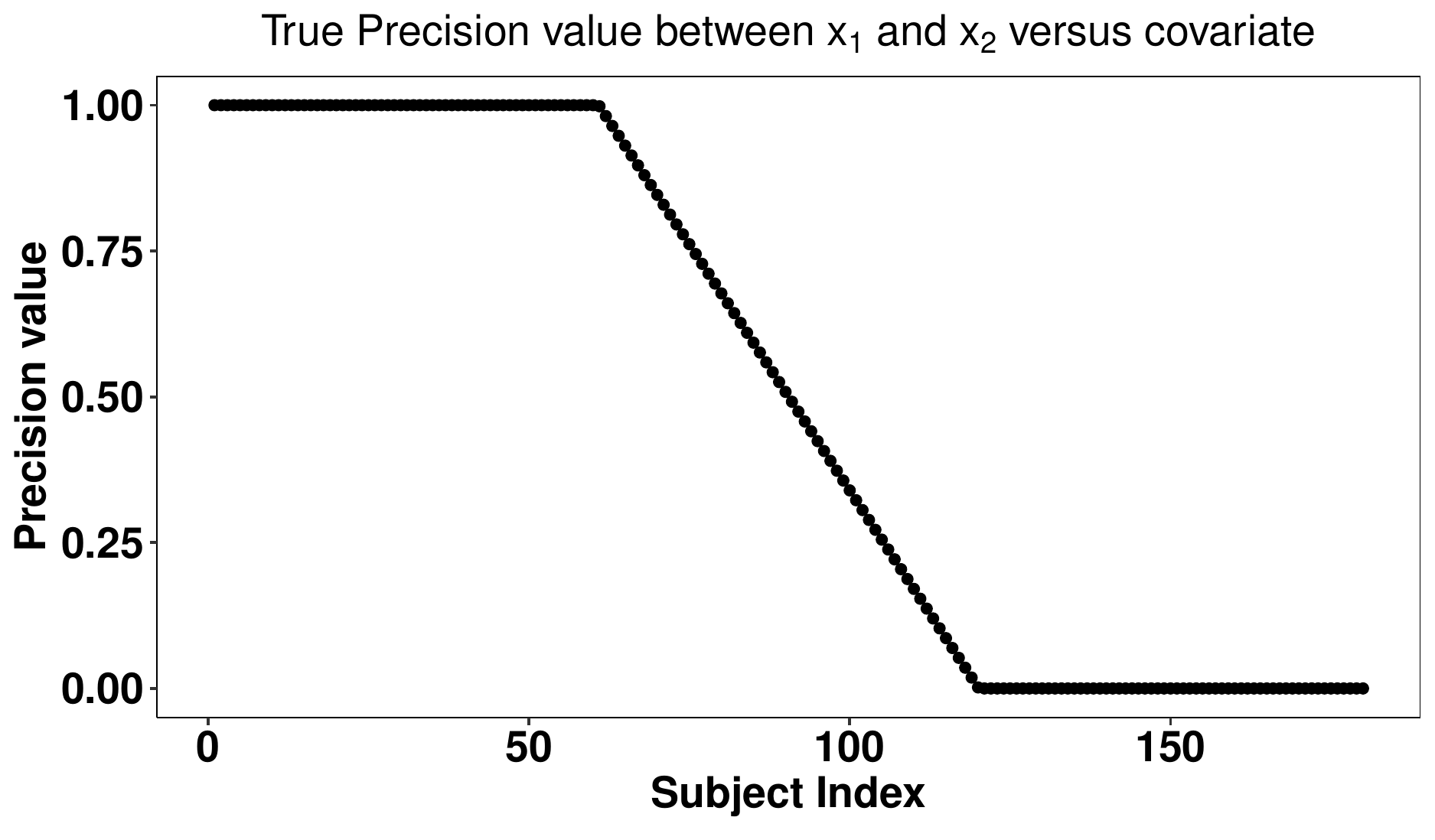} &
\includegraphics[width=0.3\linewidth]{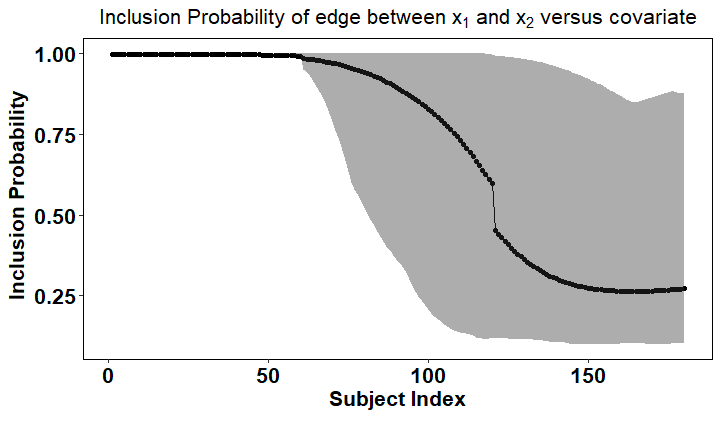}\\
\includegraphics[width=0.3\linewidth]{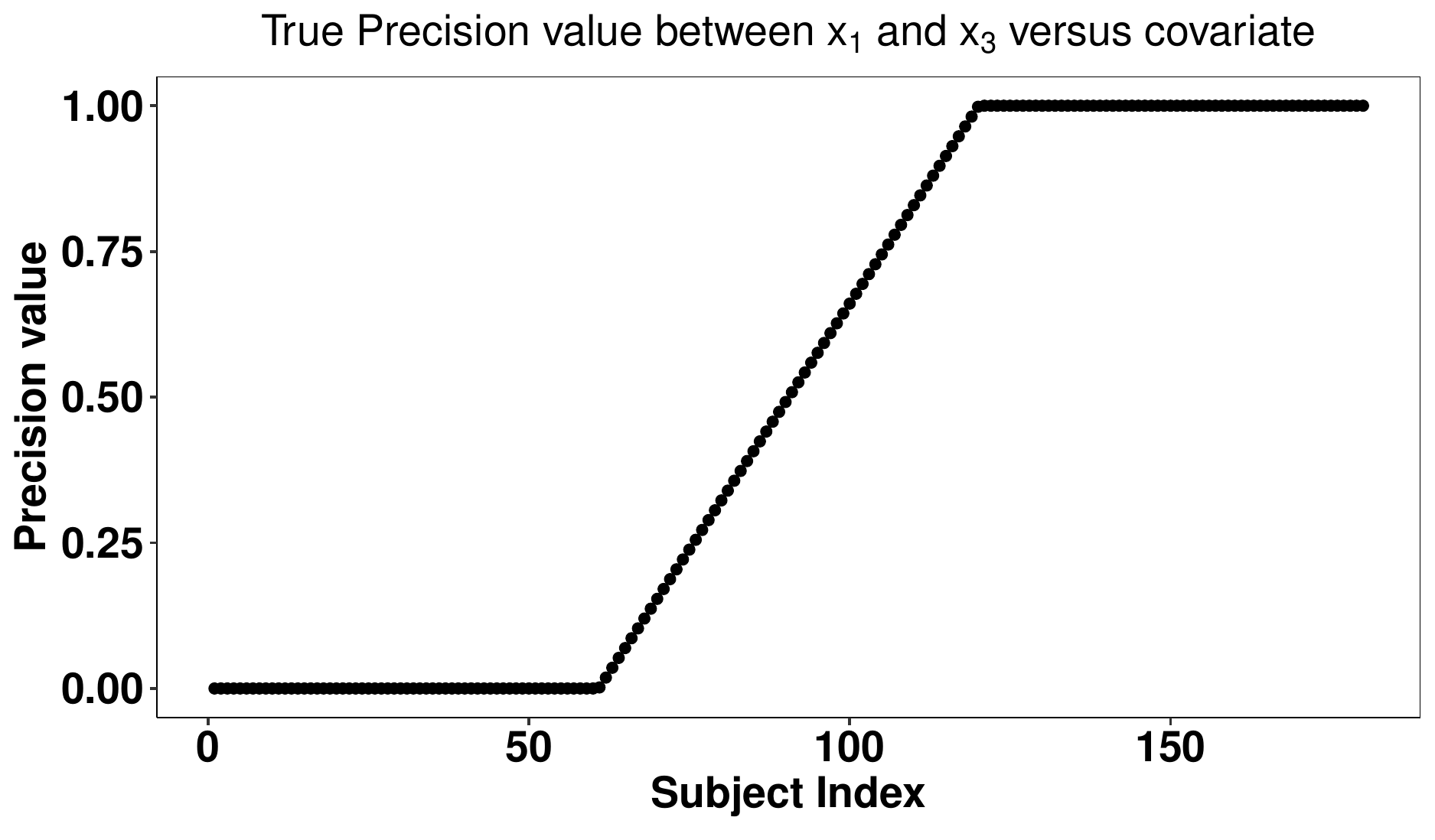} &
\includegraphics[width=0.3\linewidth]{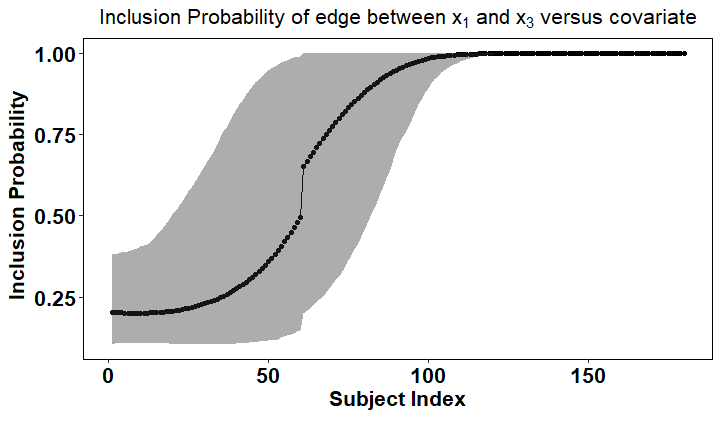}\\
\end{tabular}
\caption{\it Left: True precision value for the edge between Variable 1 and 2 (top panel); and Variable 1 and 3 (bottom panel). Right: Corresponding mean inclusion probabilities across 50 simulations, and $95\%$ confidence interval of the probabilities.}
\label{fig5cs}
\end{center}
\end{figure}

%  \begin{figure}[htbp]
% \begin{center}
% \begin{tabular}{ccc}
% \includegraphics[width=0.31\linewidth]{figs/covariate_vs_sub_cont-eps-converted-to.pdf}&
% \includegraphics[width=0.31\linewidth]{figs/sensitivity_sub90_tauest_csvs_corrected-eps-converted-to.pdf}&
% \includegraphics[width=0.31\linewidth]{figs/specificity_sub90_tauest_csvs_corrected-eps-converted-to.pdf}\\
% \end{tabular}
% \caption{\it Covariate value versus the corresponding subject index}
% \label{fig4cs}
% \end{center}
% \end{figure}

\subsection{Multidimensional Covariate}\label{sec:multi}

We next consider a 2-dimensional covariate $\bz\in [-3,3]\times[-3, 3]$. We define the $j,k$ entry of the ground-truth precision matrices similar to the 1-dimensional case as $\Omega^i_{j,k} = 2$ if $j = k$,  $\Omega^i_{j,k} = 1$ if $(j,k)\in\{(2,3), (3,2)$, $\Omega^i_{j,k} = \ind\{\bz_{i1} < 1\}\cdot \min(1, \frac12 - \frac12\bz_i)$ if $(j,k)\in\{(1, 2), (2, 1)\}$ and  $\Omega^i_{j,k} =  \ind\{\bz_{i2} > -1\}\cdot \min(1, \frac12 + \frac12\bz_i)$ if $(j,k)\in\{(1, 3), (3, 1)$.
%\begin{eqnarray*}
%    \Omega^i_{j,k} = 
%    \begin{cases}
%        2 & j = k
%        \\
%        1 & (j,k)\in\{(2,3), (3,2)\}
%        \\
%        \ind\{\bz_{i1} < 1\}\cdot \min(1, \frac12 - \frac12\bz_{i1}) & (j,k)\in\{(1, 2), (2, 1)\}
%        \\
%        \ind\{\bz_{i2} > -1\}\cdot \min(1, \frac12 + \frac12\bz_{i2}) & (j,k)\in\{(1, 3), (3, 1)\}
%    \end{cases}
%\end{eqnarray*}
The ground truth precision structures are given in Supplement \ref{mul_gr}. We generate a sample of size $n=225$ by sampling uniformly $25$ times from each of the 9 sets generated by taking the Cartesian product of the intervals resulting from partitioning the horizontal and vertical axes of the covariate space into intervals of length $2$. 

In the unidimensional continuous covariate setting, $\bz$ may be thought of as indexing time. Thus, mgm and loggle can both be directly compared to W-PL. However, to include these methods in our multidimensional covariate experiments, a reduction to the dimensionality of the covariate is necessary, as neither model can directly handle a multidimensional extraneous covariate. To do this, we apply a greedy sorting algorithm that re-indexes $\bz_1,...,\bz_n$ to $\bz_{(1)},...,\bz_{(n)}$. First, we set $\bz_{(1)} = \bz_1$. Then, at the $t$-th step of the algorithm, $t>1$, we define $\mathcal S_t = \{\bz_1,...,\bz_n\}\setminus\{\bz_{(1)},...,\bz_{(t-1)}\}$ as the covariates that have not yet been sorted and set 
\[\bz_{(t)}=\underset{\bz\in\mathcal S_t}{\arg\min}\lVert \bz -\bz_{(t-1)}\rVert\]

This gives us a bijection $\xi$ mapping from $z_1,...,z_n$ to $z_{(1)},...,z_{(n)}$. We use this mapping to define the 1-dimensional covariate $\mathbf{v}\in1,...,n$ mimicking a time index for loggle and mgm, where $\mathbf{v}_l=l'$ if, and only if, $\xi(\bz_l)=\bz_{(l')}$, i.e., the $l'$-th timepoint is the individual whose covariate was sorted to the $l'$-th position. To demonstrate the fairness of this reduction of the covariate, we apply W-PL both to the original covariate $\bz$, as well as to the time-indexing covariate $\mathbf{v}$. We refer to the results from the latter as time-varying W-PL (tv W-PL).

We present the results from this experiment in Table \ref{tab:multi_cov}. W-PL has the best sensitivity of the $4$ considered methods across all of the considered dimensionalities. The sensitivity of tv W-PL is less than that of W-PL, but still greater than loggle and mgm in all of the experiments, which is expected, given the results of the previous experiments. Although the differential between W-PL and tv W-PL is only about $0.05$ in each experiment, this demonstrates the importance of utilizing covariate information fully in achieving optimal performance. Thus, in addition to the ability to model the precision matrix as varying continuously, another key attribute of W-PL is its ability to directly incorporate a multidimensional covariate into the estimation procedure. 

\begin{table}[h]
    \centering
\begin{tabular}[t]{llll}
\toprule
$p$ & Method & Sensitivity$(\uparrow)$ & Specificity$(\uparrow)$\\
\midrule
 & W-PL & $\mathbf{0.8890} (0.1023)$ & $\mathbf{0.9968} (0.0035)$\\

 & tv W-PL & $0.8363 (0.1221)$ & $0.9906 (0.0063)$\\

 & loggle & $0.6360 (0.1101)$ & $0.9917 (0.0076)$\\

\multirow{-4}{*}{\raggedright\arraybackslash 10} & mgm & $0.6273 (0.2031)$ & $0.9963 (0.0047)$\\
\cmidrule{1-4}
 & W-PL & $\mathbf{0.8216} (0.1250)$ & $0.9995 (0.0004)$\\

 & tv W-PL & $0.7689 (0.1399)$ & $0.9976 (0.0012)$\\

 & loggle & $0.5322 (0.1159)$ & $0.9996 (0.0004)$\\

\multirow{-4}{*}{\raggedright\arraybackslash 30} & mgm & $0.5173 (0.1422)$ & $\mathbf{0.9998} (0.0003)$\\
\cmidrule{1-4}
 & W-PL & $\mathbf{0.8399} (0.1173)$ & $0.9997 (0.0002)$\\

 & tv W-PL & $0.7886 (0.1211)$ & $0.9980 (0.0007)$\\

 & loggle & $0.4809 (0.1054)$ & $0.9997 (0.0002)$\\

\multirow{-4}{*}{\raggedright\arraybackslash 50} & mgm & $0.4792 (0.0749)$ & $\mathbf{1.0000} (0.0000)$\\
\bottomrule
\end{tabular}    
\caption{\it Results for the continuous multidimensional covariate-dependent setting}
    \label{tab:multi_cov}
\end{table}

\section{Real data analysis}\label{sec:real}
The notion of {\it non-homogeneous} underlying graphical structure is particularly significant in the field of cancer research, because it is well known that cancer initiates and evolves through coordinated changes across multiple molecular levels, networks and pathways. This causes the underlying graph to vary across individuals depending on demographics, genetic markers, and other biological factors(\cite{bolli2014heterogeneity,lohr2014widespread}). These factors can be looked at as extraneous covariates which contain valuable information about how the underlying graph structure varies across the individuals.

We use data on patients with Breast Invasive Carcinoma (BRCA) from The Cancer Genome Atlas (TCGA) program website at \url{http://www.compgenome.org/TCGA-Assembler/}. We consider 70 patients with Breast Invasive Carcinoma and 30 patients with normal cells. ``FOXC2" is a gene that is well known to be associated with breast cancer, as discussed in \cite{mani2007mesenchyme}.  We use the unnormalized copy number variation (\texttt{cnv}) values of the gene as our choice of the covariate. We notice that the \texttt{cnv} values were very similar among the normal cells and were concentrated in the range of $1.83$ to $2.06$. However, the values were much more varied among the cancer cells, as shown in the left panel of Figure \ref{realcov}. We estimate the graph dependence structure among the protein expression values of eight genes corresponding to the individuals in our study by treating the given \texttt{cnv} values as continuous associated covariates. The eight genes considered were ``CTNNB1",``BRCA2", ``MET", ``E-cadherin", ``N-cadherin", ``NFkB1", ``snail" and ``STAT3". Based on the covariate values, we describe the \texttt{cnv} to be ``under-expressed" if the values are less than $1.6$, ``normally expressed" if the values are between $1.6$ and $2.1$, and ``over-expressed" if the values are greater than $2.1$. As opposed to the hyperparameter specification scheme used in Section \ref{sec:sims}, here, we utilize the scheme described in Supplement \ref{hidim}.

Figure \ref{realcom} shows the estimated dependence structure of three individuals with different levels of ``FOXC2" \texttt{cnv} expression. There is a visible evolution of the dependence structure as the covariate value changes. In particular, we focus on the edge between ``N-Cadherin" and ``NFkB1" which is present in the under-expressed ``FOXC2" gene, but is otherwise not present. The inclusion probability with covariate value is shown in the right panel of Figure \ref{realcov}. We notice a steady decrease of the inclusion probability as the expression level of the ``FOXC2" \texttt{cnv} increases. The sharp jump on the right is probably because of the sparsity of data points in that neighborhood resulting in inaccurate estimation. N-cadherin is known to promote breast cancer irrespective of the E-cadherin levels, as discussed in \cite{nieman1999n}. However, NFkB1 is known to promote breast cancer by suppression of E-cadherin expression in cells, as discussed in \cite{chuahl2007nf,criswell2007modulation} and others. Our study corroborates this observation, as  we do notice a significant change in the dependence pattern between ``E-cadherin" and ``NFkB1" at different expression levels of the ``FOXC2" gene. For normally expressed cells there is a significant dependence between the protein expressions of the two genes. However, for under-expressed or over-expressed cells, the dependence is no longer present. This is displayed in the middle panel of Figure \ref{realcov} where we notice a sharp peak in inclusion probability for the normally expressed cells only, except for the outliers on the right.

 \begin{figure}[htbp]
\begin{center}
\begin{tabular}{ccc}
\includegraphics[width=0.33\linewidth]{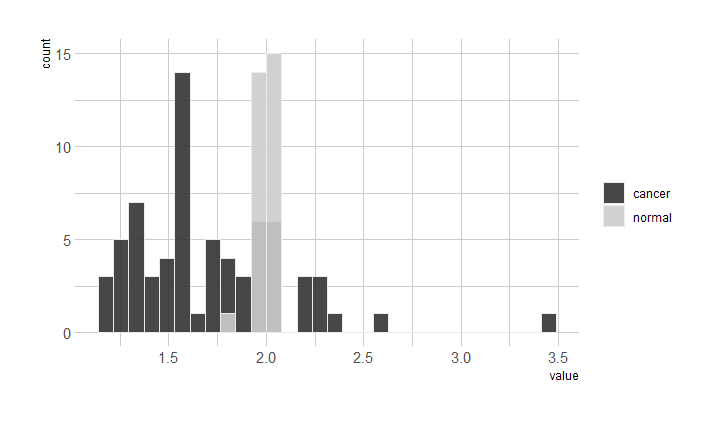}&
\includegraphics[width=0.3\linewidth]{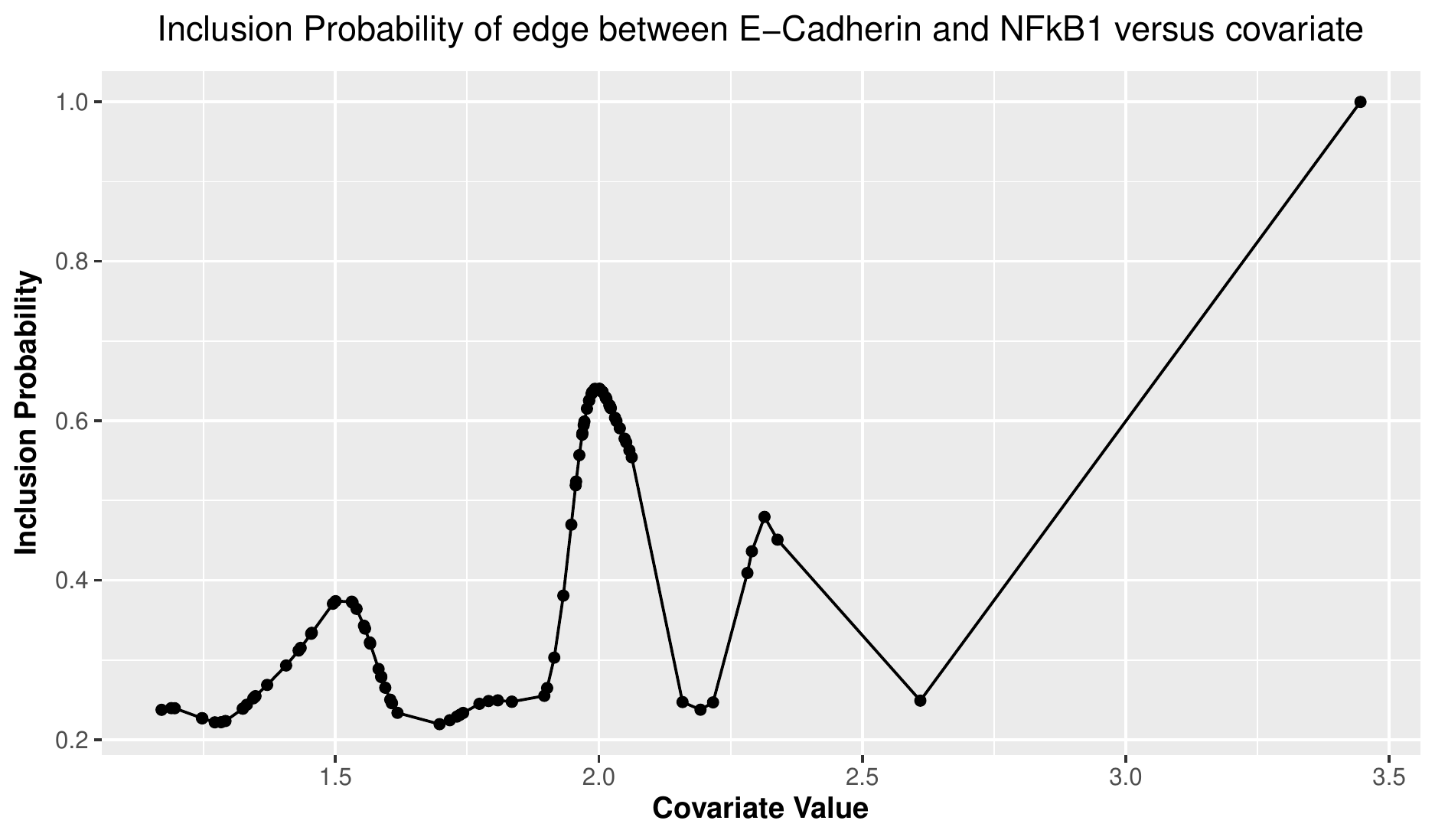}&
\includegraphics[width=0.3\linewidth]{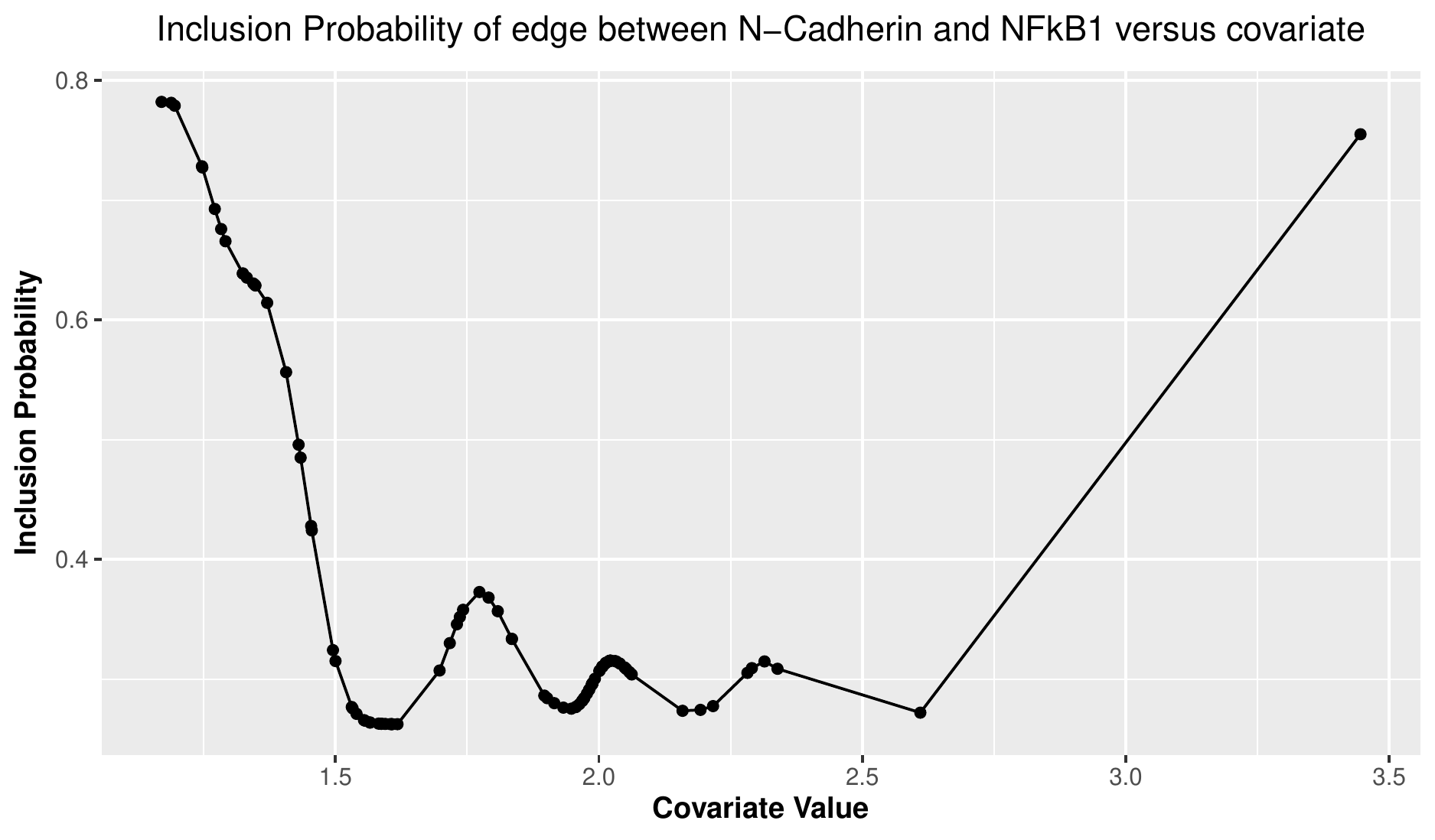}\\
\end{tabular}
\caption{\it Left: Histogram of covariate value for normal cells (lighter shade) versus cancer cells (darker shade). Middle: Inclusion Probability between E-Cadherin and NFkB1 versus covariate values. Right: Inclusion Probability between N-Cadherin and NFkB1 versus covariate values.}
\label{realcov}
\end{center}
\end{figure}

\begin{figure}[htbp]
\begin{center}
\begin{tabular}{|c|c|c|}
% \hline
% \includegraphics[width=0.31\linewidth]{}&
% \includegraphics[width=0.31\linewidth]{}&
% \includegraphics[width=0.31\linewidth]{}\\
\hline
\includegraphics[width=0.33\linewidth]{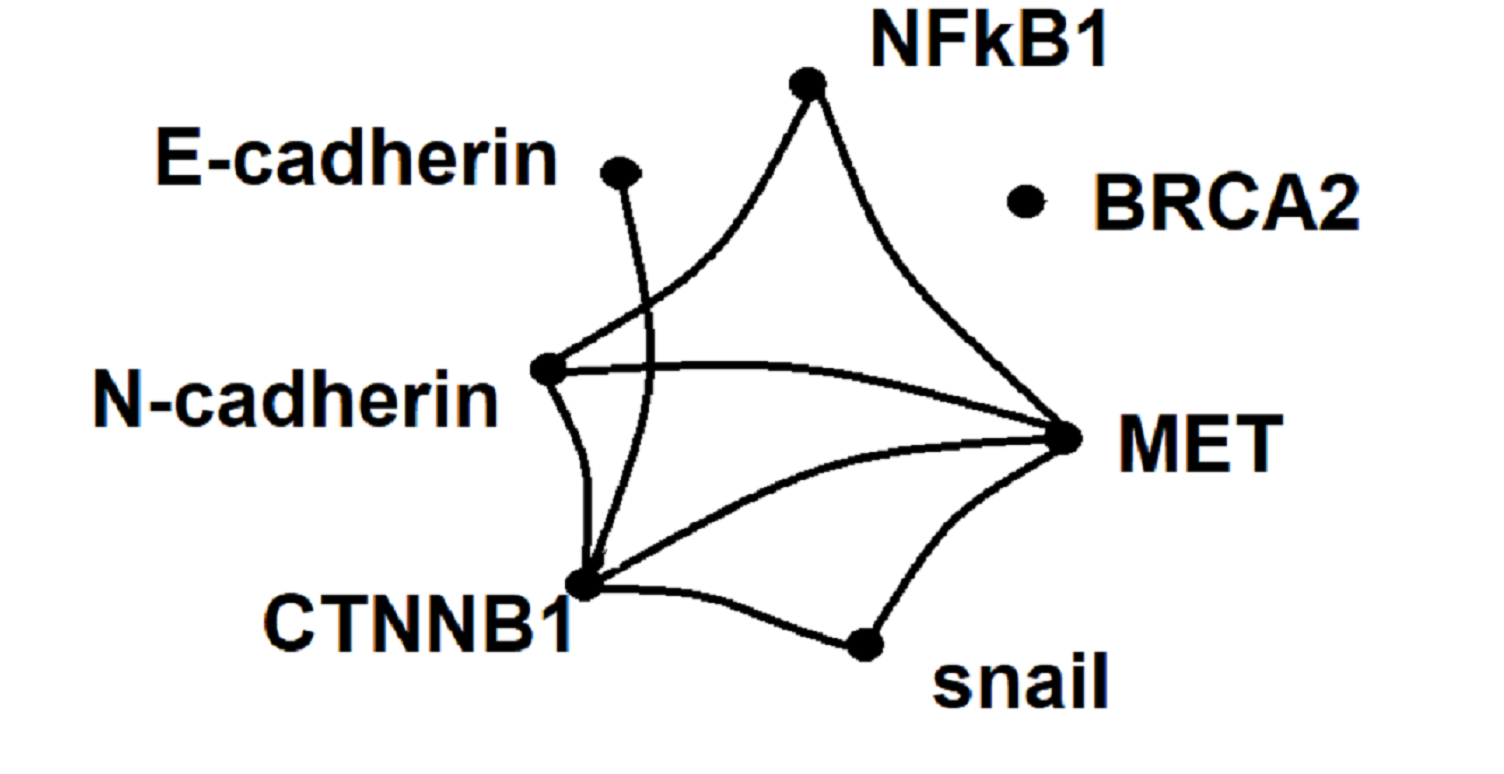}&
\includegraphics[width=0.33\linewidth]{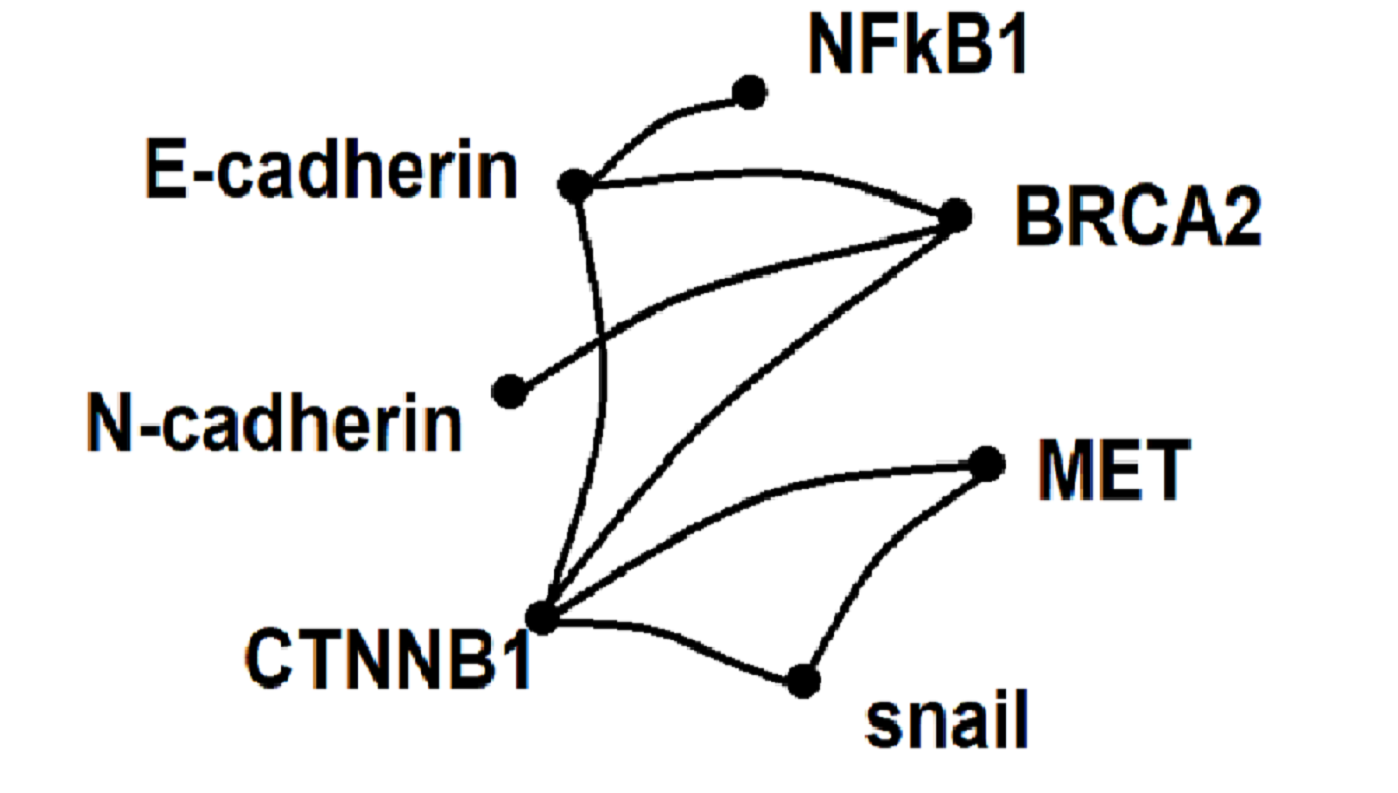}&
\includegraphics[width=0.33\linewidth]{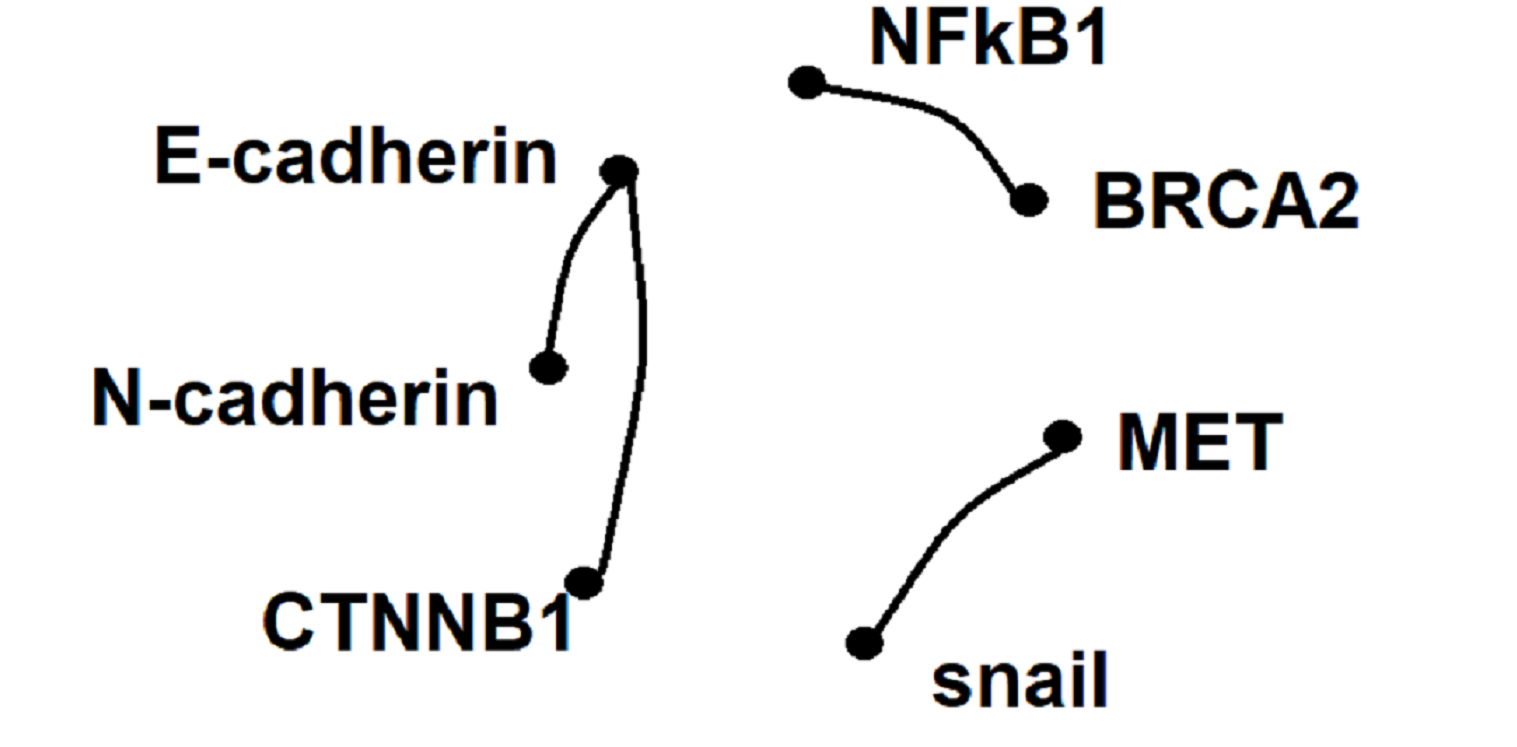}\\
\hline
\end{tabular}
\caption{\it Predicted network structures for individuals with under expressed (left), normally expressed (middle) and over expressed (right) FOXC2 gene.}
\label{realcom}
\end{center}
\end{figure}

\section{Discussion}
In this article, we have introduced a novel weighted-pseudo likelihood approach that can provide an estimate of the underlying dependence structure at an individual level using extraneous covariate information. An appealing feature of the proposed approach is that the performance of the estimates does not suffer when the underlying structure does not actually depend on the extraneous covariates, which we demonstrate in Supplement \ref{sec:covindep}. The variational approach, together with the embarrassingly parallel structure of the parameter estimation avoids the computational complexities associated with running a full-blown Markov chain Monte Carlo. In addition, we also established optimal risk bounds of the proposed method, demonstrating that the approximation through either the variational inference or the pseudo-likelihood framework does not hinder the statistical properties of the method. The theory further demonstrates how borrowing information allows us to obtain a better fit.

% To the best of our knowledge, if the underlying graph structure among the individuals is covariate-dependent, there is no other method available in the literature that can incorporate this information in their graph estimates at an individual level.

Non-Gaussian responses are another direction worth exploring in the future. When the true distribution is non-Gaussian, there is no direct interpretation of the conditional regression coefficients. In contrast, the pseudo-likelihood approach is a practical technique to go beyond the Gaussian assumption by changing the error distribution, see, for instance, \cite{guha2020quantile}. However, it is unclear what true data generation mechanism can be approximated by such a pseudo-likelihood. 

Finally, the detection of high-dimensional graphs could be challenging if the SNR is not high enough, which we explore in Supplement \ref{hidim}. 
%as is the case when observations pertain to multiple subjects with different graph structures. 
This low SNR issue is more prominent for the continuous covariate setting when $\Theta^*_{ij}(\mz)$ is a continuous function of $\mz$, and $\Theta^*_{ij}(\mz)$ takes both zeros and non-zero values. Then by the continuity, $\Theta^*_{ij}(\mz)$  takes values arbitrarily close to zero, where it is challenging to recover the graphs due to low SNR. 

The codes used in our analysis are available online on Github at \href{https://anonymous.4open.science/r/covariate-dependent_graphical_modeling/simulation_study_graph_learning/main.R}{\url{https://anonymous.4open.science/r/covariate-dependent_graphical_modeling/simulation_study_graph_learning/main.R}}.
%Although we have restricted ourselves to univariate covariate cases in our study for clarity of presentation, the theory and methods can be naturally extended to higher dimensional covariates as well. An interesting direction of further research would be to study the borrowing of information from higher dimensional covariates.

\newpage

\section*{Appendices}

\addcontentsline{toc}{section}{Appendices}

\renewcommand{\thesubsection}{\Alph{subsection}}

\spacingset{1.25}

\subsection{Assumptions for the theoretical results}\label{ssec:conditions}

\subsubsection{Assumptions for continuous covariate-dependent model}
{\em Assumptions on true data generating distribution:} 

\noindent {\bf Assumption T1 (Sparsity in $\beta$)}. Assume that $\beta_j^*(\mathrm{z})$ has at most $s_j^*$ non-zero elements for any $\mathrm{z}$ in its support, and let $s^*= \max_{j=1}^{p}s_j^*$ with $s^*\geq 1$. Suppose $n \geq c \max\{s^*\log p, s^*\log n\}$ for some constant $c>1$. In addition, we assume $n =o(p^{C})$ for some positive constant $C>0$. 

\noindent {\bf Assumption T2 (Sparsity in derivatives)}. Assume that for  any $\mathrm{z}$,  $\dot{\beta}_{j}^{*}(\mathrm{z})$ and $\ddot{\beta}_{j}^{*}(\mathrm{z})$  have at most $c_5 s_j^*$ non-zero elements for $j=1,...,p$, for some constant $c_5>0$.

\noindent {\bf Assumption T3 (Smoothness)}. We assume that up to the second order derivatives for all the components of the graph coefficient with respect to the covariates  are uniformly bounded by a constant. That is, $\|\beta_{j}^{*}(\mathrm{z})\|_\infty$,$\|\dot{\beta}_{j}^{*}(\mathrm{z})\|_\infty$, $ \|\ddot{\beta}_{j}^{*}(\mathrm{z})\|_\infty$ are uniformly bounded above by constants for any $\mathrm{z},j$. 

\noindent {\bf Assumption T4 (Random design)}. Suppose  $\bz_1,...,\bz_n$ are i.i.d. samples from a distribution with density $f(\mathrm{z})$ on a compact support, where    $|f(\mathrm{z})|$, $\dot{f}(\mathrm{z})$ and  $\ddot{f}(\mathrm{z})$ are all bounded below and above by constants.

\noindent {\bf Assumption T5 (Eigenvalue Conditions)}.  All eigenvalues of $\Sigma(\mathrm{z})$ and $\dot{\Sigma}(\mathrm{z})$ are uniformly upper and lower bounded by constants for any $\mathrm{z}$. In addition, suppose that the marginal distribution of $x_i$ with density $\int f(x \mid \Sigma(\mathrm{z}))f(\mathrm{z}) dz$  is a sub-Gaussian random vector with  covariance $\Sigma$. All eigenvalues of $\Sigma$ are also uniformly upper and lower bounded by constants. 

\noindent  {\em Assumptions on the model and prior:}

\noindent  {\bf Assumption K (Kernel property)}. Suppose the kernel function $K$ used to fit weights satisfies: $\sup_x|K(x)| \leq c_3<\infty$,  $\int K(x) dx =1$, $\int K^2(x) dx =c_0< \infty$, $\int xK(x)dx =0$,$\int x^2 K(x) dx =c_2 <\infty$.
%         \noindent{\bf Assumption B3} (Separation): For observations $i,j$ that come from different subjects, we have either $|\bz_i-z_j|=O(\hat{\tau})$ or $\hat{\tau}/|\bz_i-z_j| \rightarrow 0$.

\noindent  {\bf Assumption P}. We assume a spike-and-slab prior $p_{\beta_j \mid \gamma_j}(\beta_j)p_{\gamma_j}(\gamma_j)$ for the parameter $\theta_j=(\beta_j,\gamma_j)$ with $p_{\beta_j \mid \gamma_j}=\prod_{k=1,k \neq j}^p{\mathcal{N}(\beta_{jk};0,\sigma_*^2)}^{\gamma_{jk}}{\delta_0(\beta_{jk})}^{1-\gamma_{jk}}$, and $p
_{\gamma_j}(\gamma_j) \geq \exp\{-C{\|\gamma_j\|}_0\log p\}$, where ${\|\gamma_j\|}_0$ is the number of non-zero elements of $\gamma_j$. 

%$p_{\beta_j \mid \gamma_j}$ is continuous, and thick around the true parameter value $\beta_{j,\gamma_j^*}^*$, that is, $p_{\theta_j}(\theta_j^*) >0$. 

% {\em Discussion on the assumptions}.

{\bf Assumption T1} describes a relationship between $n,p$ and $s$. A constraint of $n \geq c s^*\log p$ is assumed to ensure that the restricted eigenvalue conditions hold for sample covariances, see \cite{raskutti2010restricted,zhou2009restricted}. In addition, $n \geq c s^*\log n$ is assumed to guarantee that the error rate is $O(1)$ in the case when $n>p$. Finally, $n = o(p^{C})$ ensures the consistency holds with a high probability when a bound for the maximum of risks  across subjects is considered. Since the risk bound for a single subject holds with probability $1-p^{-C}$, the maximum of the risk can be bounded with  probability $1-np^{-C}$ using the union bound which requires $n=O(p^C)$.  {\bf Assumption T2} posits sparsity in the first and second derivatives of $\beta(\mathrm{z})$ with $\mathrm{z}$. Essentially, the assumption implies that $\beta_{jk}(\mathrm{z})$ satisfies $\beta_{jk}(\mathrm{z})=0$ for $\mathrm{z} \in [a_{jk},b_{jk}]$, where $[a_{jk},b_{jk}]$ is a constant length interval in the support of $\beta_{jk}(\mathrm{z})$ for $j=1,...,p$, $k=1,...,(p-1)$. Here, $\beta_{jk}(\mathrm{z})$ denotes the $k$-th coordinate of $\beta_j$ as a function of $\mathrm{z}$. One such example is $(1- \mathrm{z}^2)^2 I(|\mathrm{z}| \leq 1)$ for $\mathrm{z} \in [-2,2]$, which is zero for $\mathrm{z} \in [-2,-1]$ and $\mathrm{z} \in [1,2]$. 
{\bf Assumption T3} ensures that the covariates carry information about the graph coefficients, together with some regularity conditions on the covariance matrix.  {\bf Assumption T4} asserts that the sampled covariates are representative in the sense that they are i.i.d. from some homogeneous distributions, e.g., uniform distributions on a bounded interval.     {\bf Assumption K} indicates that the kernel function should be smooth enough to capture the shared information across subjects. The {\bf Assumption P} for priors encompasses a wide variety of prior distributions, as discussed in \cite{castillo2012needles}.

\subsubsection{Assumptions for discrete covariate-dependent/covariate-independent graph models}

\noindent{\bf Assumption W}: Let  $c_{l}= \underset{k: z_{k} \neq z_{l}}{\mbox{min}}{|\bz_k -\bz_l|}$ and assume that $c_l$ is lower bounded by positive constants for $l=1,...,K$. Suppose the Gaussian kernel $K(x) \propto e^{-x^2}$ is used and the tuning parameter $\tau$ for the kernel satisfies $\tau= c \min_l c_l/\sqrt{\log n} $ for some positive constant $c<1$. Then we have the following result, for a graph estimate of an individual with covariate level $\bz_l$.

\noindent{\bf Assumption T}: We assume that the underlying data at each covariate level is generated from a homogeneous dependence structure, as described in \eqref{homogen}.
Given a covariate value $\mz$, ${\Omega}^*(\mz)$ has maximum and minimum eigenvalues bounded away from $0$ and $\infty$. Assume that $\beta_j^*(\mz)$ has at most $s_j^*$ non-zero elements, and let $s^*= \underset{j}{\max}\{ s_j^*\}$.

\noindent{\bf Assumption A}: Assume that $\beta_j^*$ has at most $s_j^*$ non-zero elements, and let $s^*= \underset{j}{\max}\{ s_j^*\}$.  
Assume that $s^* \log (np) /n \rightarrow 0$ for $j=1,...,p$.

\subsection{Proofs of main theorems}\label{ssec:proofs}
\paragraph{Notations.} We first define the following terms:
$$\klt (s, \Omega^*(\mz)) = \inf \left \{ \frac{u^{\mathrm{T}}{\Omega}^*u(\mz)}{n\|u\|_2^2}: u \in \mathbb{R}^p, 1 \leq {\|u\|}_0 \leq s\right \},$$
$$\tilde{\kappa} (s, \Omega^*(\mz))=\sup \left \{ \frac{u^{\mathrm{T}}{\Omega}^*u(\mz)}{n\|u\|_2^2}: u \in \mathbb{R}^p, 1 \leq {\|u\|}_0 \leq s\right \},$$
where $\klt (s, \Omega^*(\mz))$ is the minimum eigenvalue, and $\tilde{\kappa} (s, \Omega^*(\mz))$ is the maximum eigenvalue of ${\Omega}^*(\mz)$ for $s$-sparse matrices. 
Define $\gamma_{jk}(\mz)=\mathbb{I}\{\beta_{jk}(\mz) \neq 0\}$ which denotes the number of non-zero entries of the coefficient parameter $\beta_{jk}(\mz)$.
Also, define:
$$\klt (s,\bX) = \inf \left \{ \frac{u^{\mathrm{T}}{\bX}^{\mathrm{T}}\bX u}{\|u\|_2^2}: u \in \mathbb{R}^p, 1 \leq {\|u\|}_0 \leq s\right \}, \quad \tilde{\kappa} (s,\bX)=\sup \left \{ \frac{u^{\mathrm{T}}{\bX}^{\mathrm{T}}\bX u}{\|u\|_2^2}: u \in \mathbb{R}^p, 1 \leq {\|u\|}_0 \leq s\right \}$$
where $\klt (s,\bX)$is the minimum and $\tilde{\kappa} (s,\bX)$ is the maximum eigenvalue of $\bX$. Let $s^*=\max\{s_j^*, j=1, \ldots, p\}$, where $s_j*$ be the true sparsity structure of the $j$-th row of ${\Omega}^*(\mz)$.
Let 
$$\klb (s,\bX) = \underset{\gamma_j, {\|\gamma_j\|}_0 \leq s}{\inf} \inf \left \{ \frac{u^{\mathrm{T}}{\bX}^{\mathrm{T}}\bX u}{n\|u\|_2^2}: u \in \mathbb{R}^p, \sum_{k:\gamma_{jk=0}} |u| \leq 7 \sum_{k:\gamma_{jk}=1} |u|\right \}
$$ and
$$\klb (\Omega^*(\mz))  =  \inf \left \{ \frac{u^{\mathrm{T}}{\Omega}^*(\mz) u}{\|u\|_2^2}: u \in \mathbb{R}^p, \sum_{k:\gamma_{jk=0}} |u| \leq 7 \sum_{k:\gamma_{jk}=1} |u|\right \}
$$
Define the set $\mathcal{G}_{n,p} = \{\mz \in \mathbb{R}^{n \times p} : \tilde{\kappa} (s^*,\mz) \leq c_1 \tilde{\kappa}(s^*, \Omega^*(\mz)), \tilde{\kappa} (1,\mz) \leq c_1 \tilde{\kappa}(1, \Omega^*(\mz)),\klb(s^*,\mz) \geq c_2 \klb (\Omega^*(\mz))\}$. Then, following \cite{atchade2019quasi} we have that if $\bX \in \mathcal{G}_{n,p}$, then $\bX_{-j} \in \mathcal{G}_{n,p-1}$ for any $j \in \{ 1, 2, \ldots, p\}$ .

Let $A_{-j}$ be the submatrix of matrix $A$ except the $j$-th row and $A_{-j,-j}$ be the submatrix of matrix $A$ except the $j$-th row and column.  	We denote      \begin{equation*}
\Psi^{w}(q_{\theta^l_j}(\bz))= \int  \log \frac{p^{w}(x_j \mid \tilde{\theta}^l_j(\bz),\bX_{-j},\bz)}{p^{w}(x_j \mid {\theta}^l_j(\bz),\bX_{-j},\bz)}{q}_{\theta^l_j}(\theta^l_j(\bz)) d\theta^l_j(\bz) + \alpha^{-1} \mathrm{D_{KL}}({q}_{\theta^l_j} \| p_{\theta^l_j}).
\end{equation*}

Because subject index $l$ do not change during the proof of Theorem 1, Theorem 3, and Corollary 1, we ignore them in these proofs.

\subsubsection{Proof of Lemma~\ref{lem2}}
\begin{proof} 
	For any observation $l$, we have the marginalized KL divergence:    
	\begin{align*}
	& \int p(\bX \mid  \Theta^{*}(\bz_1),...,\Theta^{*}(\bz_n))  \log  \frac{p(\bX \mid  \Theta^{*}(\bz_1),...,\Theta^{*}(\bz_n)  ) }{\prod_{j=1}^p  p^{w_l}(x_{j} \mid \bX_{-j}, \Theta_{-j}^{l}(\bz) )}d\bX  \\         
	& =c- \sum_{j=1}^p  \int p(\bX \mid  \Theta^{*}(\bz_1),...,\Theta^{*}(\bz_n))  \log p^{w_l}  (x_{j} \mid \bX_{-j}, \Theta_{-j}^{l}(\bz) )d\bX  \\
	& = c- \sum_{j=1}^p  \int p(\bX_{-j} \mid \Theta_{-j,-j}^{*}(\bz_1),...,\Theta_{-j,-j}^{*}(\bz_n))p(x_{j} \mid \bX_{-j} ,\Theta_{-j}^{*}(\bz_1), ...,\Theta_{-j}^{*}(\bz_n))\\
	& 	\log p^{w_l}  (x_{j} \mid \bX_{-j}, \Theta_{-j}^{l}(\bz) )d x_{j} d \bX_{-j},   
	\end{align*}
	
	where $\Theta_{-j}^{l}(\bz)$ is the targeted coefficient with $l$-th covariate.
	
	Therefore, $\tilde \beta_j^{l}(\bz)$ should be the minimizer of the following objective function
	\begin{align*}
	&-\int  p(\bX_{-j} \mid \Theta_{-j,-j}^{*}(\bz_1),...,\Theta_{-j,-j}^{*}(\bz_n))p(x_{j} \mid \bX_{-j} ,\Theta_{-j}^{*}(\bz_1), ...,\Theta_{-j}^{*}(\bz_n)) \\
	& \times\log {p^{w_l}(x_j \mid \bX_{-j}, \beta^l_j(\bz),\gamma^l_j(\bz))} d x_j d \bX_{-j} \\
	& =  \E_{\bX_{-j},x_j} \left\{\sum_{k =1}^n (x_{kj} - \bx_{k,-j}^{\mathrm{T}}\beta_j^{l}(\bz))^{\mathrm{T}} \frac{\mathrm{w}_l(\bz_k)}{2\sigma_*^2} (x_{kj} -\bx_{k,-j}^{\mathrm{T}} \beta_j^{l}(\bz))\right\} \\
	& =	 \E_{\bX_{-j}} \left\{\sum_{k =1}^n (\beta_j^{*}(\bz_k) - \beta_j^{l}(\bz))^{\mathrm{T}} \frac{\mathrm{w}_l(\bz_k)}{2\sigma_*^2} \bx_{k,-j}\bx_{k,-j}^{\mathrm{T}} (\beta_j^{*}(\bz_k) - \beta_j^{l}(\bz))\right\} \\
	& =    \sum_{k =1}^n (\beta_j^{*}(\bz_k) - \beta_j^{l}(\bz))^{\mathrm{T}} \frac{\mathrm{w}_l(\bz_k)}{ 2\sigma_*^2}\Sigma^{*}_{-j,-j}(\bz_k) (\beta_j^{*}(\bz_k) - \beta_j^{l}(\bz)),    	
	\end{align*}
	under the constraint $\|\beta_j^{l}(\bz)\|_0 \leq C_0 s_j^*$ for $C_0 \geq 1$.

	Since $\|\beta_j^{*}(\bz_l)\|_0 \leq  s_j^*$ is in the constrained region, by basic inequality, we have
	\begin{align*}
	\sum_{k =1}^n (\beta_j^{*}(\bz_k) - \tilde \beta_j^{l}(\bz))^{\mathrm{T}} \frac{\mathrm{w}_l(\bz_k)}{ 2\sigma_*^2}\Sigma^{*}_{-j,-j}(\bz_k) (\beta_j^{*}(\bz_k) - \tilde \beta_j^{l}(\bz)) \\
	\leq \sum_{k =1}^n (\beta_j^{*}(\bz_k) - \beta_j^{*}(\bz_l))^{\mathrm{T}} \frac{\mathrm{w}_l(\bz_k)}{ 2\sigma_*^2}\Sigma^{*}_{-j,-j}(\bz_k) (\beta_j^{*}(\bz_k) - \beta_j^{*}(\bz_l)),
	\end{align*}
	where $\Sigma^{*}_{-j,-j}(\bz_k)$ is the submatrix of the $k$-th true covariance except the $j$-th row and column. 
	After some algebra, we have
	\begin{equation}\label{eq17}
	({ \beta_j^{l}(\bz)}^* - \tilde \beta_j^{l}(\bz))^{\mathrm{T}}  \sum_{k =1}^n {\mathrm{w}_l(\bz_k)}\Sigma^{*}_{-j,-j}(\bz_k) ({ \beta_j^{l}(\bz)}^* - \tilde \beta_j^{l}(\bz)) \leq 2 (\tilde \beta_j^{l}(\bz)-\beta_j^{*}(\bz_l)) (\sum_{k =1}^n \mathrm{w}_{l}(\bz_k)\Sigma^{*}_{-j,-j}(\bz_k) \beta_j^{*}(\bz_k)  )
	\end{equation}

	Since the eigenvalues of $\Sigma^{*}_{-j,-j}(\bz_k) $ are all lower bounded by constant,  by Weyl's inequality, we have $\lambda_{\min}(\sum_{k =1}^n \mathrm{w}_{l}(\bz_k)\Sigma^{*}_{-j,-j}(\bz_k) ) $ lower bounded by constant multiplied by $\sum_{k =1}^n \mathrm{w}_{k}$.
	
	Note that \begin{equation*}
	\begin{aligned}
	\E  \left\{\sum_{k =1}^n \mathrm{w}_{l}(\bz_k)\right\} = \E\left\{ \frac{1}{\tau} \sum_{k=1}^{n} K\left(\frac{\bz_k-\bz_l}{\tau}\right)\right\} = n \int K \left({ u}\right) f(\bz_l+\tau u) du \\
	= c_0 n f(\bz_l) +o(n).
	\end{aligned}
	\end{equation*}
	
	Therefore, we have $\E(\sum_{k =1}^n \mathrm{w}_{l}(\bz_k)) \geq  cn$ for some positive constant $c$. Applying the above lower eigenvalues and Cauchy-Schwartz in equality on equation~\eqref{eq17} after taking expectation to $\bz$, we have
	\begin{equation*}
	\E_z\| \beta_{j}^{*}(\bz_l) - \tilde \beta_j^{l}(\bz)\|^2_2 \lesssim  \E_z\|\frac{1}{n}\sum_{k =1}^n \mathrm{w}_{l}(\bz_k)\Sigma^{*}_{-j,-j}(\bz_k) (\beta_j^{*}(\bz_k)- \beta_j^{*}(\bz_l)) \|^2_2.
	\end{equation*}
	
	\begin{equation*}
	\frac{1}{n}\sum_{k =1}^n \mathrm{w}_{l}(\bz_k)\Sigma^{*}_{-j,-j}(\bz_k) (\beta_j^{*}(\bz_k)- \beta_j^{*}(\bz_l)) = \frac{c_l}{n\tau} \sum_{k=1}^{n}K\left(\frac{\bz_k-\bz_l}{\tau}\right) \Sigma_{-j,-j}^*(\bz_k)(\beta_{j}^*(\bz_k)-\beta^*_{j}(\bz_l)).
	\end{equation*}

	Then we have
	\begin{equation*}
	\begin{aligned}
	\E_{\bz_k}(\frac{1}{n}\sum_{k =1}^n \mathrm{w}_{l}(\bz_k)\Sigma^{*}_{-j,-j}(\bz_k) (\beta_j^{*}(\bz_k)- \beta_j^{*}(\bz_l))) = \frac{c_l}{\tau} \int K\left(\frac{\bz-\bz_l}{\tau}\right)\Sigma^*_{-j,-j}(\bz) \left(\beta^*_{j}(\bz)-\beta_{j}^*(\bz_l)\right)f(\bz) dz \\
	= c_l\int K(u)\Sigma^*_{-j,-j}(\bz_l+\tau u)(\beta^*_{j}(\bz_l+\tau u)-\beta_{j}^*(\bz_l)) f(\bz_l+\tau u) du.
	\end{aligned}
	\end{equation*}
	
	Expanding $\Sigma^*_{-j,-j}(\bz_l+\tau u)$, $\beta^*_{j}(\bz_l+\tau u)$ and $f(\bz_l+\tau u)$ component-wisely in Taylor expansion, we have 
	\begin{equation*}
	\begin{aligned}
	&\E(\sum_{k =1}^n \mathrm{w}_{l}(\bz_k)\Sigma^{*}_{-j,-j}(\bz_k) (\beta_j^{*}(\bz_k)- \beta_j^{*}(\bz_l)))\\
	&= c_l\int K(u)    \left\{\Sigma^*_{-j,-j}(\bz_l) +\tau \mu \dot{\Sigma}^*_{-j,-j}(\bz_l^{(1)})  \right\}\left\{\tau u \dot{\beta}_{j}^{*}(\bz_l)+ \frac{ \tau^2}{2} u^2 \ddot{\beta}_{j}^{*}(\bz_l^{(2)})\right\} \times \\
	&\left\{f(\bz_l) + u \tau\dot{f}(\bz_l^{(3)})\right\} du \\
	&=c_l\left( \int u K(u) du \right)\tau  \Sigma^{*}_{-j,-j}(\bz_l)\dot{\beta}_{j}^{*}(\bz_l) f(\bz_l) + \\
	&\left(\int u^2 K(u)du \right) \tau^2 \left(  \Sigma^{*}_{-j,-j}(\bz_l) \frac{1}{2}\ddot{\beta}_{j}^{*}(\bz_l^{(2)}) f(\bz_l) + \Sigma^{*}_{-j,-j}(\bz_l)\dot{\beta}_{j}^{*}(\bz_l)\dot{f}(\bz_l^{(3)}) + \dot{\Sigma}^*_{-j,-j}(\bz_l^{(1)}) \dot{\beta}_{j}^{*}(\bz_l) f(\bz_l)\right) +o(\tau^2) \\
	&= c_lc_2 \tau^2  \left( \dot{\Sigma}^*_{-j,-j}(\bz_l^{(1)}) \dot{\beta}_{j}^{*}(\bz_l) f(\bz_l)+  \frac{1}{2}\Sigma^{*}_{-j,-j}(\bz_l)\ddot{\beta}_{j}^{*}(\bz_l^{(2)})f(\bz_l) + \Sigma^{*}_{-j,-j}(\bz_l)\dot{\beta}_{j}^{*}(\bz_l)\dot{f}(\bz_l^{(3)})\right) +o(\tau^2).
	\end{aligned}
	\end{equation*}
	
	where $\bz_l^{(1)},\bz_l^{(2)}, \bz_l^{(3)}$ in the first equation are between $\bz_l$ and $\bz_l+\tau\mu$. Note that the $\ell_2$ norm of the reminder term is no larger than $s^*_j$ up to some constant factor given that $\dot{\beta}_{j}^{*}(\bz_l)$ and $\ddot{\beta}_{j}^{*}(\bz^{(2)}_l)$ are $s^*_j$ sparse and $\|\Sigma^{*}_{-j,-j}(\bz_l) \|_2$ and $\|\dot{\Sigma}^*_{-j,-j}(\bz_l^{(1)})\|_2$ are upper bounded by some constant.

	In addition, denote $a^2$ as the element-wise square for a vector $a$,  the variance can also be similarly calculated:
	\begin{equation*}
	\begin{aligned}
	&\mbox{Var}_{\bz}(\frac{1}{n}\sum_{k =1}^n \mathrm{w}_{l}(\bz_k)\Sigma^{*}_{-j,-j}(\bz_k) (\beta_j^{*}(\bz_k)- \beta_j^{*}(\bz_l)))\\
	&= \mbox{Var}_{\bz}\left(\frac{c_1}{n\tau} \sum_{k=1}^{n}K\left(\frac{\bz_k-\bz_l}{\tau}\right)\Sigma^*_{-j,-j}(\bz_k)(\beta_{j}^*(\bz_k))\right) \\
	& = \frac{c_l^2}{n \tau^2}\E\left[\left(K\left(\frac{\bz_k-\bz_l}{\tau}\right)\Sigma^*_{-j,-j}(\bz_k)\beta_{j}^*(\bz_k)\right) ^2\right]- \frac{1}{n^2}\{\E(\sum_{k =1}^n \mathrm{w}_{l}(\bz_k)\Sigma^{*}_{-j,-j}(\bz_k) (\beta_j^{*}(\bz_k)))\}^2 \\
	& = \frac{c_l^2}{n \tau^2} \int K\left(\frac{\bz-\bz_l}{\tau}\right)^2 (\Sigma^*_{-j,-j}(\bz_k)\beta_{j}^*(\bz_k))^2 f(\bz) dz +o(\frac{1}{n\tau}) \\
	&=\frac{c_l^2}{n \tau} \int K(u)^2  (\Sigma^*_{-j,-j}(\bz_l+u\tau)\beta^*_{j}(\bz_l+u\tau))^{2} f(\bz_l+u\tau) du  +o(\frac{1}{n\tau}) \\
	& =c_l^2 c_0\frac{  (\Sigma^{*}_{-j,-j}(\bz_l)\beta_{j}^{*}(\bz_l))^2 f(\bz_l) }{n \tau} +o(\frac{1}{n\tau}),
	\end{aligned}
	\end{equation*}
	where we use component-wisely Taylor expansion again in the last equation.  Since each component of $\|\Sigma^{*}_{-j,-j}(\bz_l)\|_2$ is bounded by constant and  $\beta_{j}^{*}(\bz_l),\dot{\beta}_{j}^{*}(\bz_l), \ddot{\beta}_{j}^{*}(\bz_l)$ are all $s_j^*$ sparse, we have $\|\Sigma^{*}_{-j,-j}(\bz_l)\ddot{\beta}_{j}^{*}(\bz_l)\|^2_2 \lesssim s_j^* $, $\|\Sigma^{*}_{-j,-j}(\bz_l)\dot{\beta}_{j}^{*}(\bz_l)\|_2^2 \lesssim s_j^*$ and $\|\Sigma^{*}_{-j,-j}(\bz_l)\beta_{j}^{*}(\bz_l)\|_2^2 \lesssim s_j^*$. Therefore, the final conclusion holds by aggregating the bias and variance. 
\end{proof}

\subsubsection{Proof of Theorem~\ref{cor3}}

\begin{proof}
	
	We first prove that given a single subject (the index is omitted for notation simplicity), and a single component $j$,  we have  with probability at least $1-c_2\exp(-c_3 n)-c_4/p^{c_0+1} -\xi$, 
	\begin{equation}\label{eq:single}
	\int \frac{1}{n}d_{\alpha}(\theta_j(\bz), \tilde{\theta}_j(\bz))\hat{q}_{\theta_j}(\theta_j(\bz)) d\theta_j(\bz) \leq C\frac{\alpha}{1-\alpha} \left( \frac{s^*_j \log(np)}{n} + \frac{s_j^*}{n^{\frac{3}{5}}} \right)  +\frac{\log(1/\xi)}{n(1-\alpha)}, 
	\end{equation}
	for positive constants $c_0, c_1,c_2,c_3,c_4,C>0$.

	Define the density function $\tilde{q}_{{\theta}}$ as the restriction of the prior $p_{\theta_j(\bz)}$ restricted in the neighborhood
	\begin{equation}\label{eq:def_theta}
	\begin{aligned}
	\mathcal{N}( \tilde{\theta}_j(\bz), \epsilon) := \{\theta_j(\bz)=(\beta_j(\bz),\gamma_j(\bz)) :  \beta_{j,k}(\bz)=0,\\
	\mbox{ for } \tilde{\beta}_{j,k}(\bz)=0, \mbox{ and } |\beta_{j,k}(\bz) -\tilde{\beta}_{j,k}(\bz)| \leq c_0 \tau \epsilon/\sqrt{s_j^* } \mbox{ for } \tilde{\beta}_{j,k}(\bz) \ne 0\}
	\end{aligned}
	\end{equation}  with $\epsilon = \sqrt{s_j^* \log(np)/n}+\sqrt{s_j^*n^{-3/5}}$ for small enough constant $c_0>0$ and $\tau =n^{-1/5}$. 	Then the measure $\tilde{q}_{\beta_j(\bz),\gamma_j(\bz)}$ belongs to the specified variational family.  The choice of $\epsilon$ is decided by the rate of the misspecified KL ball, which is upper bounded by $\|\mathrm W^{1/2}_l\bX_{-j}(\tilde \beta^l_j(\bz)-\beta_j^{*}(\bz_l))\|^2_2$ as shown in Lemma~\ref{lemC2}.
	
	First, by Lemma~\ref{lemC1}, it follows with probability at least $1-\xi$, we have  
	\begin{equation*}
	\int \frac{1}{n}d_{\alpha,\tilde{\theta}_j(\bz)}(\theta_j(\bz), \tilde{\theta}_j(\bz))\hat{q}_{\theta_j}(\theta_j(\bz)) d\theta_j(\bz) \leq \frac{\alpha}{n(1-\alpha)}  \Psi^{w}(q_{\theta_j(\bz)}) +\frac{\log(1/\xi)}{n(1-\alpha)},
	\end{equation*} 
	for any measure $\hat{q}_{\theta_j} \ll p_{\theta_j(\bz)}$.

	Then, by  Lemma~\ref{lemC2}, we have with probability $1-c_1\exp(-c_2n)-c_3/p^{c_0+1}$,
	$$-\int  \log \left\{ \frac{p^{w}(x_j \mid \theta_j(\bz),\bX_{-j},\bz)} {p^{w}(x_j \mid \tilde{\theta}_j(\bz),\bX_{-j},\bz)}\right\} \tilde{q}_{\theta_j}(\theta_j(\bz)) d\theta_j(\bz) < Dn\epsilon^2.$$
	Finally, by the KL divergence of restricted measure vs. original measure, we have $\mathrm{D_{KL}}(\tilde{q}_{\theta_j} \|  p_{\theta_j(\bz)}) = -\log (p(\theta \in \mathcal{N}( \tilde{\theta}_j(\bz), \epsilon)) )  \lesssim s_j^* \log p + s_j^* \log((s_j^*)^{1/2} \tau^{-1}/\epsilon) \lesssim n\epsilon^2$.  Then equation~\eqref{eq:single} holds given that $  \Psi^{w}(\hat{q}_{\theta_j}) \leq   \Psi^{w}(\tilde{q}_{\theta_j})$.
	
	Given conclusion in equation~\eqref{eq:single}, for each $j$, we choose $\xi = (np)^{-c_5 s^*}$ such that $\log(1/\xi) = c_5 s^* \log(np)$. Then by union bound for $\theta_j(\bz)$, $j=1,...,p$, we have with probability at least,
	$1-c_2p\exp(-c_3 n)-c_4/p^{c_0} -pe^{-  c_5 s^*\log(np) }$, 
	\begin{equation}\label{eq:sub}
	\max_{j=1,...,p}\int \frac{1}{n}d_{\alpha}(\theta_j(\bz), \tilde{\theta}_j(\bz))\hat{q}_{\theta_j}(\theta_j(\bz)) d\theta_j(\bz) \leq C\frac{1+\alpha}{1-\alpha} \left( \frac{s^* \log(np)}{n} + \frac{s^*}{n^{\frac{3}{5}}} \right) 
	\end{equation}
	for positive constants $c_0, c_1,c_2,c_3,c_4,C>0$.
	Finally, by the union bound applying across subject $l=1,...,n$, we have the conclusion of the theorem.
\end{proof}

\subsubsection{Proof of Lemma~\ref{lem:2r}}
\begin{proof} 
	Similarly, for any observation $l$, we have the marginalized KL divergence:    
	\begin{align*}
	& \int p(\bX \mid  \Theta^{*}(\bz_1),...,\Theta^{*}(\bz_n))  \log  \frac{p(\bX \mid  \Theta^{*}(\bz_1),...,\Theta^{*}(\bz_n)  ) }{\prod_{j=1}^p  p^{w_l}(x_{j} \mid \bX_{-j}, \Theta_{-j}^l(\bz) )}d\bX  \\         
	& =c- \sum_{j=1}^p  \int p(\bX \mid  \Theta^{*}(\bz_1),...,\Theta^{*}(\bz_n))  \log p^{w_l}  (x_{j} \mid \bX_{-j}, \Theta_{-j}^l(\bz) )d\bX  \\
	& = c- \sum_{j=1}^p  \int p(\bX_{-j} \mid \Theta_{-j,-j}^{*}(\bz_1),...,\Theta_{-j,-j}^{*}(\bz_n))p(x_{j} \mid \bX_{-j} ,\Theta_{-j}^{*}(\bz_1), ...,\Theta_{-j}^{*}(\bz_n)) \\
	&\log p^{w_l}  (x_{j} \mid \bX_{-j}, \Theta_{-j}^l(\bz) )d x_{j} d \bX_{-j}.  
	\end{align*}
	
	Therefore, $\tilde \beta_j^{l}(\bz)$ should be the minimizer of the following objective function
	\begin{align*}
	&-\int  p(\bX_{-j} \mid \Theta_{-j,-j}^{*}(\bz_1),...,\Theta_{-j,-j}^{*}(\bz_n))p(x_{j} \mid \bX_{-j} ,\Theta_{-j}^{*}(\bz_1), ...,\Theta_{-j}^{*}(\bz_n)) \\
	&\times \log {p^{w_l}(x_j \mid \bX_{-j}, \beta^l_j(\bz),\gamma^l_j(\bz))} d x_j d \bX_{-j} \\
	& =  \E_{\bX_{-j},x_j} \left\{\sum_{k =1}^n (x_{kj} - \bx_{k,-j}^{\mathrm{T}}\beta_j^{l}(\bz))^{\mathrm{T}} \frac{\mathrm{w}_l(\bz_k)}{2\sigma_*^2} (x_{kj} -\bx_{k,-j}^{\mathrm{T}} \beta_j^{l}(\bz))\right\} \\
	& =	 \E_{\bX_{-j}} \left\{\sum_{k =1}^n (\beta_j^{*}(\bz_k) - \beta_j^{l}(\bz))^{\mathrm{T}} \frac{\mathrm{w}_l(\bz_k)}{2\sigma_*^2} \bx_{k,-j}\bx_{k,-j}^{\mathrm{T}} (\beta_j^{*}(\bz_k) - \beta_j^{l}(\bz))\right\} \\
	& =    \sum_{k =1}^n (\beta_j^{*}(\bz_k) - \beta_j^{l}(\bz))^{\mathrm{T}} \frac{\mathrm{w}_l(\bz_k)}{ 2\sigma_*^2}\Sigma^{*}_{-j,-j}(\bz_k) (\beta_j^{*}(\bz_k) - \beta_j^{l}(\bz)),    	
	\end{align*}
	under the constraint $\|\beta_j^{l}(\bz)\|_0 \leq C_0 s_j^*$ for $C_0 \geq 1$.
	
	Since $\|\beta_j^{*}(\bz_l)\|_0 \leq  s_j^*$ is in the constrained region, by basic inequality, we have
	\begin{align*}
	\sum_{k =1}^n (\beta_j^{*}(\bz_k) - \tilde \beta_j^{l}(\bz))^{\mathrm{T}} \frac{\mathrm{w}_l(\bz_k)}{ 2\sigma_*^2}\Sigma^{*}_{-j,-j}(\bz_k) (\beta_j^{*}(\bz_k) - \beta_j^{l}(\bz)) \\
	\times \leq \sum_{k =1}^n (\beta_j^{*}(\bz_k) - \beta_j^{*}(\bz_l))^{\mathrm{T}} \frac{\mathrm{w}_l(\bz_k)}{ 2\sigma_*^2}\Sigma^{*}_{-j,-j}(\bz_k) (\beta_j^{*}(\bz_k) - \beta_j^{*}(\bz_l)),
	\end{align*}
	where $\Sigma^{*}_{-j,-j}(\bz_k)$ is the submatrix of kth true covariance except jth row and column. 
	After some algebra, we have
	\begin{equation*}
	({\beta_j^{l}(\bz)}^* - \tilde \beta_j^{l}(\bz))^{\mathrm{T}}  \sum_{k =1}^n {\mathrm{w}_l(\bz_k)}\Sigma^{*}_{-j,-j}(\bz_k) ({\beta_j^{l}(\bz)}^* - \tilde \beta_j^{l}(\bz)) \leq 2 (\beta_j^{l}(\bz)-\beta_j^{*}(\bz_l)) (\sum_{k =1}^n \mathrm{w}_{l}(\bz_k)\Sigma^{*}_{-j,-j}(\bz_k) \beta_j^{*}(\bz_k)  )
	\end{equation*}

	Since the eigenvalues of $\Sigma^{*}_{-j,-j}(\bz_k) $ are all lower bounded by constant,  by Weyl's inequality, we have $\lambda_{\min}(\sum_{k =1}^n \mathrm{w}_{l}(\bz_k)\Sigma^{*}_{-j,-j}(\bz_k) ) $ lower bounded by constant multiplied by $\sum_{k =1}^n \mathrm{w}_{k}$.

	Suppose that $\bz$ takes $K$ distinct values $z_0^1, z_0^2, \ldots, z_0^K$.

	Note that \begin{equation*}
	\begin{aligned}
	\left\{\sum_{k =1}^n \mathrm{w}_{l}(\bz_k)\right\} = \left\{ \frac{1}{\tau} \sum_{k=1}^{n} K\left(\frac{\bz_k-\bz_l}{\tau}\right)\right\}.
	\end{aligned}
	\end{equation*}
	For a Gaussian kernel, the terms $K((\bz_l - \bz_k)/\tau)$ are bounded away from zero for $\bz_k=\bz_l$, and hence $\sum_{k=1}^n \mathrm{w}_k(\bz_l)\geq c_0 n_l/\tau$ for some positive constant $c_0$.
	Applying the above lower eigenvalues and Cauchy-Schwartz in equality on equation~\eqref{eq17} after taking expectation to $\bz$, we have
	\begin{equation*}
	\| \beta_j^{*}(\bz_k) - \tilde \beta_j^{l}(\bz)\|^2_2 \lesssim  \|\frac{\tau}{n_l}\sum_{k =1}^n \mathrm{w}_{l}(\bz_k)\Sigma^{*}_{-j,-j}(\bz_k) (\beta_j^{*}(\bz_k)- \beta_j^{*}(\bz_l)) \|^2.
	\end{equation*}

	Now, we have     
	
	\begin{equation*}
	\frac{\tau}{n_l}\sum_{k =1}^n \mathrm{w}_{l}(\bz_k)\Sigma^{*}_{-j,-j}(\bz_k) (\beta_j^{*}(\bz_k)- \beta_j^{*}(\bz_l)) = \frac{1}{n_l} \sum_{k=1}^{n}K\left(\frac{\bz_k-\bz_l}{\tau}\right) \Sigma^*_{k,-j,-j}(\beta_{j}^*(\bz_k)-\beta^*_{j}(\bz_l)).
	\end{equation*}
	
	Let  $c_{l}= \underset{k: z_{k} \neq z_{l}}{\mbox{min}}{|\bz_k -\bz_l|}$.    Then we have, using the fact that the kernel is Gaussian,
	\begin{equation*}
	\begin{aligned}
	\frac{\tau}{n_l}\sum_{k =1}^n \mathrm{w}_{l}(\bz_k)\Sigma^{*}_{-j,-j}(\bz_k) (\beta_j^{*}(\bz_k)- \beta_j^{*}(\bz_l)) \leq \frac{c}{n_l} \sum_{l:\bz_k \ne \bz_l} K \left(\frac{\bz_l - \bz_k}{\tau}\right) \Sigma^{*}_{-j,-j}(\bz_k) (\beta_j^{*}(\bz_k)- \beta_j^{*}(\bz_l)) \\
	\leq \frac{c}{n_l}  \exp{ (- c_{l}^2/\tau^2)}  \sum_{l:\bz_k \ne \bz_l} \Sigma^{*}_{-j,-j}(\bz_k) (\beta_j^{*}(\bz_k)- \beta_j^{*}(\bz_l)) .
	\end{aligned}
	\end{equation*}
	where $c$ is a positive constant that changes between steps but does not affect the overall rate.
	Also, note that  $\|\Sigma^{*}_{-j,-j}(\bz_l) \|_2$ is upper bounded by some constant. 	
	Therefore, given the sparsity of $\beta_j^{*}(\bz_k)- \beta_j^{*}(\bz_l)$ and bounded eigenvalues of $\Sigma^{*}_{-j,-j}(\bz_k)$,  we have the $\ell_2$ norm of right hand side of the above inequality is bounded by $c (n-n_l)\sqrt{s_j^*}$, therefore   	
	\begin{equation*}
	\|\tilde \beta_j^{l}(\bz)- \beta_{j}^{*}(\bz_l) \|^2_2 \leq c  \exp{ (- 2c_{l}^2/\tau^2+ 2\log(n/n_l-1))} s_j^*,
	\end{equation*}
	which converges to zero faster than $s_j^*/n$ as long as $c_{l}^2/\tau^2>\log(n) $.
	
\end{proof}

\subsubsection{Proof of Lemma \ref{lemwtequiv}}
\begin{proof}
	
	Based on the proof of Lemma~\ref{lem:2r}, we have  $\tilde \beta_j^{l}(\bz)$ should be the minimizer of the following objective function
	\begin{align*}
	\sum_{k =1}^n (\beta_j^{*}(\bz_k) - \beta_j^{l}(\bz))^{\mathrm{T}} \frac{\mathrm{w}_l(\bz_k)}{ 2\sigma_*^2}\Sigma^{*}_{-j,-j}(\bz_k) (\beta_j^{*}(\bz_k) - \beta_j^{l}(\bz)).   	
	\end{align*}
	Note that under the homogeneous assumption $\beta^{1*}=\beta^{2*}=...\beta^{n*}=\beta^*$,  the objective function becomes
	
	\begin{align*}
	({\beta_{j}}^{*} - \beta_j^{l}(\bz))^{\mathrm{T}} \frac{n}{ 2\sigma_*^2}\Sigma^{*}_{-j,-j} ({\beta_{j}}^{*} - \beta_j^{l}(\bz)).   	
	\end{align*}
	Given that $\Sigma^{*}_{-j,-j}$ is positive definite,  the Kullback-Leibler minimizer satisfies $\tilde{\beta}_j=\beta_j^*$. 
\end{proof}

\subsubsection{Proof of Theorem \ref{thm2}}

We have, 
\begin{eqnarray*}
	\E_{-j}\E_j \exp \left \{ \alpha \frac{p^w(x_j \mid \bX_{-j}, \tilde{\theta}_j(\bz))}{p^w(x_j \mid \bX_{-j}, {\theta}_j(\bz))}\right \} = \exp \left \{ -(1-\alpha) d_{\alpha, \theta_j^*}(\theta_j(\bz), \tilde{\theta}_j(\bz) \mid \bX_{-j}) \right \}.
\end{eqnarray*}
Following the steps of Lemma \ref{lem1}, we have
\begin{equation}
\begin{split}
\E_{-j} \Bigg [ P \bigg (\int (1-\alpha)d_{\alpha,\theta_j^*}(\theta_j(\bz), \tilde{\theta}_j(\bz))\hat{q}_{\theta_j}(\theta_j(\bz)) d\theta_j(\bz) \leq& -\alpha  \int \log \frac{p^w(\xj \mid \tilde{\theta}_j(\bz),\bX_{-j})}{p^w(\xj \mid {\theta}_j(\bz),\bX_{-j})}\hat{q}_{\theta_j}(\theta_j(\bz)) d\theta_j(\bz)\\
&+ \mathrm{D_{KL}} (\hat{q}_{\theta_j} \| p_{\theta_j}) + \log \left(\frac{1}{\zeta}\right) \bigg ) \Bigg ] \geq 1 - \zeta.
\end{split}
\label{expprobwt}
\end{equation}

Define a specific Kullback-Leibler ball around $\tilde{\theta}_j(\bz)$ as
\begin{equation*}
\begin{split}
\mathcal{B}_{n, \theta_j^*}(\tilde{\theta}_j,\epsilon, \bX_{-j}) =&  \left \{ \theta_j : \int \log \frac{p^w(x_j \mid \bX_{-j}, \tilde{\theta}_j(\bz))}{p^w(x_j \mid \bX_{-j}, {\theta}_j)}p(x_j \mid \bX_{-j}, \theta_j^*) d x_j \leq n\epsilon^2, \right. \\
&\left.  \int \log^2 \frac{p^w(x_j \mid \bX_{-j}, \tilde{\theta}_j(\bz))}{p^w(x_j \mid \bX_{-j}, {\theta}_j)}p(x_j \mid \bX_{-j}, \theta_j^*) dx_j \leq n \epsilon^2\right \}.
\end{split}
\end{equation*}

Next, define the set 
$$
\mathcal{A}^w(\bX_{-j},\epsilon)= \left \{x_j : \int  \log \frac{p^w(\xj \mid \tilde{\theta}_j(\bz),\bX_{-j},\bz)}{p^w(\xj \mid {\theta}_j(\bz),\bX_{-j},\bz)}{q}_{\theta_j}(\theta_j(\bz)) d\theta_j(\bz) \leq Dn \epsilon^2\right \} 
$$ for some positive constant $D$. Following the steps of Lemma \ref{lema2}, we have
\begin{eqnarray*}
	P( x_j \notin \mathcal{A}^w(\bX_{-j},\epsilon) )\leq \frac{c_2}{p^{c_1+1}}.
\end{eqnarray*}  

Let the precision matrix of the $p$-variate data generating distribution be $s^*$-sparse and have eigenvalues bounded away from $0$ and $\infty$. Then, it follows that for any $\zeta \in (0,1)$ and $n \geq a_1 s^* \log p$, and any measure $q_{\theta_j} \in \Gamma$ such that $q_{\theta_j} \ll p_{\theta_j}$, we have  
\begin{equation*}
\begin{aligned}
P \Bigg (\int \frac{1}{n}d_{\alpha,\theta_j^*}(\theta_j(\bz), \tilde{\theta}_j(\bz))\hat{q}_{\theta_j}(\theta_j(\bz)) d\theta_j(\bz) \leq \frac{\alpha}{n(1-\alpha)}  \Psi(q_{\theta_j}) \\ + \frac{1}{n(1-\alpha)}\log (1/\zeta) \Bigg )
\geq 1 - \zeta - \frac{c_2}{p^{c_1+1}} - \exp\{-a_2n\}.
\end{aligned}
\end{equation*} for some positive constants $a_1$, $D$ and $a_2$.
In the covariate-independent setup, since $\theta_j^*=\tilde\theta_j(\bz),$ the variational estimate $\hat{q}_{\theta_j}(\theta_j(\bz))$ assumes the following form:
\begin{eqnarray*}
	\begin{aligned}
		\hat{q}_{\theta_j}(\theta_j(\bz))=\underset{q_{\beta_j,\gamma_j}=\prod_{k=1}^p q_{\beta_{jk},\gamma_{jk}}}{\argmin} \Bigg \{-\int \sum_{\delta \in {\{0,1\}}^{p-1}}  \log \frac{p^w(\xj \mid \beta_j(\bz),\gamma_j(\bz),\bX_{-j})}{p^w(\xj \mid \beta_j^*,\gamma_j^*,\bX_{-j})}{q}_{\theta_j}(\theta_j(\bz)) d\theta_j(\bz) + \\
		\alpha^{-1} \mathrm{D_{KL}}({q}_{\theta_j}(\bz) \| p_{\theta_j})\Bigg \}.
	\end{aligned}
\end{eqnarray*}
Since we have $x_j \in \mathcal{A}(\bX_{-j},\epsilon),$ we have
$\int  \log \left\{ p(\xj \mid \theta_j(\bz),\bX_{-j})/p(\xj \mid \theta_j^*,\bX_{-j})\right\}{q}_{\theta_j}(\theta_j(\bz)) d\theta_j(\bz) < Dn\epsilon^2$.

Define the density function $q_{\beta_j,\gamma_j}^*$ as the restriction of the prior in the neighborhood $\mathcal{N}_n ( \theta_j^*, \epsilon)=\{\theta_j=(\beta_j,\gamma_j) :\beta_{j,\gamma_j^*=0}=0; |\beta_{j,\gamma_j} - \beta_{j,\gamma_j}^*| < c_0   \tau \epsilon/\sqrt{s_j^* }, \mbox{for} \, \gamma_j^* \ne 0\}$, where $c_0$ is a sufficiently small constant. 
Then the measure $q_{\beta_j^*,\gamma_j^*}$ belongs to the variational family. Following the steps of the proof of Corollary \ref{indep}, we have the statement of Theorem \ref{thm2}.

\subsubsection{Proof of Corollary \ref{indep}}
When we model the covariate levels independently, the covariate values themselves have no effect on the analysis. This scenario results in Kullback-Leibler balls around the true parameter since the models are well-specified, corresponding to the weights being one for all observations. That is, $p(\xj \mid \theta_j(\bz),\bX_{-j})$ corresponds to \eqref{condwt} with $\mathrm{W}$ as the identity matrix.
Consider the following term
\begin{equation}
\E_{\theta_j^*} \left [ \exp \left (\alpha \log \frac{p(\bX \mid \theta_j(\bz))}{p(\bX \mid \theta_j^*)} \right)\right ] = \sum_{j=1}^n \E_{-j} \left [\E_j \left \{ \exp \left (\alpha \log \frac{p(x_j \mid \theta_j(\bz))}{ p(x_j \mid \theta_j^*)} \right) \Bigm\vert \bX_{-j}\right \} \right ].
\label{exp}
\end{equation}
Note that the expectation on the left hand side of \eqref{exp} is with respect to the original data distribution (multivariate Gaussian), whereas the expression within the $j$-th expectation is with respect to the conditional distribution ${p}(x_j \mid \bX_{-j},\theta_j^*).$ We focus on the $j$-th term on the right hand side, given $\bX_{-j}$. Thus,
$$
\E_j \left [ \exp \left (\alpha \log \frac{p(x_j \mid \theta_j(\bz))}{ p(x_j \mid \theta_j^*)} \right) \Bigm\vert \bX_{-j} \right ] = \exp \left \{ -(1- \alpha)\mathrm{D}_{\alpha}(\theta_j(\bz), \theta_j^*| \bX_{-j})\right \}.
$$
Next we have, for well-specified models,
\begin{equation}
\Psi(q_{\theta_j}(\bz))= -\int  \log \frac{p(\xj \mid \theta_j(\bz),\bX_{-j})}{p(\xj \mid \theta_j^*,\bX_{-j})}{q}_{\theta_j}(\theta_j(\bz)) d\theta_j(\bz) + \alpha^{-1} \mathrm{D_{KL}}({q}_{\theta_j}(\bz) \| p_{\theta_j}).
\label{psiq}
\end{equation}
Then by Lemma~\ref{lem1}, define an $\epsilon$-ball around the true parameter $\theta_j$ as
\begin{eqnarray*}
	\begin{aligned}
		\mathcal{B}_{n} (\theta_j^*, \epsilon, \bX_{-j})= & \left \{\theta_j : \mathrm{D_{KL}}(p(x_j \mid \bX_{-j},\theta_j^*) \| p(x_j \mid \bX_{-j} ,\theta_j)) \leq n\epsilon^2, \right.\\
		& \left. V(p(x_j \mid \bX_{-j},\theta_j^*) \| p(x_j \mid \bX_{-j} ,\theta_j)) \leq n\epsilon^2 \right \}.
	\end{aligned}
\end{eqnarray*}
Here $V(p \, \| \, q)= \int p \log^2 (p/q) dx$ is a discrepency measure called $V$-divergence.
Next, define the following set $\mathcal{A}(\bX_{-j}, \epsilon)$ as
\[
\mathcal{A}(\bX_{-j},\epsilon)=\left \{ x_j : \int \hat{q}_{\theta_j}(\theta_j(\bz)) \log \frac{p(\xj \mid \theta_j(\bz),\bX_{-j})}{p(\xj \mid \theta_j^*,\bX_{-j})} d\theta_j(\bz) \leq Dn\epsilon^2\right \}
\] for some positive constant $D>1$. 

Next, define the set $\mathcal{C}(\bX)=\{ \bX:\tilde{\mbox{A}}(\bX_{-j},x_j) \leq 0, x_j \in \mathcal{A}(\bX_{-j},\epsilon), \bX_{-j} \in \mathcal{G}_{n,p-1})$\}. 
Consider the current problem of high dimensional Bayesian linear regression with spike-and-slab priors for the coefficient parameters. Using the mean-field variational family for the parameter $\theta_j=(\beta_j,\gamma_j)$, we have the following form for $q_{\theta_j}(\theta_j(\bz))$.
\begin{eqnarray*}
	q_{\theta_j}(\theta_j(\bz))=\prod_{k=1}^{p-1} q_{\beta_{jk},\gamma_{jk}}(\beta_{jk}(\bz),\gamma_{jk}(\bz))\\
\end{eqnarray*}

Now, consider the term $\Psi(q_{\theta_j})$ as in \eqref{psiq}. It is combination of a model fit term and a regularization term. In the current setup, the variational estimate $\hat{q}_{\theta_j}(\theta_j)$ assumes the following form:
\begin{eqnarray*}
	\hat{q}_{\theta_j}(\theta_j)=\underset{q_{\beta_j,\gamma_j}=\prod_{k=1}^p q_{\beta_{jk},\gamma_{jk}}}{\argmin} \Bigg \{-\int \sum_{\delta \in {\{0,1\}}^{p-1}}  \log \frac{p(\xj \mid \beta_j(\bz),\gamma_j(\bz),\bX_{-j})}{p(\xj \mid \beta_j^*,\gamma_j^*,\bX_{-j})}{q}_{\theta_j}(\theta_j(\bz)) d\theta_j(\bz) + \alpha^{-1} \mathrm{D_{KL}}({q}_{\theta_j} \| p_{\theta_j})\Bigg \}.
\end{eqnarray*}
In the set $\mathcal{C}(\bX)$, we have $x_j \in \mathcal{A}(\bX_{-j},\epsilon).$ Therefore, 
$\int  \log \left \{ p(\xj \mid \theta_j(\bz),\bX_{-j})/p(\xj \mid \theta_j^*,\bX_{-j})\right \}{q}_{\theta_j}(\theta_j(\bz)) d\theta_j(\bz) < Dn\epsilon^2$.

Define the density function $q_{\beta_j,\gamma_j}^*$ as the restriction of the prior in the neighborhood $\mathcal{N}_n ( \theta_j^*, \epsilon)=\{\theta_j=(\beta_j,\gamma_j) :\beta_{j,\gamma_j^*=0}=0; |\beta_{j,\gamma_j} - \beta_{j,\gamma_j}^*| < c_0   \tau \epsilon/\sqrt{s_j^* }, \mbox{for} \, \gamma_j^* \ne 0\}$, where $c_0$ is a sufficiently small constant. 
Then the measure $q_{\beta_j,\gamma_j^*}$ belongs to the variational family.

\begin{comment}
Now,
\begin{eqnarray*}
{\|x_j - \bX_{-j}\theta_j^*\|}^2 - {\|x_j - \bX_{-j}\theta_j\|}^2= -{(\theta_j^* - \theta_j)}^{\mathrm{T}} \bX_{-j}^{\mathrm{T}}\bX_{-j}(\theta_j^* - \theta_j) + 2{(x_j - \bX_{-j}\theta_j)}^{\mathrm{T}}\bX_{-j}(\theta_j^* -\theta_j).
\end{eqnarray*}
\end{comment}
If $\bX_{-j} \in \mathcal{G}_{n,p-1}$, we have
\begin{eqnarray*}
	\begin{aligned}
		c_2 \klt (s^*,\bX_{-j}){(\beta_j^* - \beta_j(\bz))}^{\mathrm{T}}  (\beta_j^* - \beta_j(\bz)) \leq {(\beta_j^* - \beta_j(\bz))}^{\mathrm{T}} \bX_{-j}^{\mathrm{T}}\bX_{-j}(\beta_j^* - \beta_j(\bz))\\
		\leq c_1\tilde{\kappa}(s^*,\bX_{-j}){(\beta_j^* - \beta_j(\bz))}^{\mathrm{T}} (\beta_j^* - \beta_j(\bz)).
	\end{aligned}
\end{eqnarray*}
Then, $\mathcal{N}_n ( \theta_j^*, \epsilon) \subset \mathcal{B}_n(\theta_j^*,\epsilon, \bX_{-j})$. Note that since $\bX_{-j} \in \mathcal{G}_{n, p-1}$, by the volume of the neighborhood $\mathcal{N}_n ( \theta_j^*, \epsilon)$,  we have $\mathrm{D_{KL}} ({q}_{\theta_j} \| p_{\theta_j})<- \log p_{\theta_j}[\mathcal{N}_n ( \theta_j^*, \epsilon)]/n(1-\alpha) < \frac{s_j^*}{n(1-\alpha)} \log (s_j^*/\epsilon) + \log p_{\gamma_j}(\gamma_j^*) $ where $s_j^*$ is the number of non-zero entries. Since, $- \log p_{\gamma_j}(\gamma_j^*) \leq s_j^*\log p$, based on Lemma~\ref{lema2}, it follows that with probability at least $1 - \zeta - \frac{c_2}{p^{c_1+1}} - \exp\{-a_2n\}$, for some positive constants $a_1, a_2$ and $D$,
\begin{equation*}
\int \frac{1}{n}d_\alpha(\theta_j(\bz), \theta_j^*)\hat{q}_{\theta_j}(\theta_j(\bz)) d\theta_j(\bz) \leq \frac{\alpha \epsilon^2}{(1-\alpha)} + \frac{s_j^*}{n(1-\alpha)} \log (\frac{s_j^*}{\epsilon})   + \frac{1}{n(1-\alpha)}s_j^* \log p + \frac{1}{n(1-\alpha)}\log \left(\frac{1}{\zeta}\right).
\end{equation*} for all $j \in \{1,2, \ldots, p\}$.
The statement of the Corollary follows from noting that $d_{\alpha}(\Theta(\bz),\Theta^*)=\underset{j}{\max} \, d_\alpha (\theta_j(\bz),\theta_j^*),$ and that $\hat{q}_{\Theta}(\Theta(\bz)) = \prod_{j=1}^p\hat{q}_{\theta_j}(\theta_j(\bz))$, and replacing $n$ with $n_l$. 

\subsection{Auxiliary results}\label{ssec:aux}

\begin{lem}\label{lemC1} Under {Assumptions} in Theorem~\ref{cor3}, for any variational estimate $\hat{q}_{\theta_j}(\theta_j(\bz))$ such that $\hat{q}_{\theta_j} \ll p_{\theta_j}$, we have
	\begin{equation*}
	\begin{aligned}
	P \bigg (\int (1-\alpha)d_\alpha (\theta_j(\bz), \tilde{\theta}_j(\bz))\hat{q}_{\theta_j}(\theta_j(\bz)) d\theta_j(\bz) \leq& -\alpha  \int \log \frac{p^{w}(x_j \mid \theta_j(\bz),\bX_{-j},\bz)}{p^{w}(x_j \mid \tilde{\theta}_j(\bz),\bX_{-j},\bz)}\hat{q}_{\theta_j}(\theta_j(\bz)) d\theta_j(\bz) + \\
	&\mathrm{D_{KL}} (\hat{q}_{\theta_j} \| p_{\theta_j}) + \log \left(\frac{1}{\zeta}\right) \bigg ) \geq 1 - \zeta.
	\end{aligned}
	\end{equation*}
	\label{lem3}
\end{lem}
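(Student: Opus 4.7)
The plan is to prove this via the standard PAC--Bayes change-of-measure argument adapted from \cite{yang2020alpha}, using three ingredients: (i) an exponential moment identity producing a random functional (in $\theta_j$) with pointwise unit expectation, (ii) Markov's inequality, and (iii) the Donsker--Varadhan variational representation of the Kullback--Leibler divergence. This is the same template as in the $\alpha$-variational inference literature, with extra care needed because our likelihood is weighted and misspecified.

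The key step is (i). I would define
\[
M(\theta_j) = \exp\!\bigl\{(1-\alpha)\, d_\alpha(\theta_j(\bz), \tilde\theta_j(\bz))\bigr\} \cdot \Bigl(p^{w}(x_j \mid \tilde\theta_j(\bz), \bX_{-j}, \bz) \big/ p^{w}(x_j \mid \theta_j(\bz), \bX_{-j}, \bz)\Bigr)^{\alpha},
\]
and verify $\E\, M(\theta_j) = 1$ for every fixed $\theta_j$. Conditioning on $\bX_{-j}$ first, the paper's definition of the weighted divergence $D_{\theta_j^*,\alpha}$ gives $\E_j\{(p^{w}(\tilde\theta_j(\bz))/p^{w}(\theta_j(\bz)))^{\alpha} \mid \bX_{-j}\} = \exp\{-(1-\alpha) D_{\theta_j^*,\alpha}(\theta_j(\bz), \tilde\theta_j(\bz) \mid \bX_{-j}, \bz)\}$; taking the outer expectation over $\bX_{-j}$ and invoking the definition of $d_\alpha$ produces the matching factor $\exp\{-(1-\alpha)\, d_\alpha(\theta_j(\bz), \tilde\theta_j(\bz))\}$, which exactly cancels the deterministic multiplier in $M(\theta_j)$. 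The two nested definitions of $D_{\theta_j^*,\alpha}$ and $d_\alpha$ are calibrated precisely so that this telescoping holds.

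Step (ii) is immediate: Fubini's theorem gives $\E \int M(\theta_j)\, p_{\theta_j}(d\theta_j) = 1$, so Markov's inequality says $\int M(\theta_j)\, p_{\theta_j}(d\theta_j) \leq 1/\zeta$ with probability at least $1-\zeta$, and on this event $\log\!\int M(\theta_j)\, p_{\theta_j}(d\theta_j) \leq \log(1/\zeta)$. Step (iii) uses the Donsker--Varadhan variational identity,
\[
\int \log M(\theta_j)\, \hat q_{\theta_j}(d\theta_j) \leq \mathrm{D_{KL}}(\hat q_{\theta_j} \| p_{\theta_j}) + \log\!\int M(\theta_j)\, p_{\theta_j}(d\theta_j),
\]
valid for every $\hat q_{\theta_j} \ll p_{\theta_j}$. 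Substituting the Markov bound and expanding $\log M(\theta_j) = (1-\alpha)\, d_\alpha(\theta_j(\bz), \tilde\theta_j(\bz)) + \alpha \log\{p^{w}(\tilde\theta_j(\bz))/p^{w}(\theta_j(\bz))\}$, rewriting the log-likelihood ratio as $-\alpha \log\{p^{w}(\theta_j(\bz))/p^{w}(\tilde\theta_j(\bz))\}$ to match the form in the lemma, and isolating the $(1-\alpha) \int d_\alpha\, \hat q_{\theta_j}(d\theta_j)$ term on the left, gives the claimed inequality.

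There is no genuine analytic difficulty. The only delicate part is the signed bookkeeping in Step (i), where the precise orientation of the likelihood ratio in the paper's definition of $D_{\theta_j^*,\alpha}$, the base measure $p(\,\cdot \mid \bX_{-j}, \theta_j^*(\bz_l))$ (which is the true distribution, not the misspecified $p^{w}$), and the order of the nested conditional and marginal expectations must all line up so that $\E\, M(\theta_j)$ telescopes exactly to one. Once the exponential moment identity is in hand, Markov's inequality and Donsker--Varadhan produce the conclusion essentially verbatim.
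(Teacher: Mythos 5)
Your template is exactly the paper's: a functional $M(\theta_j)$ with unit expectation, Fubini, Markov's inequality on $\int M\,dp_{\theta_j}$, and the Donsker--Varadhan representation, in that order. The one place the argument as written does not go through is the orientation of the likelihood ratio inside $M(\theta_j)$ --- precisely the ``signed bookkeeping'' you flagged as delicate. With your choice $M(\theta_j)=\exp\{(1-\alpha)d_\alpha(\theta_j(\bz),\tilde\theta_j(\bz))\}\cdot\bigl(p^{w}(x_j\mid\tilde\theta_j(\bz),\bX_{-j},\bz)/p^{w}(x_j\mid\theta_j(\bz),\bX_{-j},\bz)\bigr)^{\alpha}$, the identity $\E M(\theta_j)=1$ does hold under the paper's displayed definitions of $D_{\theta_j^*,\alpha}$ and $d_\alpha$, but the Donsker--Varadhan/Markov step then yields $(1-\alpha)\int d_\alpha\,\hat q_{\theta_j}\,d\theta_j \le +\,\alpha\int\log\{p^{w}(\theta_j)/p^{w}(\tilde\theta_j)\}\,\hat q_{\theta_j}\,d\theta_j+\mathrm{D_{KL}}(\hat q_{\theta_j}\|p_{\theta_j})+\log(1/\zeta)$: the likelihood-ratio term comes out with the \emph{opposite} sign from the lemma. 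Moving $-\alpha\int\log\{p^{w}(\theta_j)/p^{w}(\tilde\theta_j)\}\hat q_{\theta_j}$ from the left-hand side of the Donsker--Varadhan inequality to the right necessarily flips its sign; it cannot reappear there as $-\alpha\int\log\{\cdot\}$, so the final step ``gives the claimed inequality'' fails as stated.

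To land on the lemma you need the exponential-moment identity with the candidate in the numerator, $\E\,\bigl(p^{w}(x_j\mid\theta_j(\bz),\bX_{-j},\bz)/p^{w}(x_j\mid\tilde\theta_j(\bz),\bX_{-j},\bz)\bigr)^{\alpha}=\exp\{-(1-\alpha)d_\alpha(\theta_j(\bz),\tilde\theta_j(\bz))\}$, which is what the paper's own proof asserts and what the standard $\alpha$-R\'enyi convention of \cite{yang2020alpha} produces; the displayed definition of $D_{\theta_j^*(\bz_l),\alpha}$ in the paper has the ratio inverted relative to this use (an internal inconsistency that you have faithfully inherited by working from the definition rather than from the intended convention). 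Once $M$ is built from $\{p^{w}(\theta_j)/p^{w}(\tilde\theta_j)\}^{\alpha}$, the rest of your argument is correct and reproduces the paper's proof essentially verbatim. The orientation is not cosmetic: downstream, Theorem~\ref{cor3} controls the right-hand side via $\Psi^w(q)$, whose model-fit term is $\int\log\{p^{w}(\tilde\theta_j)/p^{w}(\theta_j)\}q\,d\theta_j$, and Lemma~\ref{lemC2} bounds it only in that orientation.
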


\begin{proof}
	First
	\begin{equation*}
	\E_{\bz}	\E_{-j} \E_j \exp \left \{\alpha  \log \frac{p^{w}(x_j \mid \theta_j(\bz),\bX_{-j},\bz)}{p^{w}(x_j \mid \tilde{\theta}_j(\bz),\bX_{-j},\bz)} \right \} = \exp \left \{ -(1- \alpha)d_\alpha(\theta_j(\bz), \tilde{\theta}_j(\bz))\right \}.
	\end{equation*}
	Thus, for any $\zeta \in (0,1)$, we have
	\begin{equation*}
	\E_{\bz}	\E_{-j} \E_j \left [\exp \left \{ \alpha \log \frac{p^{w}(x_j \mid \theta_j(\bz),\bX_{-j},\bz)}{p^{w}(x_j \mid \tilde{\theta}_j(\bz),\bX_{-j},\bz)} + (1-\alpha)d_\alpha(\theta_j(\bz), \tilde{\theta}_j(\bz)) - \log (1/\zeta) \right \} \right ]\leq \zeta.
	\end{equation*}
	Integrating both sides of this inequality with respect to the prior distribution $p_{\theta_j}$ and interchanging the integrals using Fubini's theorem, we have
	\begin{equation*}
	\E_{\bz}	\E_{-j} \E_j \int \exp \left \{ \alpha \log \frac{p^{w}(x_j \mid \theta_j(\bz),\bX_{-j},\bz)}{p^{w}(x_j \mid \tilde{\theta}_j(\bz),\bX_{-j},\bz)} + (1-\alpha)d_\alpha(\theta_j(\bz), \tilde{\theta}_j(\bz)) - \log (1/\zeta) \right \} p_{\theta_j} (\theta_j(\bz)) d \theta_j(\bz) \leq \zeta.
	\end{equation*}
	
	Next we use the variational duality of the KL divergence. If $\mu$ is a probability measure and $h$ is a measurable function such that $e^h \in L_1(\mu)$, then
	\begin{align}
	\log \int e^h d\mu = \underset{\rho \ll \mu}{\sup}\left [\int h d\rho - \mathrm{D_{KL}}(\rho \| \mu) \right].
	\end{align}
	We set $h = \alpha \log \frac{p^{w}(x_j \mid \theta_j(\bz),\bX_{-j},\bz)}{p^{w}(x_j \mid \tilde{\theta}_j(\bz),\bX_{-j},\bz)} + (1-\alpha)d_\alpha(\theta_j(\bz), \tilde{\theta}_j(\bz)) - \log (1/\zeta)$ and $\rho=\hat{q}_{\theta_j}(\theta_j(\bz))$ in the above result where $\hat{q}_{\theta_j}(\theta_j)$ is the variational estimate of the fractional posterior distribution.
	
	\begin{equation*}
	\begin{aligned}
	\E_{\bz} \E_{-j} \E_j \exp \Bigg [  \int \Bigg \{ \alpha \log \frac{p^{w}(x_j \mid \theta_j(\bz),\bX_{-j},\bz)}{p^{w}(x_j \mid \tilde{\theta}_j(\bz),\bX_{-j},\bz)} + (1-\alpha)d_\alpha(\theta_j(\bz), \tilde{\theta}_j(\bz)).\\
	- \log (1/\zeta) \Bigg \}\hat{q}_{\theta_j}(\theta_j(\bz)) d\theta_j(\bz) - \mathrm{D_{KL}} (\hat{q}_{\theta_j} \| p_{\theta_j}) \Bigg] \leq \zeta .
	\end{aligned}
	\end{equation*}
	Let
	\begin{equation}
	\tilde{\mbox{A}}(\bX_{-j},\bz)=\int \left \{ \alpha \log \frac{p^{w}(x_j \mid \theta_j(\bz),\bX_{-j},\bz)}{p^{w}(x_j \mid \tilde{\theta}_j(\bz),\bX_{-j},\bz)} + (1-\alpha)d_\alpha(\theta_j(\bz), \tilde{\theta}_j(\bz)) - \log \frac{1}{\zeta} \right \}\hat{q}_{\theta_j}(\theta_j(\bz)) d\theta_j - \mathrm{D_{KL}} (\hat{q}_{\theta_j} \| p_{\theta_j} ),
	\end{equation} and
	\begin{equation}
	\mbox{A}(\bX_{-j},\bz)=\left \{x_j,\bz :	\tilde{\mbox{A}}(\bX_{-j},\bz)\leq 0 \right \}.
	\end{equation}
	Now we apply Markov's inequality to get a probability statement.
	Thus, the required statement follows from the following:
	\begin{equation*}
	\begin{aligned}
	\E_{\bz}	\E_{-j} P \left[\exp \left \{\tilde{\mbox{A}}(\bX_{-j},\bz)\right\} > 1 \mid \bX_{-j},\bz\right ] &\leq 	\E_{\bz} \E_{-j} \E_j \left[\exp \left \{\tilde{\mbox{A}}(\bX_{-j},\bz)\right \} \mid \bX_{-j},\bz\right] \leq \zeta \\
	\E_{\bz}	\E_{-j}  \left\{P(\mbox{A}(\bX_{-j} ,\bz) \mid \bX_{-j},\bz )\right\} &=\int_{\ind_{A(\bX_{-j},\bz)}} f(x_j\mid \bX_{-j} ,\bz) f(\bX_{-j} ,\bz)  dx_j d \bX_{-j} dz \geq 1- \zeta,
	\end{aligned}
	\end{equation*}
	which implies the conclusion.
\end{proof}

\begin{lem}\label{lemC2}
	Under Assumptions in Theorem~\ref{cor3}. Suppose $\mathcal{N}( \tilde{\theta}_j, \epsilon)$ is defined in equation~\eqref{eq:def_theta} and $\tilde {q}_{  \theta_j}(\theta_j(\bz))$ is a density restricted in $\mathcal{N}( \tilde{\theta}_j(\bz), \epsilon)$, then we have
	\begin{equation*}
	P \left (  -\int  \log \left\{ \frac{p^{w}(x_j \mid \tilde \theta_j(\bz),\bX_{-j},\bz)} {p^{w}(x_j \mid \tilde{\theta}_j(\bz),\bX_{-j},\bz)}\right\}  {q}_{ \tilde \theta_j}(\theta_j(\bz)) d\theta_j < Dn\epsilon^2 \right )
	\geq 1 - c_1 e^{-c_2 n} -\frac{c_3}{p^{c_1+1}}.
	\end{equation*} for some positive constants  $c_1,c_2,c_3,D$.
	\label{lem4}
\end{lem}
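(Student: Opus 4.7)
\textbf{Proof proposal for Lemma \ref{lemC2}.}

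The plan is to expand the log-ratio explicitly as a quadratic-plus-linear form in $u := \beta_j(\bz) - \tilde\beta_j(\bz)$, and then integrate against $\tilde{q}_{\theta_j}$, exploiting the fact that the neighborhood $\mathcal{N}(\tilde\theta_j(\bz),\epsilon)$ forces $u$ to share the sparsity pattern of $\tilde\beta_j$ (and hence $\|u\|_0 \leq C_0 s_j^*$) with $\|u\|_\infty \leq c_0\tau\epsilon/\sqrt{s_j^*}$, so $\|u\|_2 \leq c_0\tau\epsilon$. Writing $r := x_j - \bX_{-j}\tilde\beta_j(\bz)$, a direct calculation using the Gaussian form of $p^w$ gives
\begin{equation*}
-\log\frac{p^w(x_j\mid \theta_j(\bz),\bX_{-j},\bz)}{p^w(x_j\mid \tilde\theta_j(\bz),\bX_{-j},\bz)} \;=\; \frac{1}{2\sigma_*^2}\bigl[\,u^{\T}\bX_{-j}^{\T}\mathrm{W}_l\bX_{-j} u \;-\; 2\, u^{\T}\bX_{-j}^{\T}\mathrm{W}_l r\,\bigr],
\end{equation*}
so after integration the task reduces to controlling an expected quadratic form and an expected cross term.

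For the quadratic part, I would bound $u^{\T}\bX_{-j}^{\T}\mathrm{W}_l\bX_{-j} u \leq \|\mathrm{W}_l\|_{\mathrm{op}}\, u^{\T}\bX_{-j}^{\T}\bX_{-j} u$. Assumption K gives $\|\mathrm{W}_l\|_{\mathrm{op}} \leq c/\tau$, and on the event $\bX_{-j}\in\mathcal{G}_{n,p-1}$ (which holds with probability at least $1 - c_1 e^{-c_2 n}$ by standard restricted-isometry results, e.g. \cite{raskutti2010restricted,zhou2009restricted}), we have $u^{\T}\bX_{-j}^{\T}\bX_{-j} u \lesssim n\|u\|_2^2$ for $u$ that is $C_0 s^*$-sparse. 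Combining, the quadratic contribution is at most $C n\tau\epsilon^2 = C n^{4/5}\epsilon^2 \leq C n\epsilon^2$, which is absorbed into the target.

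For the cross term, I would decompose row-wise $r_k = \eta_{k,j} + x_{k,-j}^{\T}\bigl(\beta_j^*(\bz_k)-\tilde\beta_j(\bz)\bigr)$, where $\eta_{k,j} := x_{k,j}-x_{k,-j}^{\T}\beta_j^*(\bz_k)$ is the genuine conditional noise with $\E[\eta_{k,j}\mid \bX_{-j}]=0$ and $\mathrm{Var}(\eta_{k,j}) = 1/\Omega_{jj}^*(\bz_k)$ bounded by Assumption T5. For the noise piece, $|u^{\T}\bX_{-j}^{\T}\mathrm{W}_l\eta| \leq \|u\|_1\,\|\bX_{-j}^{\T}\mathrm{W}_l\eta\|_\infty$; a sub-Gaussian tail bound on each coordinate (with variance scale $\sum_k w_l(\bz_k)^2 x_{k,m}^2/\Omega_{jj}^*(\bz_k)\lesssim n/\tau$) followed by a union bound over $p$ coordinates gives $\|\bX_{-j}^{\T}\mathrm{W}_l\eta\|_\infty \lesssim \sqrt{(n/\tau)\log p}$ with probability $1-c_3/p^{c_1+1}$. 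Plugging in $\|u\|_1 \leq \sqrt{s^*}\|u\|_2\leq\sqrt{s^*}\tau\epsilon$ and $\tau=n^{-1/5}$ yields the bound $\sqrt{s^* n^{4/5}\log p}\,\epsilon$, which is dominated by $n\epsilon^2$ thanks to the $s^*\log(np)/n$ term built into $\epsilon^2$. The deterministic (bias) piece of $r$ is handled by further splitting $\beta_j^*(\bz_k)-\tilde\beta_j(\bz) = [\beta_j^*(\bz_k)-\beta_j^*(\bz_l)] + [\beta_j^*(\bz_l)-\tilde\beta_j(\bz)]$: the first summand is $O(\tau)$-small on the support of $w_l$ by Assumption T3 (smoothness), and the second is controlled in $\ell_2$ by Lemma~\ref{lem2} at rate $\sqrt{s^*/n^{4/5}}$. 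Both contributions, after multiplying by $u^{\T}\bX_{-j}^{\T}\mathrm{W}_l\bX_{-j}$ and applying Cauchy--Schwarz, stay $\lesssim n\epsilon^2$ by the same $n\tau\|u\|_2^2$ bookkeeping used for the quadratic term.

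The main obstacle will be precisely the cross term: unlike in the well-specified setting, the residual $r$ is not mean-zero, so the KL ball must be inflated by the misspecification radius. Lemma~\ref{lem2} is exactly what enables this—its $O(s_j^*(n\tau)^{-1}+s_j^*\tau^4)$ rate for $\tilde\beta_j(\bz)-\beta_j^*(\bz_l)$ dictates the form of $\epsilon^2 = s^*\log(np)/n + s^* n^{-3/5}$ appearing in the statement of Theorem~\ref{cor3}, and the extra factor $n^{1/5}$ (discussed after Theorem~\ref{cor3}) precisely matches $\|\mathrm{W}_l\|_{\mathrm{op}}\leq c/\tau$. Once all three contributions are aggregated, intersecting the three high-probability events ($\bX \in \mathcal G_{n,p-1}$; concentration of the noise score; concentration of the sparse eigenvalues) yields the probability statement $1 - c_1 e^{-c_2 n} - c_3/p^{c_1+1}$ claimed in the lemma.
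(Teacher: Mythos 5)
Your proposal is correct and follows essentially the same route as the paper's proof: expand the Gaussian log-ratio into a quadratic term in $u=\beta_j(\bz)-\tilde\beta_j(\bz)$ plus a cross term, bound the quadratic term via $\|\mathrm{W}_l\|_{\mathrm{op}}\lesssim \tau^{-1}$ together with the restricted-eigenvalue event $\bX\in\mathcal G_{n,p}$ and $\|u\|_2\lesssim\tau\epsilon$, split the residual into noise plus misspecification bias, control the noise score $\|\bX_{-j}^{\T}\mathrm{W}_l\varepsilon_j\|_\infty$ by a Gaussian maximal inequality with failure probability $c_3/p^{c_1+1}$, and absorb the bias using the rate $\|\tilde\beta_j(\bz)-\beta_j^*(\bz_l)\|$ from Lemma~\ref{lem2}. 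Your only deviation is that you track the within-kernel heterogeneity $\beta_j^*(\bz_k)-\beta_j^*(\bz_l)$ explicitly rather than folding it into the residual as the paper does, which is a harmless (and slightly more careful) refinement rather than a different argument.
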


\begin{proof}
	By the log-likelihood of Gaussian distribution,  for some constant $c,C>0$, we have
	\begin{equation*}
	\begin{aligned}
	\log \left\{ \frac{p^{w}(x_j \mid \tilde{\theta}_j(\bz),\bX_{-j},\bz)} {p^{w}( x_j \mid {\theta}_j(\bz),\bX_{-j},\bz)}\right\}  = C (\|\mathrm{W}_{l}^{1/2} x_j-\mathrm{W}_{l}^{1/2}\bX_{-j}{\beta}_j(\bz)\|^2 -\|x_j-\mathrm{W}_{l}^{1/2}\bX_{-j}\tilde{\beta}_j(\bz) \|_2^2 )+c.
	\end{aligned}
	\end{equation*}
	Denote $x_j = \bX_{-j} \beta^*_j(\bz) +\varepsilon_j$. Note that 
	\begin{equation*}
	\begin{aligned}
	&\|\mathrm{W}_{l}^{1/2}x_j-\mathrm{W}_{l}^{1/2}\bX_{-j}{\beta}_j(\bz)\|^2 -\|\mathrm{W}_{l}^{1/2}x_j-\mathrm{W}_{l}^{1/2}\bX_{-j}\tilde{\beta}_j(\bz) \|_2^2 \\
	&= \|\mathrm{W}_{l}^{1/2}\bX_{-j}\tilde{\beta}_j(\bz) -\mathrm{W}_{l}^{1/2}\bX_{-j}{\beta}_j(\bz) \|^2 + 2 \langle \mathrm{W}_{l}^{1/2}\bX_{-j}(\tilde{\beta}_j(\bz)-\beta_j(\bz)),\mathrm{W}_{l}^{1/2}\bX_{-j}\beta^*_j(\bz_l) +\mathrm{W}_{l}^{1/2}\varepsilon_j -\mathrm{W}_{l}^{1/2}\bX_{-j}\tilde{\beta}_j(\bz)\rangle \\
	&\leq  2\|\mathrm{W}_{l}^{1/2}\bX_{-j}\tilde{\beta}_j(\bz) -\mathrm{W}_{l}^{1/2}\bX_{-j}{\beta}_j(\bz) \|^2 + \|\mathrm{W}_{l}^{1/2}\bX_{-j}\tilde{\beta}_j(\bz) -\mathrm{W}_{l}^{1/2}\bX_{-j}{\beta^*_j}(\bz_l) \|^2 +  2 \langle \mathrm{W}_{l}\bX_{-j}(\tilde{\beta}_j(\bz)-\beta_j(\bz)),\varepsilon_j \rangle
	\end{aligned}
	\end{equation*}
	
	Under Lemma~\ref{lemC3}, with probability at least $1- \exp(-a_1n)$, we have $\bX \notin \mathcal{G}_{n,p}$, this gives us 
	\begin{equation*}
	\|\mathrm{W}_{l}^{1/2}\bX_{-j}\tilde{\beta}_j(\bz) -\mathrm{W}_{l}^{1/2}\bX_{-j}{\beta^*_j}(\bz_l) \|^2 \leq cn\|\mathrm W\|_2\|\tilde{\beta}_j(\bz)-{\beta^*_j}(\bz_l)\|_2^2  \leq c n\tau^{-1} n^{-4/5} s_j^* \leq  c s_j^* n^{2/5} \leq  c n\epsilon^2.
	\end{equation*} In addition, by definition of $\tilde q_{{\theta}_j}(\theta_j(\bz))$, we have
	\begin{equation*}
	2\|\mathrm{W}_{l}^{1/2}\bX_{-j}\tilde{\beta}_j(\bz) -\mathrm{W}_{l}^{1/2}\bX_{-j}{\beta}_j(\bz) \|^2 \leq cn \tau^{-1} \|\tilde{\beta}_j(\bz)-{\beta^*_j}(\bz_l)\|_2^2 \leq  cn \epsilon^2.
	\end{equation*}  
	For the last term $2 \langle \mathrm{W}_{l}\bX_{-j}(\tilde{\beta}_j(\bz)-\beta_j(\bz)),\varepsilon_j \rangle$, first we have
	\begin{equation*}
	2 \langle \mathrm{W}_{l}\bX_{-j}(\tilde{\beta}_j(\bz)-\beta_j(\bz)),\varepsilon_j \rangle \leq 2 \|\tilde{\beta}_j(\bz)-\beta_j(\bz)\|_1 \|\bX^T_{-j} \mathrm{W}_{l} \varepsilon_j \|_\infty \leq c \sqrt{s_j^*} \tau \epsilon \|\bX^T_{-j} \mathrm{W}_{l} \varepsilon_j \|_\infty.
	\end{equation*}
	
	Note that $\bX^T_{-j} \mathrm{W}_{l} \varepsilon_j$ is a $p-1$ dimensional Gaussian vector, with and with probability great than $1-\exp(-a_2 n)$ by Lemma~\ref{lemC3}, the scale of each component of the Gaussian vector is  bounded by $ \sigma^*\sqrt{n}/\tau$ multiplied constant, by maximal inequality of Gaussian random vector, we have
	\begin{equation*}
	P( \|\bX^T_{-j} \mathrm{W}_{l} \varepsilon_j \|_\infty \geq t) \leq e^{-\frac{\tau^2 t^2}{2 \sigma^{*2} n}+\log p},
	\end{equation*} 
	and we can choose $t=  \sqrt{  (c_0+2) n \log p}/\tau$ for a constant $c_0>0$, then the probability upper bound becomes $p^{-(c_0+1)}$. Therefore, we have 
	\begin{equation*}
	2 \langle \mathrm{W}_{l}\bX_{-j}(\tilde{\beta}_j(\bz)-\beta_j(\bz)),\varepsilon_j \rangle \leq c \sqrt{s_j^*} \tau \epsilon \sqrt{n \log p} /\tau \leq c n\epsilon^2 + s_j^* \log p \leq c n\epsilon^2,
	\end{equation*}
	where the in the second inequality we use $2ab \leq a^2 +b^2$. 
\end{proof} 

\begin{lem}\label{lemC3}
	Under {\bf Assumption T1, T3, T4, T5}, we have  $P(\bX \notin \mathcal{G}_{n,p}) \leq \exp\{-a_1n\}$. In addition, with probability at least $1-\exp(-a_2 n)$, we also have the maximal of $\ell_2$ norm of column vectors of $\bX$ satisfies  $ \max_{i=1,...,p} \|X_{i}\|_2 \leq a_3 \sqrt{n}$ for some constant $a_3>0$. 
\end{lem}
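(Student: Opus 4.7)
Both claims reduce to standard concentration for sub-Gaussian random matrices once we isolate the right marginal distribution. The key input is Assumption T5, which asserts that the rows of $\bX$, viewed unconditionally (after marginalizing over $\bz$), are i.i.d.\ sub-Gaussian with a covariance $\Sigma$ whose eigenvalues are bounded above and below by absolute constants, together with the same uniform spectral control on $\Omega^{*}(\mathrm{z})$ for every $\mathrm{z}$. This lets us compare sample second-moment quantities with either $\Sigma$ or $\Omega^{*}(\mathrm{z})$ up to absolute constants, which is precisely what the definition of $\mathcal{G}_{n,p}$ requires.

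\textbf{Column-norm bound.} The entries of the $i$-th column $X_i\in\mathbb{R}^n$ are i.i.d.\ sub-Gaussian with $\mathbb{E}[X_{k,i}^{2}] = \Sigma_{ii}\le C$ for all $i$. Then $\|X_i\|_2^{2}=\sum_{k=1}^{n}X_{k,i}^{2}$ is a sum of $n$ i.i.d.\ sub-exponential random variables, and Bernstein's inequality yields $P(\|X_i\|_2^{2}>2Cn)\le \exp(-c n)$ for an absolute $c>0$. Taking a union bound over $i=1,\dots,p$,
\begin{equation*}
P\!\left(\max_{i\le p}\|X_i\|_2>\sqrt{2Cn}\right)\le p\exp(-c n)\le \exp(-a_2 n),
\end{equation*}
where the last step uses Assumption T1, which ensures $\log p \le c n$ so the $\log p$ term is absorbed. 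Setting $a_3=\sqrt{2C}$ gives the second claim.

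\textbf{Restricted eigenvalue bounds.} For the inclusion $\bX\in\mathcal G_{n,p}$ we invoke the sub-Gaussian restricted-eigenvalue theorem of Raskutti--Wainwright--Yu (2010) (and the analogous restricted-maximum-eigenvalue result obtainable by a standard covering argument, e.g.\ Vershynin 2012): with probability at least $1-\exp(-c' n)$, for every $u\in\mathbb{R}^p$,
\begin{equation*}
\tfrac{1}{2}\lambda_{\min}(\Sigma)\|u\|_2^{2}-C\tfrac{\log p}{n}\|u\|_1^{2}\ \le\ \frac{u^{\T}\bX^{\T}\bX u}{n}\ \le\ 2\lambda_{\max}(\Sigma)\|u\|_2^{2}+C\tfrac{\log p}{n}\|u\|_1^{2}.
\end{equation*}
Restricting to vectors with at most $s^{*}$ nonzero coordinates (or to the compatibility cone $\sum_{k:\gamma_{jk}=0}|u_k|\le 7\sum_{k:\gamma_{jk}=1}|u_k|$, where $\|u\|_1\le \sqrt{8s^{*}}\|u\|_2$), the remainder term is at most $8C\tfrac{s^{*}\log p}{n}\|u\|_2^{2}$. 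Assumption T1 ($n\ge c s^{*}\log p$) absorbs this remainder into the leading eigenvalue term, so after adjusting constants
\begin{equation*}
\klb(s^{*},\bX)\ge c_2\,\lambda_{\min}(\Sigma),\qquad \tilde\kappa(s,\bX)\le c_1\,\lambda_{\max}(\Sigma)\quad (s\in\{1,s^{*}\}).
\end{equation*}
Because Assumption T5 also gives uniform spectral control on $\Omega^{*}(\mathrm{z})$, we may replace $\lambda_{\min}(\Sigma)$ and $\lambda_{\max}(\Sigma)$ by $\klb(\Omega^{*}(\mathrm{z}))$ and $\tilde\kappa(s,\Omega^{*}(\mathrm{z}))$ at the cost of adjusting the constants $c_1,c_2$. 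This yields $\bX\in\mathcal G_{n,p}$ with probability at least $1-\exp(-a_1 n)$.

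\textbf{Main obstacle.} The only nontrivial bookkeeping is reconciling the two population objects in the definition of $\mathcal G_{n,p}$: $\Sigma$, which is the unconditional row covariance governing the concentration, versus $\Omega^{*}(\mathrm{z})$, which is the conditional precision matrix appearing in the target bound. Without the uniform eigenvalue bounds of Assumption T5, the ratios between them could blow up and the absolute constants $c_1,c_2$ would not exist; with T5 in hand this step is a one-line substitution. All other steps are standard sub-Gaussian concentration, and exponential-in-$n$ tails are preserved throughout since $\log p=O(n/s^{*})$ by Assumption T1.
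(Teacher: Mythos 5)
Your proof is correct and follows essentially the same route as the paper's: both arguments reduce to the fact (Assumption T5) that the rows of $\bX$ are marginally i.i.d.\ sub-Gaussian with uniformly bounded covariance spectrum, invoke a standard restricted-eigenvalue concentration result for sub-Gaussian designs (the paper cites Zhou 2009 and Atchad\'e's Lemma 3 where you cite Raskutti--Wainwright--Yu plus a covering argument) for the first claim, and control $\max_i\|X_i\|_2^2$ by a quadratic-form concentration bound plus a union bound absorbed via $n\gtrsim s^*\log p$ for the second (the paper uses Hanson--Wright where you use Bernstein for sub-exponential sums, which is the same estimate). Your explicit reconciliation of $\Sigma$ with $\Omega^{*}(\mathrm{z})$ in the definition of $\mathcal G_{n,p}$ is a point the paper passes over silently, but it does not change the argument.
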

\begin{proof}
	In order to bound $P(\bX \notin \mathcal{G}_{n,p})$ we use the Theorem $1.6$ in \cite{zhou2009restricted}: since the covariance function $\Sigma(\bz_i)$ is homogeneous and $\bz_i$, $i=1,...,n$ are i.i.d. samples from $f(\bz)$. Note that $x_1(z_1),...,x_n(z_n)$ are i.i.d samples form distribution $\int f(x \mid \Sigma(\bz))f(\bz)$, which is assumed to be sub-Gaussian by the Assumption~T5.	Thus we have $P(\bX \notin \mathcal{G}_{n,p}) \leq \exp\{-a_1n\}$, for $n$ larger than $a_4s_j^*\log p$, where $a_4$ and $a_1$ are positive constants, by the similar argument with Lemma 3 in \cite{atchade2019quasi}. Therefore, the following restricted eigenvalue conditions hold: $\klt (2s^*,\bX_{-j}^T\bX_{-j}/n)$ and $\tilde{\kappa}(2s^*,\bX_{-j}^T\bX_{-j}/n)$ are constants.
	
	For the second conclusion, first fix $i$, note that all eigenvalues of $\Sigma$ are uniformly  upper and lower bounded by constants. Then by Hanson-Wright inequality \citep{rudelson2013hanson}, we have
	\begin{equation*}
	P(\|X_i\|_2 \geq c_1\sqrt{n} ) \leq 2 e^{-c_2n},
	\end{equation*}
	for some constants $c_1,c_2>0$.
	Then by the union  bound and $n \geq c s^* \log p$, by choosing large enough constant $c_1',c_2'$, we have
	\begin{equation*}
	P(\|X_i\|_2 \geq c_1'\sqrt{n} ) \leq 2 e^{-c_2'n +\log p} \leq 2 e^{-c_3n}.
	\end{equation*}
\end{proof}

\begin{lem} Under assumptions of Theorem~\ref{thm2}, we have
	\begin{equation}
	\begin{split}
	\E_{-j} \Bigg [ P \bigg (\int (1-\alpha)d_\alpha(\theta_j(\bz), \theta_j^*)\hat{q}_{\theta_j}(\theta_j(\bz)) d\theta_j(\bz) \leq& -\alpha  \int \log \frac{p(\xj \mid \theta_j(\bz),\bX_{-j})}{p(\xj \mid \theta_j^*,\bX_{-j})}\hat{q}_{\theta_j}(\theta_j(\bz)) d\theta_j(\bz) \\
	&+ \mathrm{D_{KL}} (\hat{q}_{\theta_j} \| p_{\theta_j}) + \log \left(\frac{1}{\zeta}\right) \bigg ) \Bigg ] \geq 1 - \zeta.
	\end{split}
	\label{expprob}
	\end{equation}
	\label{lem1}
\end{lem}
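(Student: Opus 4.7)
The plan is to mirror the proof of Lemma~\ref{lemC1} (which treats the misspecified weighted case), specializing it to the well-specified, unweighted setup where $\theta_j^*$ replaces $\tilde{\theta}_j^l(\bz)$ and $p(x_j\mid\theta_j(\bz),\bX_{-j})$ replaces $p^w(x_j\mid\theta_j(\bz),\bX_{-j},\bz)$. The argument is a standard PAC-Bayes-type exponential inequality paired with the Donsker--Varadhan variational representation of the KL divergence, so the bulk of the work lies in carefully assembling well-known ingredients.

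First, I would use the definition of the $\alpha$-R\'enyi divergence to obtain the pointwise identity
\begin{equation*}
\E_j\exp\!\Big\{\alpha\log\tfrac{p(x_j\mid\theta_j(\bz),\bX_{-j})}{p(x_j\mid\theta_j^*,\bX_{-j})}\,\Big|\,\bX_{-j}\Big\}
= \exp\{-(1-\alpha)D_\alpha(\theta_j(\bz),\theta_j^*\mid\bX_{-j})\}.
\end{equation*}
Taking an additional expectation over $\bX_{-j}$ and invoking Jensen (as in the definition of $d_\alpha$ via a log-expectation) yields an upper bound involving $\exp\{-(1-\alpha)d_\alpha(\theta_j(\bz),\theta_j^*)\}$. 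Rearranging, for any $\zeta\in(0,1)$, I get the key pointwise moment bound
\begin{equation*}
\E_{-j}\E_j\exp\!\Big\{\alpha\log\tfrac{p(\cdot\mid\theta_j(\bz))}{p(\cdot\mid\theta_j^*)}+(1-\alpha)d_\alpha(\theta_j(\bz),\theta_j^*)-\log(1/\zeta)\Big\}\leq\zeta.
\end{equation*}

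Second, I would integrate both sides against the prior $p_{\theta_j}$ and invoke Fubini's theorem to interchange the order of integration, obtaining a bound of the form $\E_{-j}\E_j\int e^{H(\theta_j,\bX)}\,p_{\theta_j}(d\theta_j)\leq\zeta$, where $H$ is the exponent in the previous display. I would then apply the Donsker--Varadhan variational duality
\begin{equation*}
\log\int e^H d\mu=\sup_{\rho\ll\mu}\Big\{\int H\,d\rho-\mathrm{D_{KL}}(\rho\|\mu)\Big\}
\end{equation*}
with $\mu=p_{\theta_j}$ and $\rho=\hat q_{\theta_j}$ (any data-dependent variational density absolutely continuous w.r.t.\ the prior). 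Exponentiating the duality inequality and combining with the previous bound gives
\begin{equation*}
\E_{-j}\E_j\exp\!\Big\{\!\!\int\!\! H(\theta_j,\bX)\,\hat q_{\theta_j}(d\theta_j)-\mathrm{D_{KL}}(\hat q_{\theta_j}\|p_{\theta_j})\Big\}\leq\zeta.
\end{equation*}

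Finally, a Markov-type argument completes the proof: letting $\tilde A(\bX_{-j})$ denote the exponent, $\E_{-j}\E_j \ind\{\tilde A>0\}\leq \E_{-j}\E_j e^{\tilde A}\leq\zeta$, so the set where $\tilde A\leq0$ has joint probability at least $1-\zeta$. Rearranging $\tilde A\leq0$ produces exactly the stated inequality. The main subtlety, and essentially the only non-routine step, is verifying that the exponential moment identity in the first stage still holds with the well-specified true conditional density in place of the projected parameter $\tilde\theta_j^l(\bz)$; this follows because in the well-specified case $\theta_j^*$ itself acts as its own Kullback--Leibler projection, so the $\alpha$-R\'enyi divergence computation goes through without the weight matrix $\mathrm{W}_l$ appearing. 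No additional assumptions beyond those of Theorem~\ref{thm2} are needed.
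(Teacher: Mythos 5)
Your proposal is correct and follows essentially the same route as the paper's proof: the exponential moment identity from the definition of $d_\alpha$, integration against the prior with Fubini, the Donsker--Varadhan duality applied with $\rho=\hat q_{\theta_j}$, and a Markov/Chernoff step. The only cosmetic difference is your appeal to Jensen when passing from the conditional to the marginal divergence --- by the paper's definition of $d_\alpha$ in \eqref{newdivwt} this step is an exact equality, so no inequality is needed there.
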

\begin{proof}
	From \eqref{newdivwt}, we have
	\begin{eqnarray*}
		\E_{-j} \E_j \exp \left \{\alpha  \log \frac{p(\xj \mid \theta_j(\bz),\bX_{-j})}{p(\xj \mid \theta_j^*,\bX_{-j})} \right \} = \exp \left ( -(1- \alpha)d_\alpha(\theta_j(\bz), \theta_j^* \mid \bX_{-j})\right ).
	\end{eqnarray*}
	Thus, for any $\zeta \in (0,1)$, we have
	\begin{eqnarray*}
		\E_{-j} \E_j \left [\exp \left \{ \alpha \log \frac{p(\xj \mid \theta_j(\bz),\bX_{-j})}{p(\xj \mid \theta_j^*,\bX_{-j})} + (1-\alpha)d_\alpha(\theta_j(\bz), \theta_j^*) - \log (1/\zeta) \right \} \right ]\leq \zeta.
	\end{eqnarray*}
	Integrating both sides of this inequality with respect to the prior distribution $p_{\theta_j}$ and interchanging the integrals using Fubini's theorem, we have
	\begin{eqnarray*}
		\E_{-j} \E_j \int \exp \left \{ \alpha \log \frac{p(\xj \mid \theta_j(\bz),\bX_{-j})}{p(\xj \mid \theta_j^*,\bX_{-j})} + (1-\alpha)d_\alpha(\theta_j(\bz), \theta_j^*) - \log (1/\zeta) \right \} p_{\theta_j} (\theta_j(\bz)) d \theta_j(\bz) \leq \zeta.
	\end{eqnarray*}
	
	Next we use the following result from \cite{yang2020alpha}.If $\mu$ is a probability measure and $h$ is a measurable function such that $e^h \in L_1(\mu)$, then
	\begin{align}
	\log \int e^h d\mu = \underset{\rho \ll \mu}{\sup}\left [\int h d\rho - D_{KL}(\rho \| \mu) \right].
	\label{lemma71}
	\end{align}
	We set $h = \alpha \log \frac{p(\xj \mid \theta_j(\bz),\bX_{-j})}{p(\xj \mid \theta_j^*,\bX_{-j})} + (1-\alpha)d_\alpha(\theta_j(\bz), \theta_j^*) - \log (1/\zeta)$ and $\rho=\hat{q}_{\theta_j}(\theta_j(\bz))$ in the above result where $\hat{q}_{\theta_j}(\theta_j(\bz))$ is the variational estimate of the fractional posterior distribution.
	
	Thus, we get
	\begin{eqnarray*}
		\E_{-j} \E_j \exp \left [  \int \left \{ \alpha \log \frac{p(\xj \mid \theta_j(\bz),\bX_{-j})}{p(\xj \mid \theta_j^*,\bX_{-j})} + (1-\alpha)d_\alpha(\theta_j(\bz), \theta_j^*) - \log (1/\zeta) \right \}\hat{q}_{\theta_j}(\theta_j(\bz)) d\theta_j - D_{KL} (\hat{q}_{\theta_j} \| p_{\theta_j}) \right] \leq \zeta .
	\end{eqnarray*}
	Let
	\begin{equation}
	\tilde{A}(\bX_{-j})=\int \left \{ \alpha \log \frac{p(\xj \mid \theta_j(\bz),\bX_{-j})}{p(\xj \mid \theta_j^*,\bX_{-j})} + (1-\alpha)d_\alpha(\theta_j(\bz), \theta_j^*) - \log \frac{1}{\zeta} \right \}\hat{q}_{\theta_j}(\theta_j(\bz)) d\theta_j(\bz) - D_{KL} (\hat{q}_{\theta_j} \| p_{\theta_j} ),
	\label{ax0}
	\end{equation} and
	\begin{equation}
	A(\bX_{-j})=\left \{x_j :\tilde{A}(\bX_{-j})\leq 0 \right \}.
	\label{ax}
	\end{equation}
	Now we apply Markov's inequality to get a probability statement.
	Thus, the required statement follows from the following:
	\begin{eqnarray*}
		\E_{-j} P(\exp \left(\tilde{A}(\bX_{-j})\right) > 1 \mid \bX_{-j}) &\leq& \E_{-j} \E_j (\exp \left(A(X_{-j})\right) \mid \bX_{-j})=\zeta \\
		\E_{-j} P(A(\bX_{-j}) \mid \bX_{-j}) &\geq& 1- \zeta.
	\end{eqnarray*}

\end{proof}

\begin{lem} Under the assumptions of Theorem~\ref{thm2}, let the precision matrix of the $p$-variate data generating distribution be $s^*$-sparse and have eigenvalues bounded away from $0$ and $\infty$.  For any $\zeta \in (0,1)$ and $n \geq a_1 s^* \log p$, and any measure $q_{\theta_j} \in \Gamma$ such that $q_{\theta_j} \ll p_{\theta_j}$, we have  
	\begin{equation*}
	P \left (\int \frac{1}{n}d_\alpha (\theta_j(\bz), \theta_j^*)\hat{q}_{\theta_j}(\theta_j(\bz)) d\theta_j(\bz) \leq \frac{\alpha}{n(1-\alpha)}  \Psi(q_{\theta_j}) + \frac{1}{n(1-\alpha)}\log (1/\zeta) \right )
	\geq 1 - \zeta - \frac{c_2}{p^{c_1+1}} - \exp\{-a_2n\}.
	\end{equation*} for some positive constants $a_1$, $D$ and $a_2$ , $c_1,c_2$.
	\label{lema2}
\end{lem}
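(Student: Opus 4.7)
My plan is to chain Lemma~\ref{lem1} with the variational optimality of $\hat{q}_{\theta_j}$ and then intersect with two high-probability events on the random design and on the Gaussian noise to account for the $\exp(-a_2 n)$ and $c_2 p^{-c_1-1}$ corrections.

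First, I would apply Lemma~\ref{lem1} to obtain, with probability at least $1-\zeta$,
\[
(1-\alpha)\!\int d_\alpha(\theta_j(\bz),\theta_j^*)\hat{q}_{\theta_j}(\theta_j(\bz))\,d\theta_j(\bz)\le -\alpha\!\int\log\frac{p(x_j\mid\theta_j(\bz),\bX_{-j})}{p(x_j\mid\theta_j^*,\bX_{-j})}\hat{q}_{\theta_j}(\theta_j(\bz))\,d\theta_j(\bz)+\mathrm{D_{KL}}(\hat{q}_{\theta_j}\|p_{\theta_j})+\log(1/\zeta).
\]
The right-hand side equals $\alpha\,\Psi(\hat{q}_{\theta_j})+\log(1/\zeta)$, and since $\hat{q}_{\theta_j}$ minimizes $\Psi$ over the variational family $\Gamma$ by the definition~\eqref{varapprox}, one has $\Psi(\hat{q}_{\theta_j})\le \Psi(q_{\theta_j})$ for every $q_{\theta_j}\in\Gamma$ with $q_{\theta_j}\ll p_{\theta_j}$. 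Dividing by $n(1-\alpha)$ then yields the stated inequality on the same event.

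Second, I would account for the remaining $\exp(-a_2 n)$ and $c_2 p^{-c_1-1}$ contributions by intersecting with two regularity events. The first is $\{\bX\in\mathcal{G}_{n,p}\}$, which has probability at least $1-\exp(-a_2 n)$ once $n\ge a_1 s^*\log p$ by Lemma~\ref{lemC3} and which secures the restricted eigenvalue conditions underlying the well-posedness of the spike-and-slab variational optimum. The second is the Gaussian maximal-inequality event $\{\|\bX_{-j}^{\mathrm T}\varepsilon_j\|_\infty \le C\sqrt{n\log p}\}$, which by a union bound over the $p-1$ coordinates of the sub-Gaussian vector $\bX_{-j}^{\mathrm T}\varepsilon_j$ holds with probability at least $1-c_2 p^{-c_1-1}$ for appropriate constants. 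A union bound across these two events and the event from Lemma~\ref{lem1} delivers the stated probability $1-\zeta-c_2 p^{-c_1-1}-\exp(-a_2 n)$.

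The main obstacle is reconciling the data-dependent estimator $\hat{q}_{\theta_j}$ with the Donsker--Varadhan duality used inside Lemma~\ref{lem1}, which is phrased for a generic $\rho\ll p_{\theta_j}$. I would handle this by first deriving the bound conditionally on $\bX_{-j}$, where $\hat{q}_{\theta_j}$ becomes a deterministic measure absolutely continuous with respect to $p_{\theta_j}$, and then removing the conditioning via the tower property restricted to the regularity events above. In this way the probability $1-\zeta$ transfers from the conditional statement to the marginal one at the cost of only the excluded design and noise tails.
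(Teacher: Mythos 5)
Your proposal is correct and follows essentially the same route as the paper: the PAC--Bayes inequality of Lemma~\ref{lem1} combined with the variational optimality $\Psi(\hat{q}_{\theta_j})=\inf_{q}\Psi(q_{\theta_j})$, intersected with the restricted-eigenvalue event $\{\bX\in\mathcal{G}_{n,p}\}$ (probability $1-\exp(-a_2n)$) and a Gaussian maximal-inequality event on $\|\bX_{-j}^{\T}\mathrm{W}_l\varepsilon_j\|_\infty$ (probability $1-c_2p^{-c_1-1}$), followed by a union bound. The paper's written proof spends its effort on bounding the three terms of the log-likelihood-ratio decomposition under exactly these two regularity events, leaving the Lemma~\ref{lem1} plus optimality chain implicit, but the argument is the same as yours.
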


\begin{proof}
	Similar with Lemma~\ref{lemC2}, we need to provide upper bound for  $$2\|\mathrm{W}_{l}^{1/2}\bX_{-j}\tilde{\beta}_j(\bz) -\mathrm{W}_{l}^{1/2}\bX_{-j}{\beta}_j(\bz) \|^2 + \|\mathrm{W}_{l}^{1/2}\bX_{-j}\tilde{\beta}_j(\bz) -\mathrm{W}_{l}^{1/2}\bX_{-j}{\beta^*_j}(\bz_l) \|^2 +  2 \langle \mathrm{W}_{l}\bX_{-j}(\tilde{\beta}_j(\bz)-\beta_j(\bz)),\varepsilon_j \rangle$$
	
	By the similar argument with Lemma 3 in \cite{atchade2019quasi}, the following restricted eigenvalue conditions hold: $\klt (2s^*,\bX_{-j}^T\bX_{-j}/n)$ and $\tilde{\kappa}(2s^*,\bX_{-j}^T\bX_{-j}/n)$ are constants.
	Therefore, with probability at least $1- \exp(-a_1n)$, we have $\bX \notin \mathcal{G}_{n,p}$, this gives us 
	\begin{equation*}
	\|\mathrm{W}_{l}^{1/2}\bX_{-j}\tilde{\beta}_j(\bz) -\mathrm{W}_{l}^{1/2}\bX_{-j}{\beta^*_j}(\bz_l) \|^2 \leq cn\|\mathrm W\|_2\|\tilde{\beta}_j(\bz)-{\beta^*_j}(\bz_l)\|_2^2  \leq c n\tau^{-1} n^{-1} s_j^*  \leq  c n\epsilon^2,
	\end{equation*}
	given that $\tau^{-1}\leq c\sqrt{\log n}$.
	
	In addition, by definition of $\tilde q_{{\theta}_j}(\theta_j(\bz))$, similarly we have
	\begin{equation*}
	2\|\mathrm{W}_{l}^{1/2}\bX_{-j}\tilde{\beta}_j(\bz) -\mathrm{W}_{l}^{1/2}\bX_{-j}{\beta}_j(\bz) \|^2 \leq cn \tau^{-1} \|\tilde{\beta}_j(\bz)-{\beta^*_j}(\bz_l)\|_2^2 \leq  cn \epsilon^2.
	\end{equation*}  
	For the last term $2 \langle \mathrm{W}_{l}\bX_{-j}(\tilde{\beta}_j(\bz)-\beta_j(\bz)),\varepsilon_j \rangle$, first we have
	\begin{equation*}
	2 \langle \mathrm{W}_{l}\bX_{-j}(\tilde{\beta}_j(\bz)-\beta_j(\bz)),\varepsilon_j \rangle \leq 2 \|\tilde{\beta}_j(\bz)-\beta_j(\bz)\|_1 \|\bX^T_{-j} \mathrm{W}_{l} \varepsilon_j \|_\infty \leq c \sqrt{s_j^*} \tau \epsilon \|\bX^T_{-j} \mathrm{W}_{l} \varepsilon_j \|_\infty.
	\end{equation*}
	
	Note that $\bX^T_{-j} \mathrm{W}_{l} \varepsilon_j$ is a $p-1$ dimensional Gaussian vector, with and with probability great than $1-\exp(-a_2 n)$, the scale of each component of the Gaussian vector is  bounded by $ \sigma^*\sqrt{n}/\tau$ multiplied constant, by maximal inequality of Gaussian random vector, we have
	\begin{equation*}
	P( \|\bX^T_{-j} \mathrm{W}_{l} \varepsilon_j \|_\infty \geq t) \leq e^{-\frac{\tau^2 t^2}{2 \sigma^{*2} n}+\log p},
	\end{equation*} 
	and we can choose $t=  \sqrt{  (c_0+2) n \log p}/\tau$ for a constant $c_0>0$, then the probability upper bound becomes $p^{-(c_0+1)}$. Therefore, we have 
	\begin{equation*}
	2 \langle \mathrm{W}_{l}\bX_{-j}(\tilde{\beta}_j(\bz)-\beta_j(\bz)),\varepsilon_j \rangle \leq c \sqrt{s_j^*} \tau \epsilon \sqrt{n \log p} /\tau \leq c n\epsilon^2 + s_j^* \log p \leq c n\epsilon^2,
	\end{equation*}
	where the in the second inequality we use $2ab \leq a^2 +b^2$. 
\end{proof}

\subsection{Steps of the algorithm}\label{ssec:algo}
The following algorithm provides the steps for covariate-dependent graph estimation of $\bX\in\mathbb R^{n\times p}$.

We select the bandwidth hyperparameter $\tau\in\mathbb R^n$ using a 2-step approach for density estimation discussed in \cite{dasgupta2020two,abramson1982bandwidth,van2003adaptive}. Under this approach, bandwidths are initialized using Silverman’s rule of thumb, and the density is subsequently refined by updating the bandwidth values. We follow this methodology to estimate the density of $\bz$, and use the updated bandwidths from the second step for $\tau$. 

We next fix $x_j$ as the response and consider the task of performing $n$ weighted spike-and-slab regressions with $\bX_{-j}$ as predictors using the weights calculated using the bandwidth $\tau$ and the covariates. Each of these regressions requires the specification of three hyperparameters: $\pi,\sigma^2$, and $\sigma^2_\theta$. To select the hyperparameters, we use a hybrid of model averaging and grid search. We first generate candidate grids of $\pi, \sigma^2$, and $\sigma^2_\theta$ values. We denote the grid of $\pi$ candidates by $\Theta_\pi$, and the Cartesian product between the grid of $\sigma^2$ and $\sigma^2_\theta$ candidates as $\Theta$.

Next, for each $\pi\in\Theta_\pi$, we fit a spike-and-slab regression weighted with respect to individual $l$ for each $(\sigma^2,\sigma^2_\theta)\in\Theta,l\in1,...,n$. We make a global selection of $\sigma^2$ and $\sigma^2_\theta$ for each of the $\pi\in\Theta_\pi$ such that the sum of the ELBO across all $n$ weighted regressions is maximized. This grid search produces $\lvert\Theta_\pi\rvert$ models per individual. For each of these models, we calculate a model averaging weight by taking the softmax over the ELBOs and use these to average over the variational approximations to the posterior quantities to construct the final model. Finally, to obtain the graph estimate, we symmetrize the posterior inclusion probabilities from the final model and threshold them at $0.5$.

\subsection{Discrete Covariate Simulation Study}\label{app:disc}

For the discrete covariate, we perform experiments in which we vary the data dimensionality, the distribution of the covariate levels, and the strength of the signal in the ground-truth precision structures. As with the continuous covariate, we perform 50 trials per experiment. We compare the performance of W-PL to mgm \citep{haslbeck_mgm_2020}, as well as to the method of \cite{carbonetto2012scalable} applied in a pseudo-likelihood fashion (CS). That is, we fix each variable as the response in turn and perform a variational spike-and-slab regression. We obtain the final graphs using the same symmetrization and thresholding scheme as used for W-PL. To incorporate $\bz$, we apply this estimation procedure for each of the covariate levels independently. Because no information is shared between levels, this allows us to evaluate the impact of the weighting scheme in W-PL. We use the implementation of the variational spike-and-slab from \cite{carbonetto_varbvs_2017}, which employs a hybrid hyperparameter specification scheme, wherein the $\pi$ candidates are averaged and $\sigma^2$ and $\sigma^2_\theta$ are selected via Empirical Bayes for each of the $\pi$ candidates.

In each of the experiments, we assign individuals $1,...,n_1$ to the first level of a binary discrete covariate $\bz_i=1$, and the remaining individuals to the second level $\bz_i=2$. We refer to the structure of the sample as balanced when $n_1=n-n_1:=n_2$, and unbalanced when $n_1\neq n_2$. 

\subsubsection{Covariate-Independent Setting}\label{sec:covindep}

We first consider a covariate-independent setting where the ground truth dependence structure is independent of $\bz_i$ and set $n=100, p=10$. To construct the precision matrices, we first define
\begin{eqnarray*}
	\lambda_{\bz_i}&=& {[c{\bf 1}_4~~ , ~~{\bf 0}_{p-3}]}^{\mathrm{T}}, \text{ for both } \bz_i=1,2,
\end{eqnarray*}
where ${\bf 1}_{4}$ is a four-dimensional vector of ones, and ${\bf 0}_{p-3}$ is a $(p-3)$-dimensional vector of zeroes. We refer to $c=15$ as high signal, and $c=3$ as reduced signal. We next define the precision matrix for the $i$-th individual as $\Omega_i=(\lambda_{\bz_i} {\lambda_{\bz_i}}^{\mathrm{T}} + 10 \mathbb{I}_{p+1})$. The corresponding dependence structure we aim to estimate is
\[
\mathrm{G}^*=  \begin{bmatrix} 
\mathbb{J}_4 - \mathbb{I}_4 & {\bf 0}_{4,p-3}  \\
{\bf 0}_{p-3,4}  &{\bf 0}_{p-3,p-3}
\end{bmatrix}
\]
where $\mathbb J_k$ is a $k\times k$ matrix where all entries are $1$.

We perform experiments in the covariate-independent setting on high signal with balanced structure, reduced signal with balanced structure, and high signal with unbalanced structure ($n_1 = 80,n_2=20$). We present results for each of the experiments in Table \ref{tab:cov_indep}. When the signal strength is high and the sample is balanced, all three methods correctly detect all of the edges in the ground truth structure. However, the performance of both competitors suffers under the unbalanced sample structure, particularly for CS, while W-PL correctly detects all edges. In the reduced signal setting, the differential between W-PL and the competitors grows significantly. 

\begin{table}[]
	\centering
	\begin{tabular}[t]{llllll}
		\toprule
		$c$ & $n_1$ & $n_2$ & Method & Sensitivity$(\uparrow)$ & Specificity$(\uparrow)$\\
		\midrule
		&  &  & W-PL & $\mathbf{0.8517} (0.1591)$ & $0.9843 (0.0168)$\\
		
		&  &  & mgm & $0.1417 (0.1229)$ & $\mathbf{0.9988} (0.0033)$\\
		
		\multirow{-3}{*}{\raggedright\arraybackslash 3} & \multirow{-3}{*}{\raggedright\arraybackslash 50} & \multirow{-3}{*}{\raggedright\arraybackslash 50} & CS & $0.2950 (0.1489)$ & $0.9929 (0.0081)$\\
		\cmidrule{1-6}
		&  &  & W-PL & $\mathbf{1.0000} (0.0000)$ & $0.9941 (0.0111)$\\
		
		&  &  & mgm & $\mathbf{1.0000} (0.0000)$ & $\mathbf{0.9990} (0.0031)$\\
		
		\multirow{-3}{*}{\raggedright\arraybackslash 15} & \multirow{-3}{*}{\raggedright\arraybackslash 50} & \multirow{-3}{*}{\raggedright\arraybackslash 50} & CS & $\mathbf{1.0000} (0.0000)$ & $0.9955 (0.0080)$\\
		\cmidrule{1-6}
		&  &  & W-PL & $\mathbf{1.0000} (0.0000)$ & $\mathbf{1.0000} (0.0000)$\\
		
		&  &  & mgm & $0.9013 (0.0602)$ & $0.9992 (0.0033)$\\
		
		\multirow{-3}{*}{\raggedright\arraybackslash 15} & \multirow{-3}{*}{\raggedright\arraybackslash 80} & \multirow{-3}{*}{\raggedright\arraybackslash 20} & CS & $0.8147 (0.0215)$ & $0.9980 (0.0052)$\\
		\bottomrule
	\end{tabular}
	\caption{\it Results for the covariate-independent setting, presented as $\textit{mean} (\textit{standard deviation})$}
	\label{tab:cov_indep}
\end{table}

\subsubsection{Covariate-Free Setting}

We next examine a setting identical to the covariate-independent one, again with $n=100,p=10$, however, this time, assume that no information on the covariates is available. In the absence of covariate information, W-PL selects all weights to be equal to one. Thus, the graph estimates are identical for all the individuals in this setting, akin to the usual graph selection algorithms. Because mgm requires the timepoints to be specified, we omit it from these experiments. 

We present the results for experiments in the covariate-free setting with high and low signal in Table \ref{tab:cov_free}. Unsurprisingly, results for both methods are similar. The minor differences in performance may be attributed to the differing hyperparameter specification schemes.  

\begin{table}[]
	\centering
	\begin{tabular}[t]{llll}
		\toprule
		$c$ & Method & Sensitivity$(\uparrow)$ & Specificity$(\uparrow)$\\
		\midrule
		& W-PL & $\mathbf{0.9533} (0.1168)$ & $\mathbf{0.9996} (0.0029)$\\
		
		\multirow{-2}{*}{\raggedright\arraybackslash 3} & CS & $0.9233 (0.1313)$ & $0.9939 (0.0103)$\\
		\cmidrule{1-4}
		& W-PL & $\mathbf{1.0000} (0.0000)$ & $\mathbf{1.0000} (0.0000)$\\
		
		\multirow{-2}{*}{\raggedright\arraybackslash 15} & CS & $\mathbf{1.0000} (0.0000)$ & $0.9963 (0.0079)$\\
		\bottomrule
	\end{tabular}
	\caption{\it Results for covariate-free setting}
	\label{tab:cov_free}
\end{table}

\subsubsection{Covariate-Dependent Setting}\label{sec:cov_dep}

We next consider the setting in which the precision matrix varies with the covariate level. We define the relationship as 
\begin{eqnarray*}
	\lambda_{\bz_i}&=& {[c{\bf 1}_4~~ , ~~{\bf 0}_{p-3}]}^{\mathrm{T}}, \text{ if } \bz_i=1,\text { and}\\
	\lambda_{\bz_i} &=& {[{\bf 0}_{p-3}~~ , ~~c{\bf 1}_{4}]}^{\mathrm{T}}, \text{ if } \bz_i=2.
\end{eqnarray*}
As before, we define the precision matrices as $\Omega_i=(\lambda_{\bz_i} {\lambda_{\bz_i}}^{\mathrm{T}} + 10 \mathbb{I}_{p+1}) $, and thus, the true graph structure $\mathrm{G}^{*}(\bz)$ for an individual with covariate value $\bz$  is
\[
\mathrm{G}^{*}(1)=  \begin{bmatrix} 
\mathbb{J}_4 -  \mathbb{I}_4 & {\bf 0}_{4,p-3}  \\
{\bf 0}_{p-3,4}  &{\bf 0}_{p-3,p-3}
\end{bmatrix}
, \quad
\mathrm{G}^{*}(2)=  \begin{bmatrix} 
{\bf 0}_{p-3,p-3} & {\bf 0}_{p-3,4} \\
{\bf 0}_{4,p-3} &\mathbb{J}_4 -\mathbb{I}_4
\end{bmatrix}.
\]
We visualize these precision matrices and the corresponding dependence structures for $p=10$ in Figure \ref{fig:precmats}.
\begin{figure}[htbp]
	\begin{center}
		\begin{tabular}{ccc}
			\includegraphics[width=0.23\linewidth]{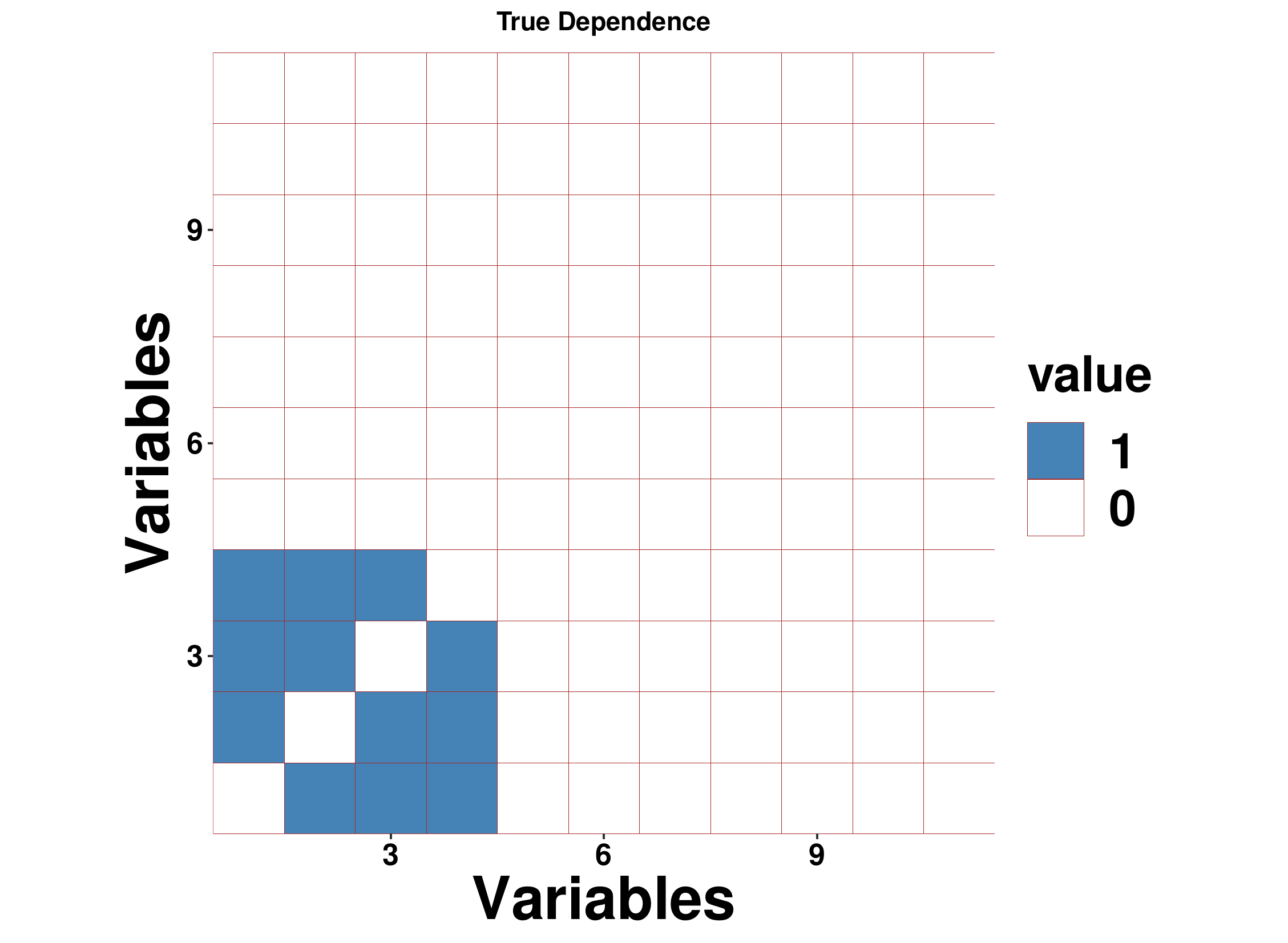} &
			\includegraphics[width=0.23\linewidth]{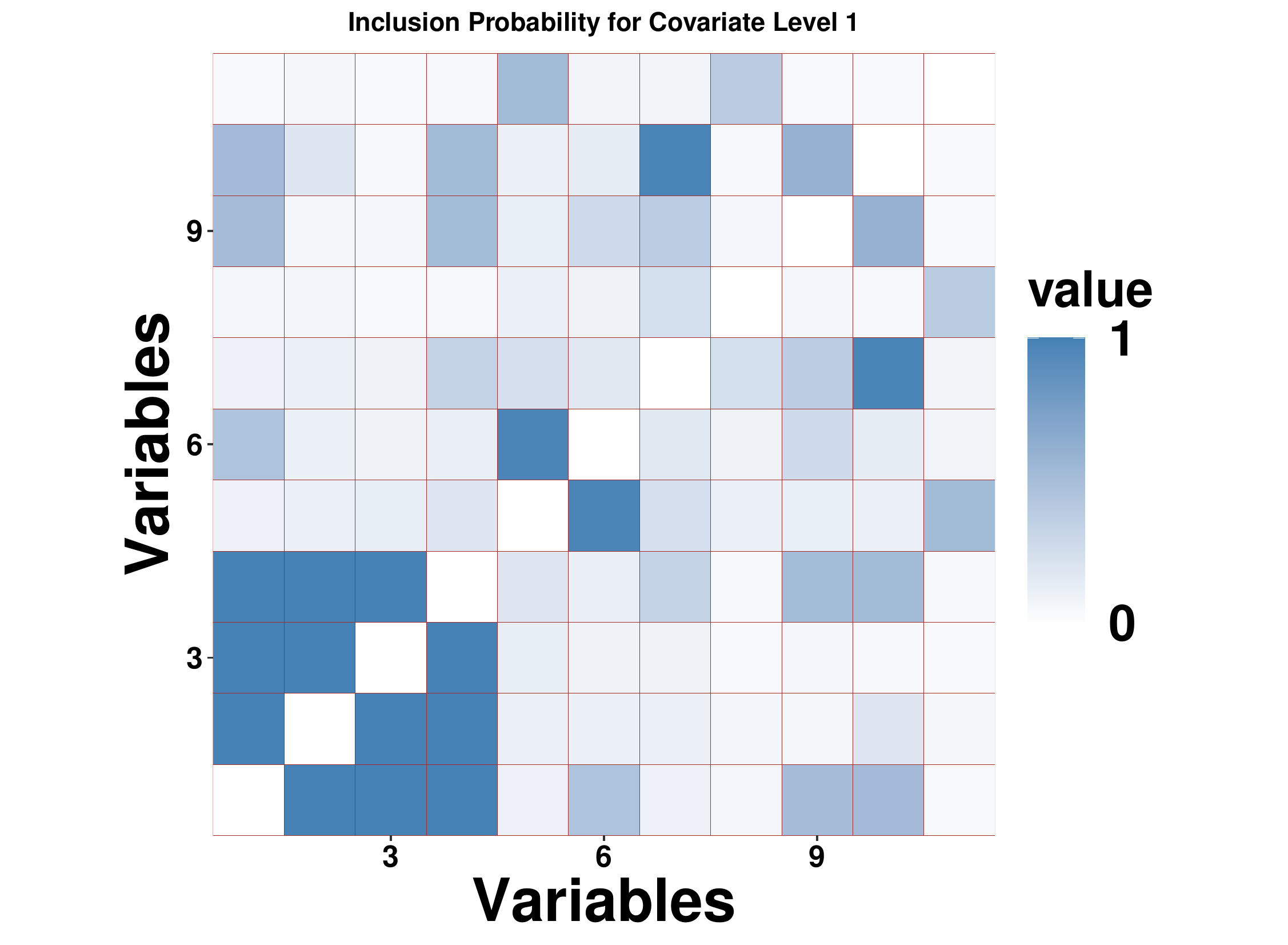}&
			\includegraphics[width=0.23\linewidth]{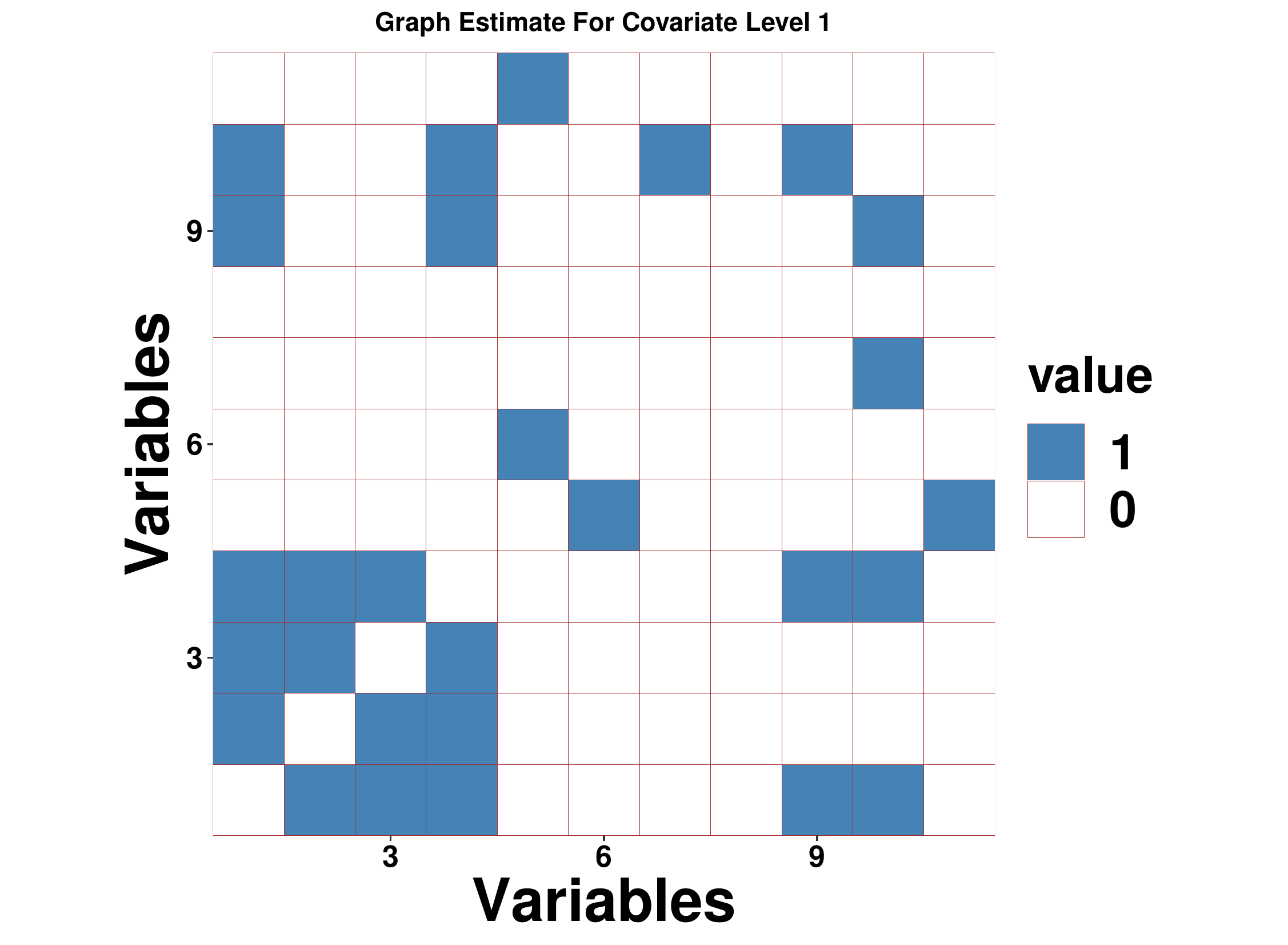}\\
			\includegraphics[width=0.23\linewidth]{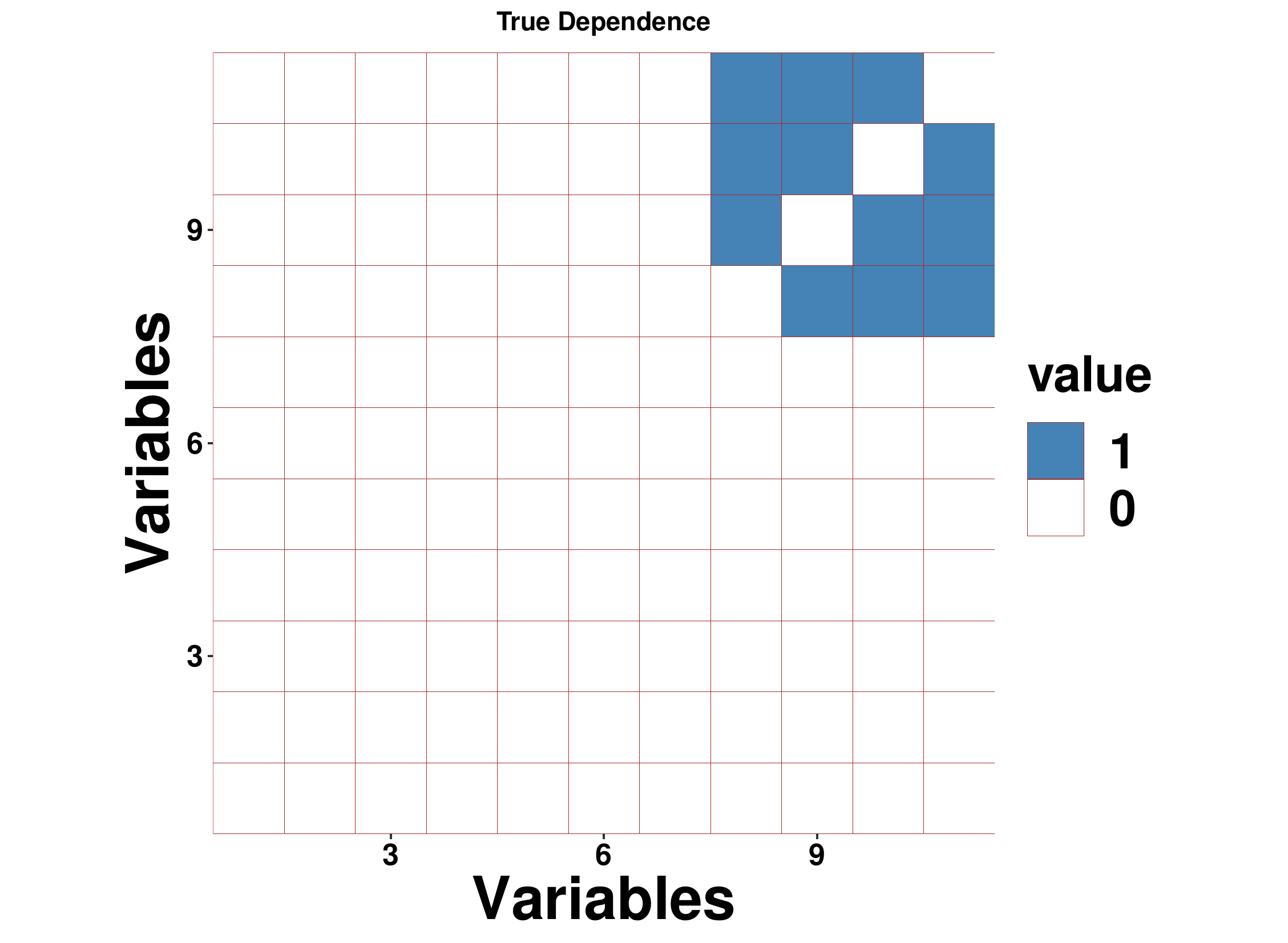} &
			\includegraphics[width=0.23\linewidth]{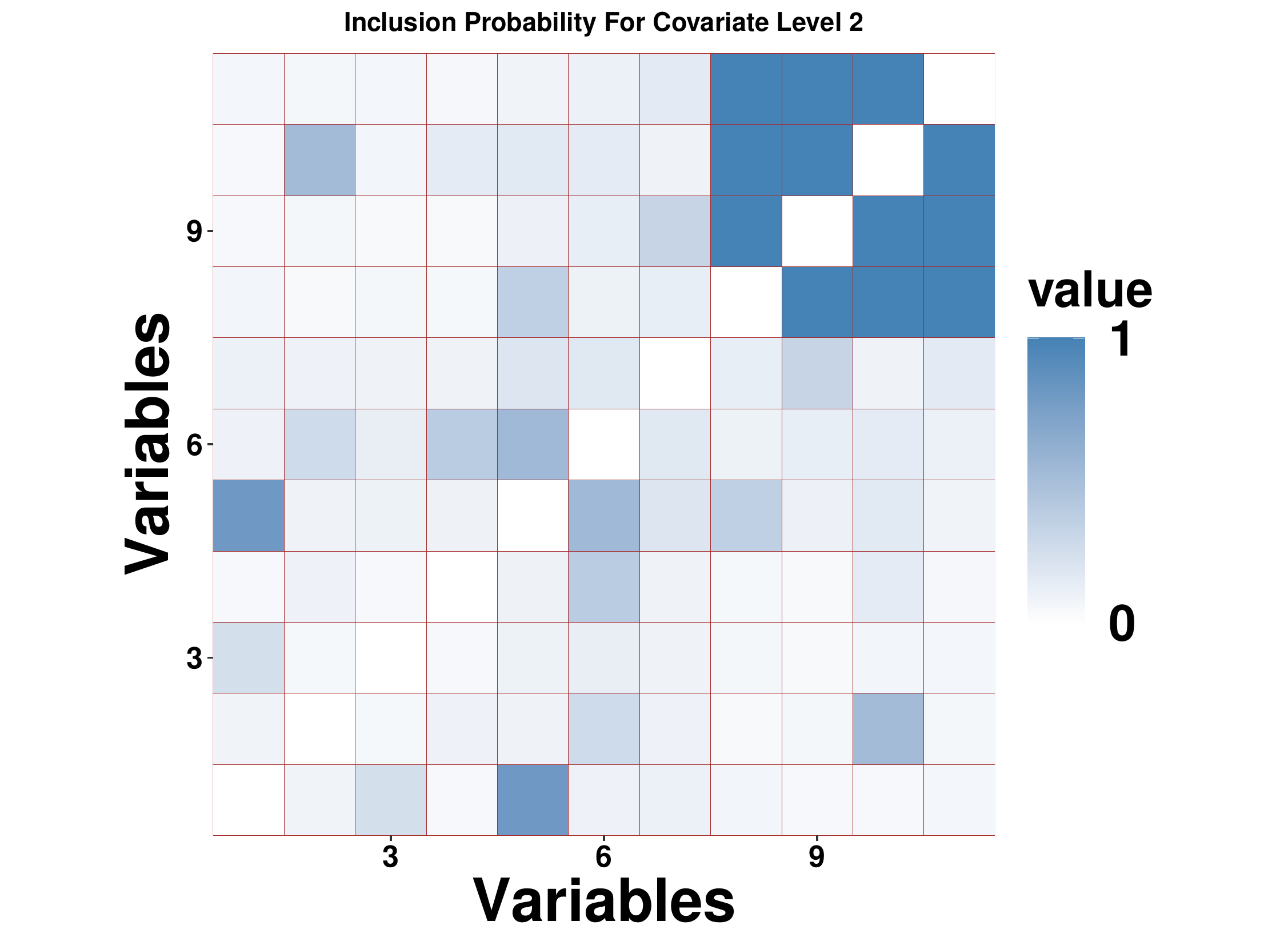} &
			\includegraphics[width=0.23\linewidth]{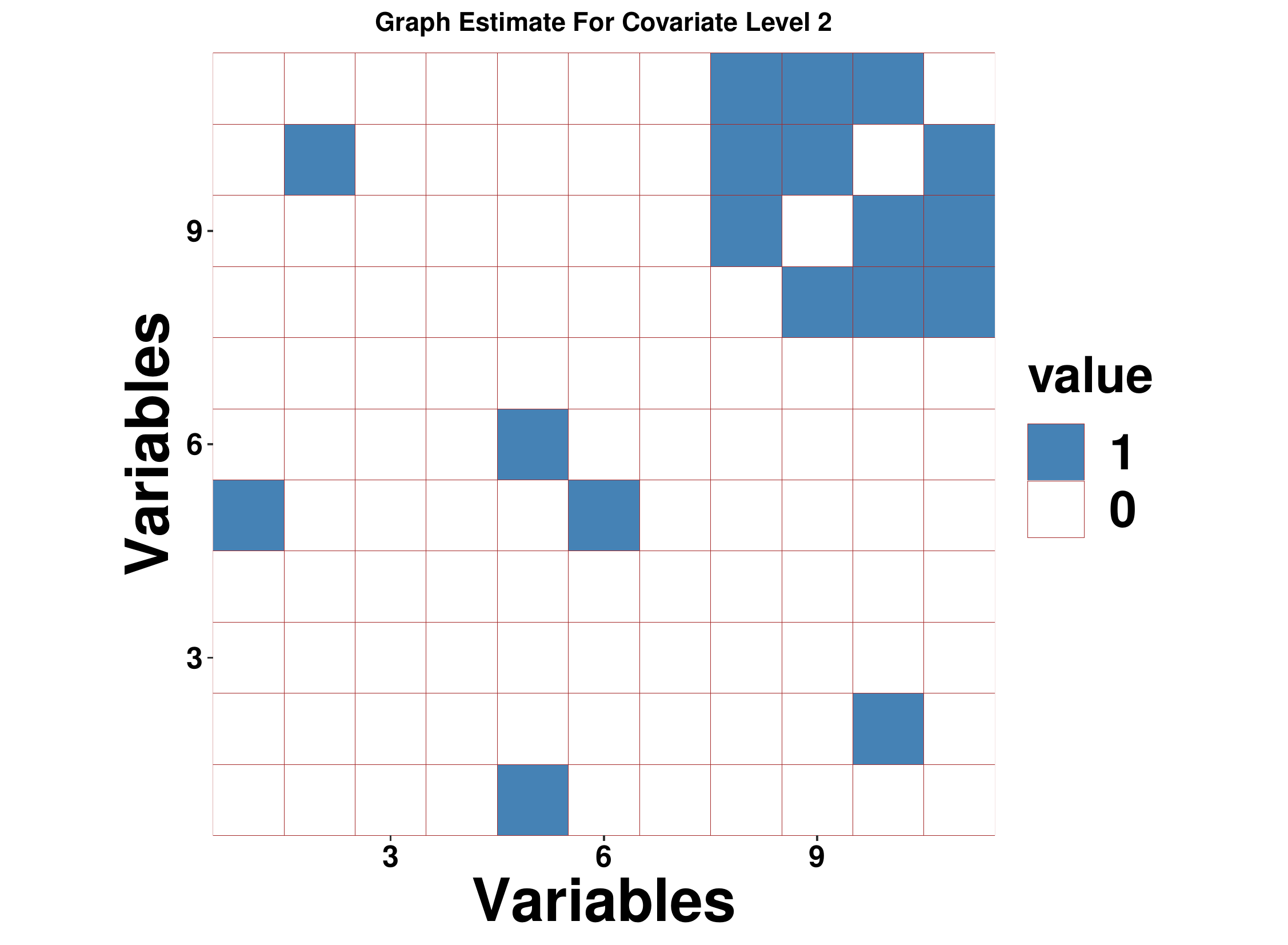}\\
		\end{tabular}
		\caption{\it Left to right: True dependence structures, estimated inclusion probabilities and estimated graphs for a sample simulation with two discrete covariate levels with p=10.}
		\label{fig:precmats}
	\end{center}
\end{figure}

In addition to varying signal strength and sample structure for $p=10$, we additionally vary the dimension of the data to $p=30$ and $p=50$ with high signal strength and balanced samples. In all experiments, we fix $n=100$. We present results for these experiments in Table \ref{tab:cov_dep}. 

While the performance of W-PL and mgm are similar for $p=10$, as $p$ increases, the performance of mgm deteriorates. On the other hand, W-PL and CS demonstrate robustness to the increased sample size. As in the covariate-independent setting, the performance of both mgm and CS is significantly harmed relative to W-PL when faced with reduced signal.   

\begin{table}[]
	\centering
	\begin{tabular}[t]{lllllll}
		\toprule
		$p$ & $c$ & $n_1$ & $n_2$ & Method & Sensitivity$(\uparrow)$ & Specificity$(\uparrow)$\\
		\midrule
		&  &  &  & W-PL & $\mathbf{0.5800} (0.1859)$ & $0.9900 (0.0133)$\\
		
		&  &  &  & mgm & $0.0867 (0.0937)$ & $\mathbf{0.9980} (0.0046)$\\
		
		\multirow{-3}{*}{\raggedright\arraybackslash 10} & \multirow{-3}{*}{\raggedright\arraybackslash 3} & \multirow{-3}{*}{\raggedright\arraybackslash 50} & \multirow{-3}{*}{\raggedright\arraybackslash 50} & CS & $0.2950 (0.1545)$ & $0.9927 (0.0090)$\\
		\cmidrule{1-7}
		&  &  &  & W-PL & $\mathbf{1.0000} (0.0000)$ & $0.9849 (0.0187)$\\
		
		&  &  &  & mgm & $0.9950 (0.0354)$ & $\mathbf{0.9982} (0.0053)$\\
		
		\multirow{-3}{*}{\raggedright\arraybackslash 10} & \multirow{-3}{*}{\raggedright\arraybackslash 15} & \multirow{-3}{*}{\raggedright\arraybackslash 50} & \multirow{-3}{*}{\raggedright\arraybackslash 50} & CS & $\mathbf{1.0000} (0.0000)$ & $0.9953 (0.0072)$\\
		\cmidrule{1-7}
		&  &  &  & W-PL & $0.7973 (0.0189)$ & $0.9867 (0.0106)$\\
		
		&  &  &  & mgm & $\mathbf{0.8127} (0.0212)$ & $\mathbf{0.9990} (0.0028)$\\
		
		\multirow{-3}{*}{\raggedright\arraybackslash 10} & \multirow{-3}{*}{\raggedright\arraybackslash 15} & \multirow{-3}{*}{\raggedright\arraybackslash 80} & \multirow{-3}{*}{\raggedright\arraybackslash 20} & CS & $0.8093 (0.0166)$ & $0.9983 (0.0046)$\\
		\cmidrule{1-7}
		&  &  &  & W-PL & $\mathbf{1.0000} (0.0000)$ & $0.9926 (0.0036)$\\
		
		&  &  &  & mgm & $0.7567 (0.1812)$ & $\mathbf{0.9997} (0.0006)$\\
		
		\multirow{-3}{*}{\raggedright\arraybackslash 30} & \multirow{-3}{*}{\raggedright\arraybackslash 15} & \multirow{-3}{*}{\raggedright\arraybackslash 50} & \multirow{-3}{*}{\raggedright\arraybackslash 50} & CS & $\mathbf{1.0000} (0.0000)$ & $0.9976 (0.0018)$\\
		\cmidrule{1-7}
		&  &  &  & W-PL & $0.9867 (0.0425)$ & $0.9958 (0.0016)$\\
		
		&  &  &  & mgm & $0.4550 (0.2022)$ & $\mathbf{0.9999} (0.0002)$\\
		
		\multirow{-3}{*}{\raggedright\arraybackslash 50} & \multirow{-3}{*}{\raggedright\arraybackslash 15} & \multirow{-3}{*}{\raggedright\arraybackslash 50} & \multirow{-3}{*}{\raggedright\arraybackslash 50} & CS & $\mathbf{0.9983} (0.0118)$ & $0.9982 (0.0009)$\\
		\bottomrule
	\end{tabular}
	\caption{\it Results for discrete covariate-dependent setting}
	\label{tab:cov_dep}
\end{table}

\subsubsection{High-Dimensional Setting}\label{hidim}

Our last series of experiments with the discrete covariate deals with a challenging high-dimensional setting where $p\geq n$. To handle the increased dimensionality, we found it necessary to modify our hyperparameter specification scheme for this experiment. We use  \cite{carbonetto2012scalable} to obtain an Empirical Bayes estimate to the hyperparameter $\sigma^2$, and use a grid search to optimize over the hyperparameters $\pi$ and $\sigma_{\theta}^2$ using the ELBO as our objective function. For the bandwidth hyperparameter, we consider an ad-hoc choice of $\tau=0.1$. We only perform 20 trials per experiment in this setting.

We maintain the relationship between the covariates and the ground truth structure as in Section \ref{sec:cov_dep} and first consider an unbalanced setting with $p=100, n_1=40, n_2=10$ and high signal. We present results from this setting in Figure \ref{fig:n50p100}, and exclude mgm from this experiment due to its deteriorating performance with large $p$ and high time-complexity. Note that as only $10$ observations belong to level 2, separate estimation through CS suffers significantly compared to W-PL when estimating the graph for level 2. 

\begin{figure}[htbp]
	\centering
	\includegraphics[width=0.8\linewidth]{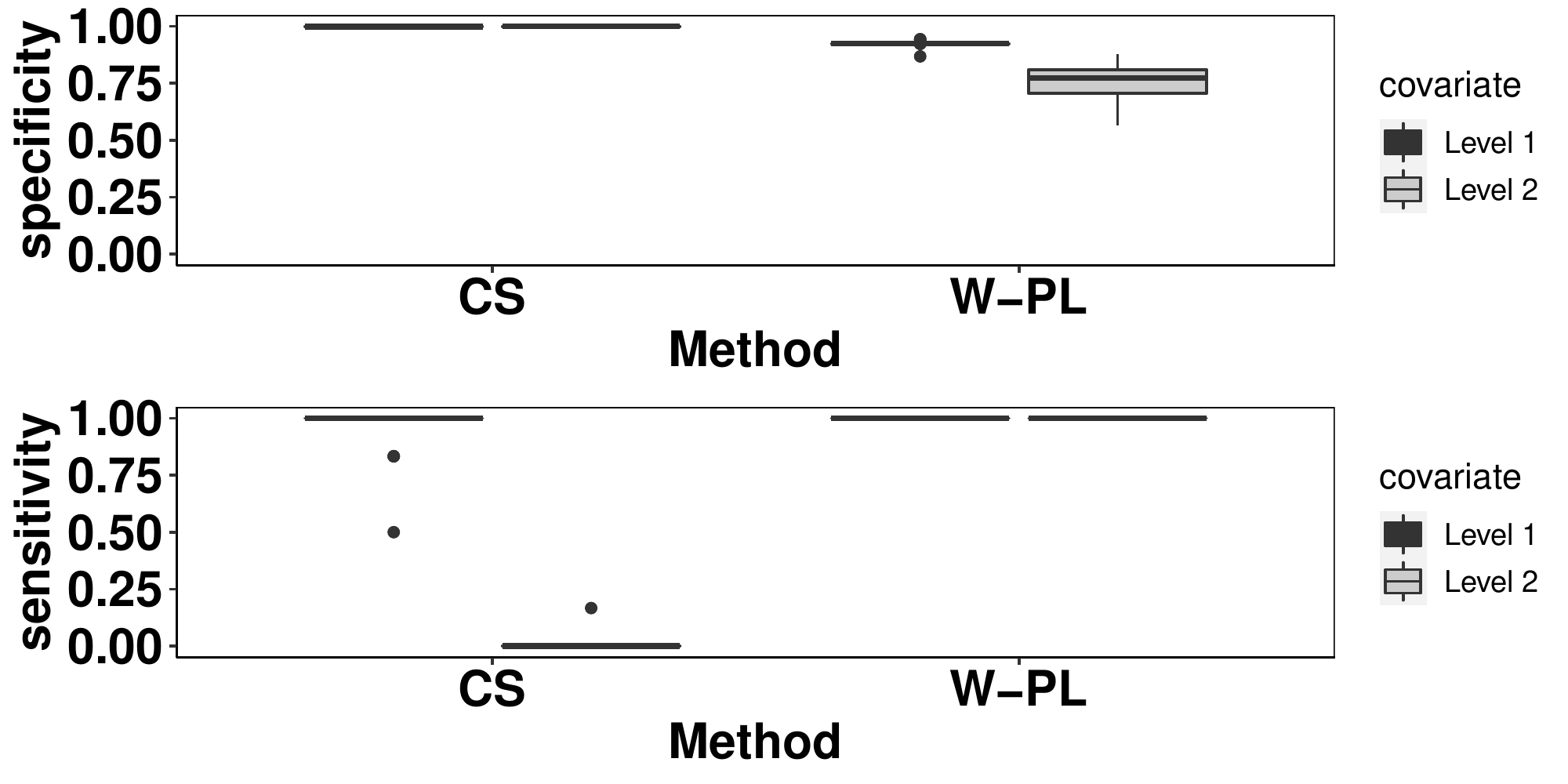}
	\caption{\it Results for covariate-dependent setting with $n=50,p=100$. Top row: Specificity CS vs W-PL; Bottom row: sensitivity CS vs W-PL}
	\label{fig:n50p100}
\end{figure}

Next, to demonstrate how the signal-to-noise ratio (SNR) influences the performance of our approach, we study several further experiments in the high-dimensional setting with $n=50,p=50$ and $n=50,p=100$, $n_1=20,n_2=30$, $\lambda=(c {\bf 1}_4,{\bf 0}_{p-3} )^{\mathrm{T}}$ keeping other settings the same. Note that the SNR is controlled by $c$. 

To assess performance under sparsity and with weak signal strength, we analyze the area under the receiver operating characteristic curve (AUC). By varying the threshold for a posterior inclusion probability to indicate an edge, we obtain a sequence of true and false positive ratios that we may use to calculate the corresponding AUC. AUC can also be defined by the fraction of pairs that the prediction ordered correctly: let $y_1,...,y_n$ be the $0$ and $1$ responses and $p_1,...,p_n$ be the corresponding predicted probabilities. The AUC can then be calculated as
$${\sum_{i=1}^n \sum_{j=1}^n \ind{\{y_i<y_j\}} \ind{\{p_i<p_j\}}}/\sum_{i=1}^n \sum_{j=1}^n \ind_{\{y_i<y_j\}}$$ 
We present results from these experiments in Figure~\ref{fig:auc_snr}. As the signal strength (i.e., $c$) increases, the AUC also increases from around $0.5$ to $1$. When the SNR is low, W-PL does not work well and produces an AUC close to $0.5$, which essentially is a random guess as to the presence of an edge. However, when there are sufficient observations and the SNR is high, the AUC for  level 2 exceeds $0.9$. 

Because of the low level of observations in the first level of the covariate ($n_1=20$), separate estimation with CS does not perform well. W-PL consistently outperforms CS for both level 1 and level 2.  

\begin{figure}
	\centering
	\begin{subfigure}{0.49\textwidth}
		\includegraphics[width=\textwidth]{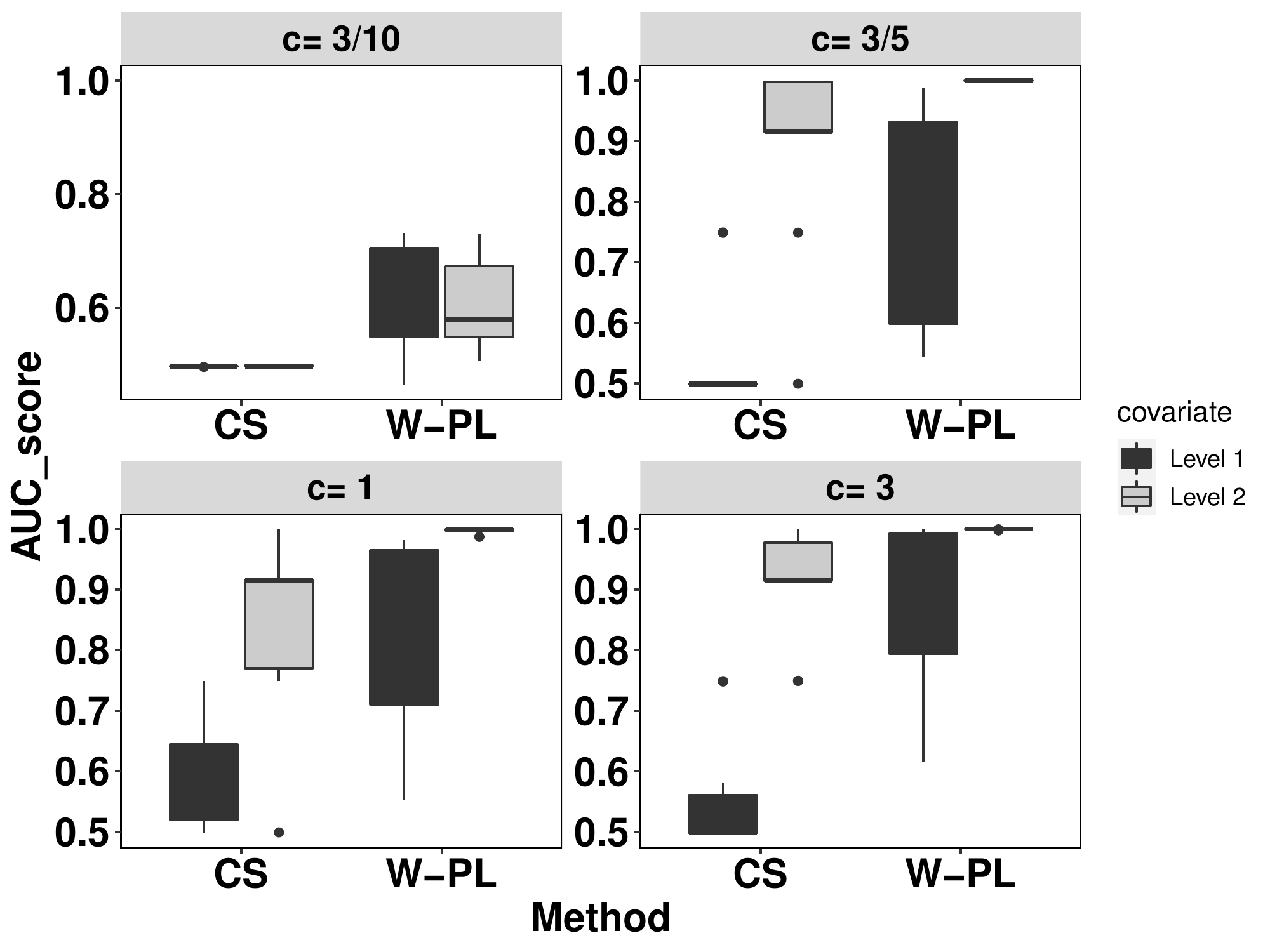}
		\caption{$n=50$, $p=50$}
	\end{subfigure}
	\begin{subfigure}{0.49\textwidth}
		\includegraphics[width=\textwidth]{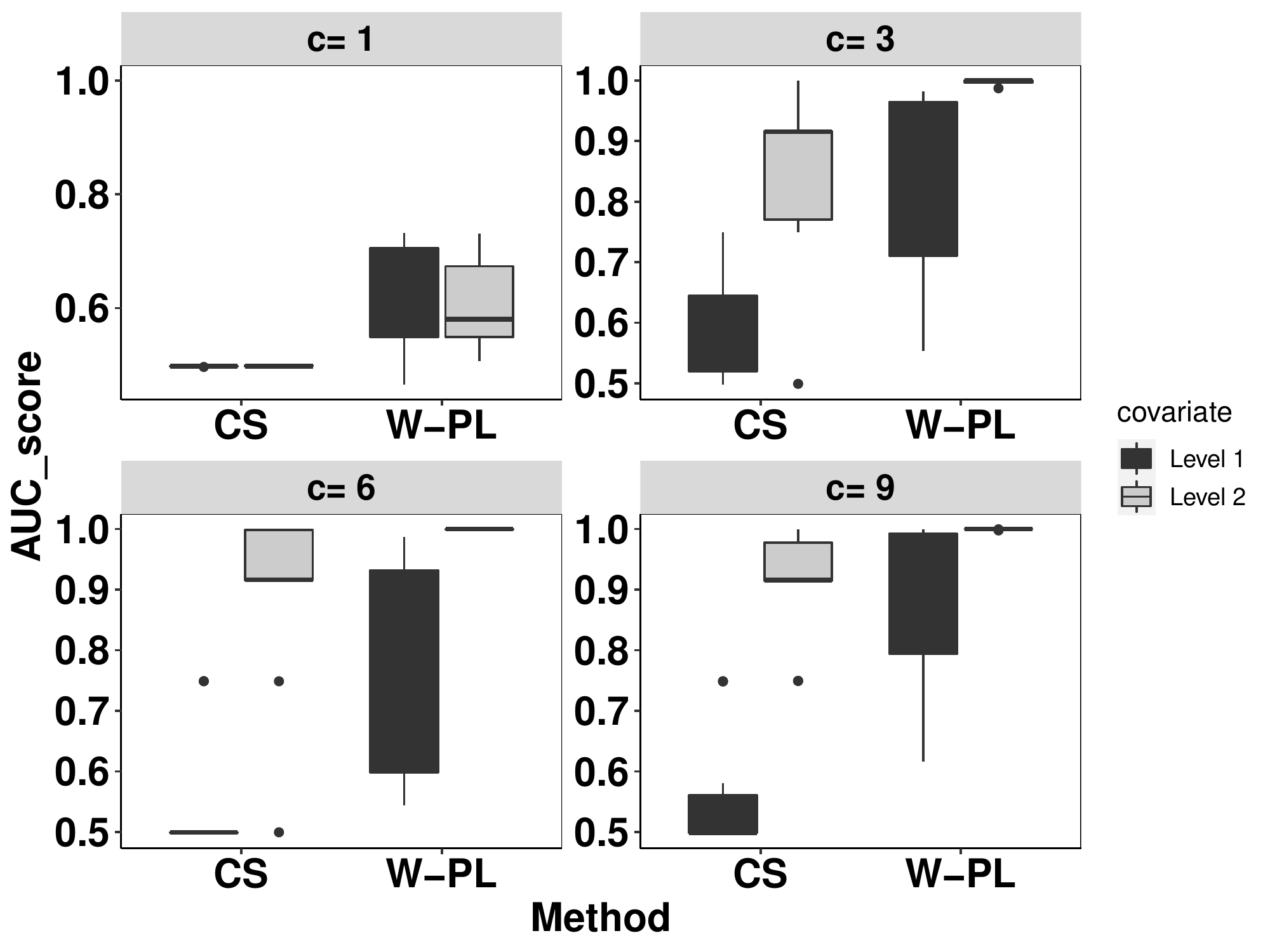}
		\caption{$n=50$, $p=100$}
	\end{subfigure}
	\caption{\it CS versus W-PL method, measured by the AUC scores between the true graph structure and the calculated inclusion probabilities.}
	\label{fig:auc_snr}
\end{figure}

\subsection{Departure from Gaussian assumption}\label{ssec:dGauss}
The method theoretically is built on the assumption that the true data generation is Gaussian, while the pseudo-likelihood approach is used mostly as a tool for estimation. To study the effects of departures from Gaussianity, we have investigated two scenarios. Firstly, we consider the situation where the data is contaminated, that is,  the data comes from a Gaussian  distribution with an independent structure, $c\%$ of which is contaminated by data coming from an unrelated independent Gaussian distribution. Figure \ref{fig1111} shows the results with $5\%$ contamination. The results, however, get worse as the amount of contamination increases. Secondly, we consider the $t$-distribution with varying degrees of freedom. Figure \ref{fig3612} shows the sensitivity and specificity for varying degrees of freedom. The results indicate that for degrees of freedom greater than $6$, the results are stable, and naturally shows improvement as the degrees of freedom increases. However, for degrees of freedom less than 6, the performance suffers, as shown in the left panel.

\begin{figure}[htbp]
	\begin{center}
		\begin{tabular}{ccc}
			\includegraphics[width=0.3\linewidth]{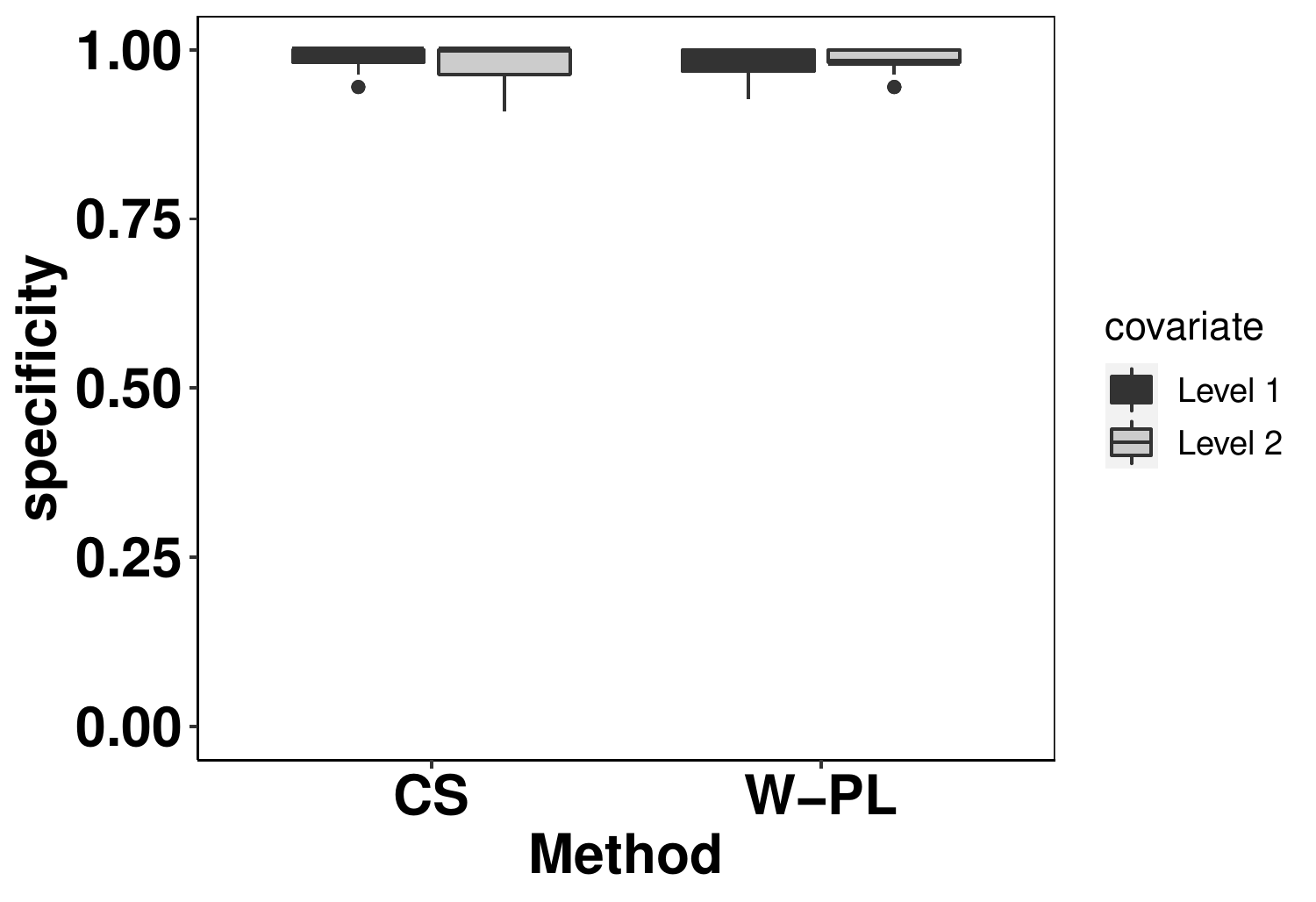}&
			\includegraphics[width=0.3\linewidth]{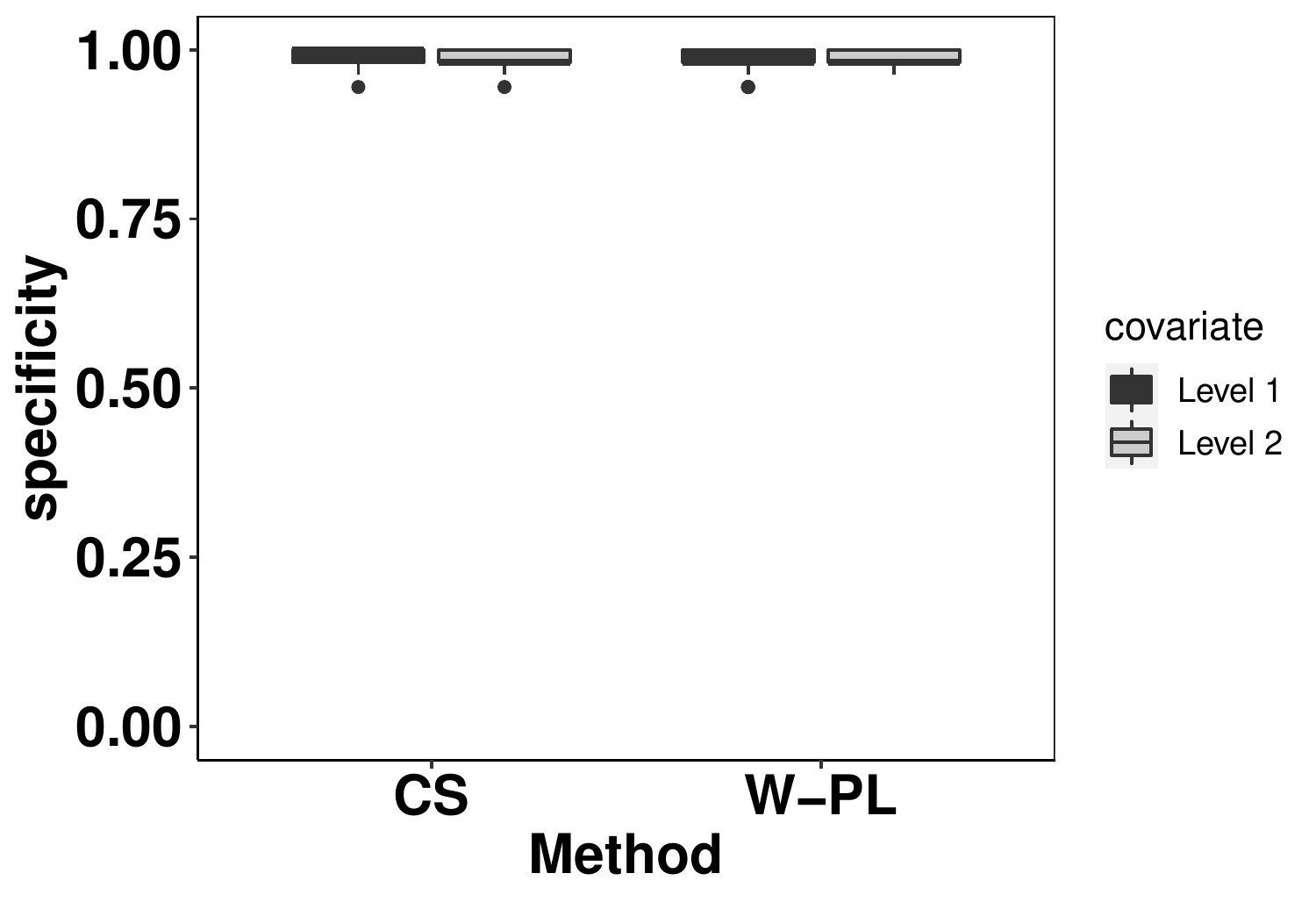} &
			\includegraphics[width=0.3\linewidth]{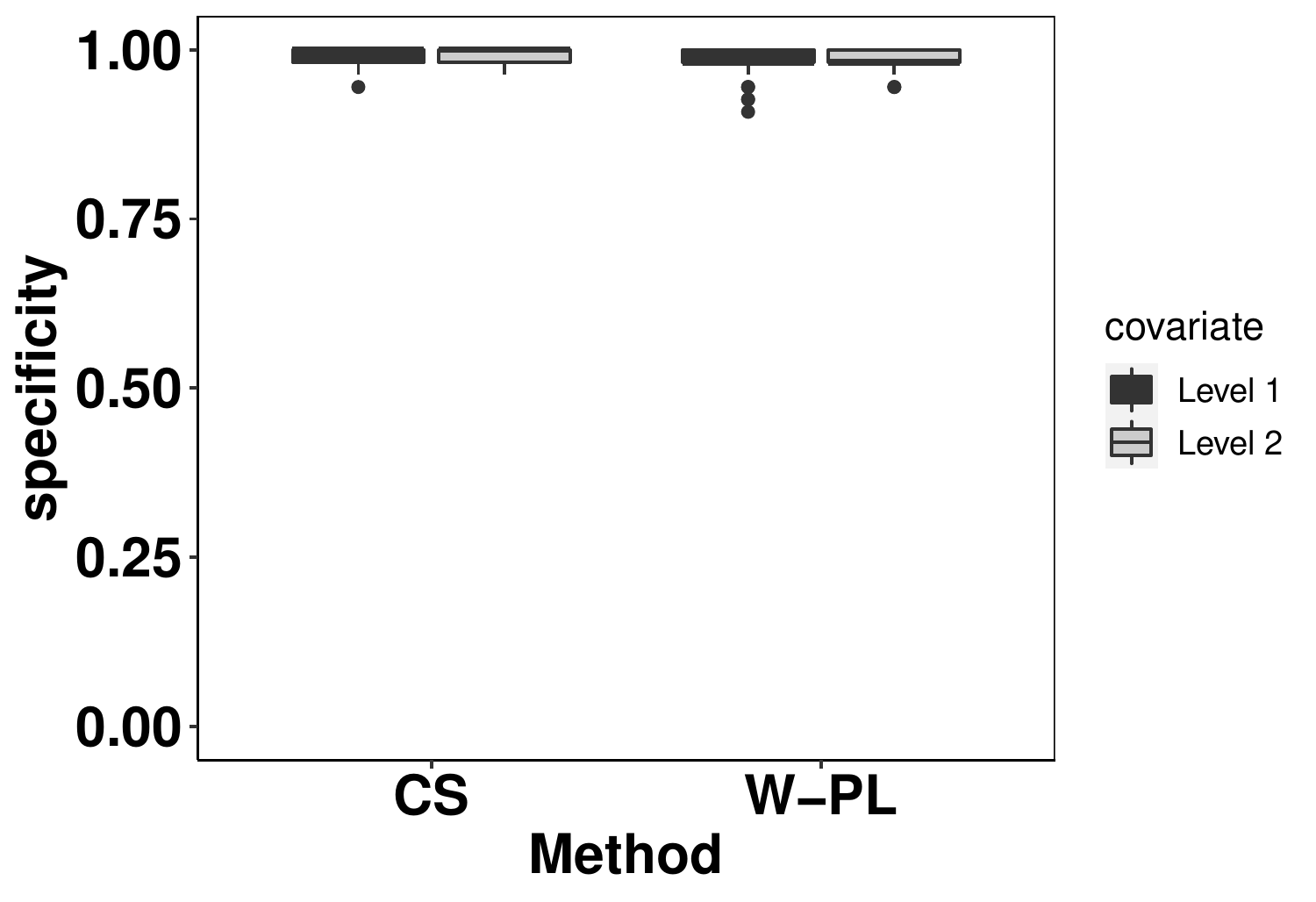}\\
			\includegraphics[width=0.3\linewidth]{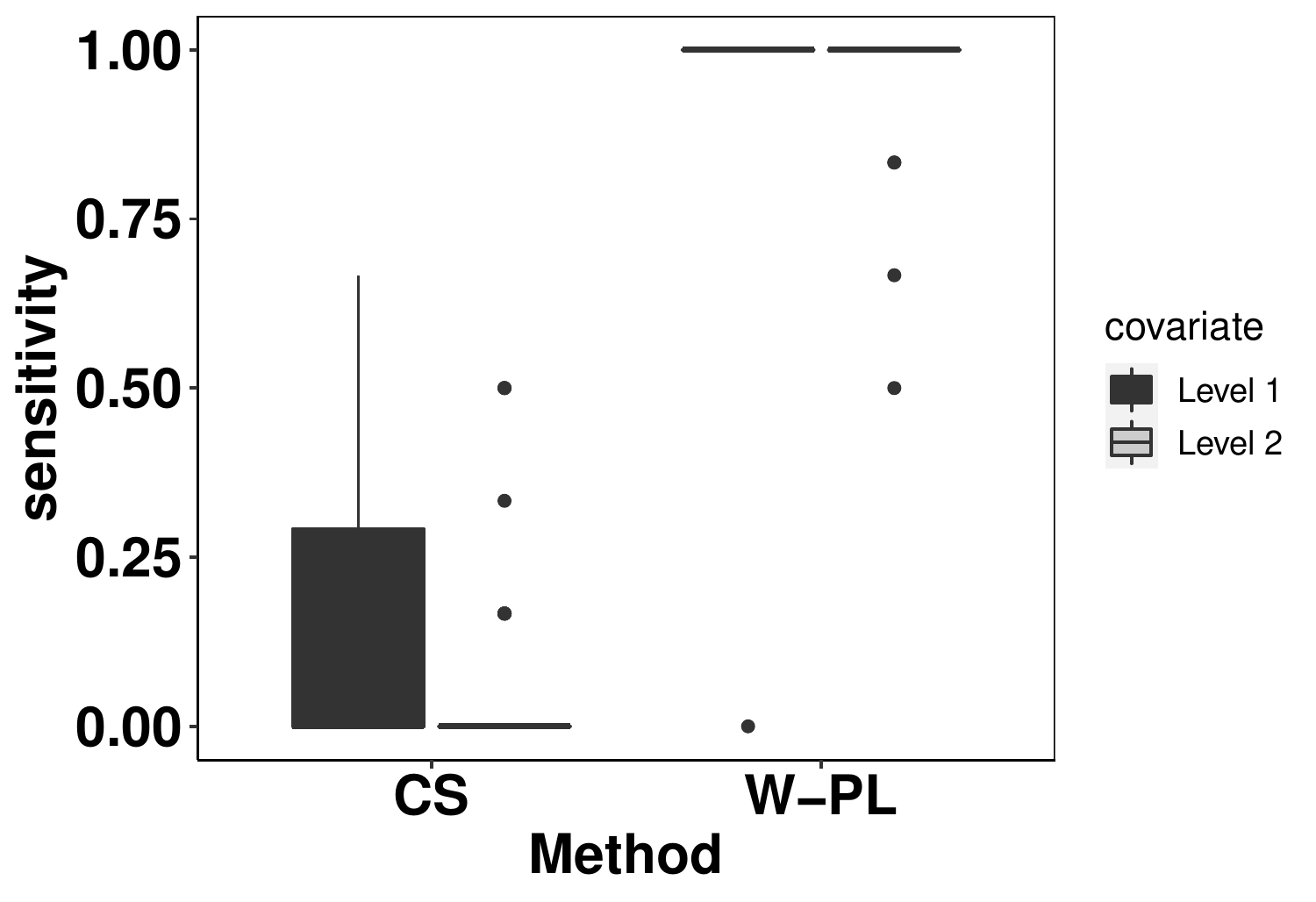}&
			\includegraphics[width=0.3\linewidth]{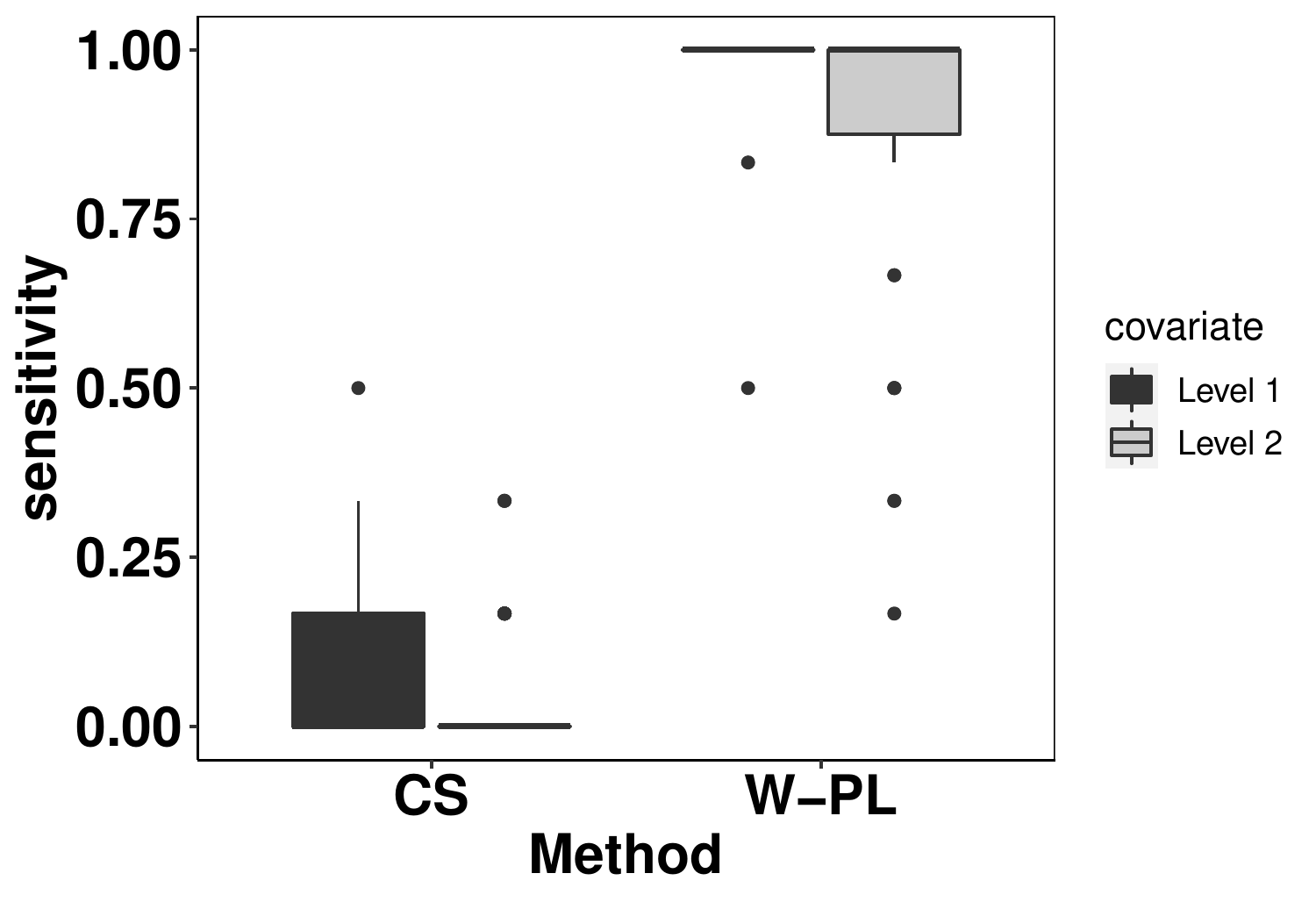}&
			\includegraphics[width=0.3\linewidth]{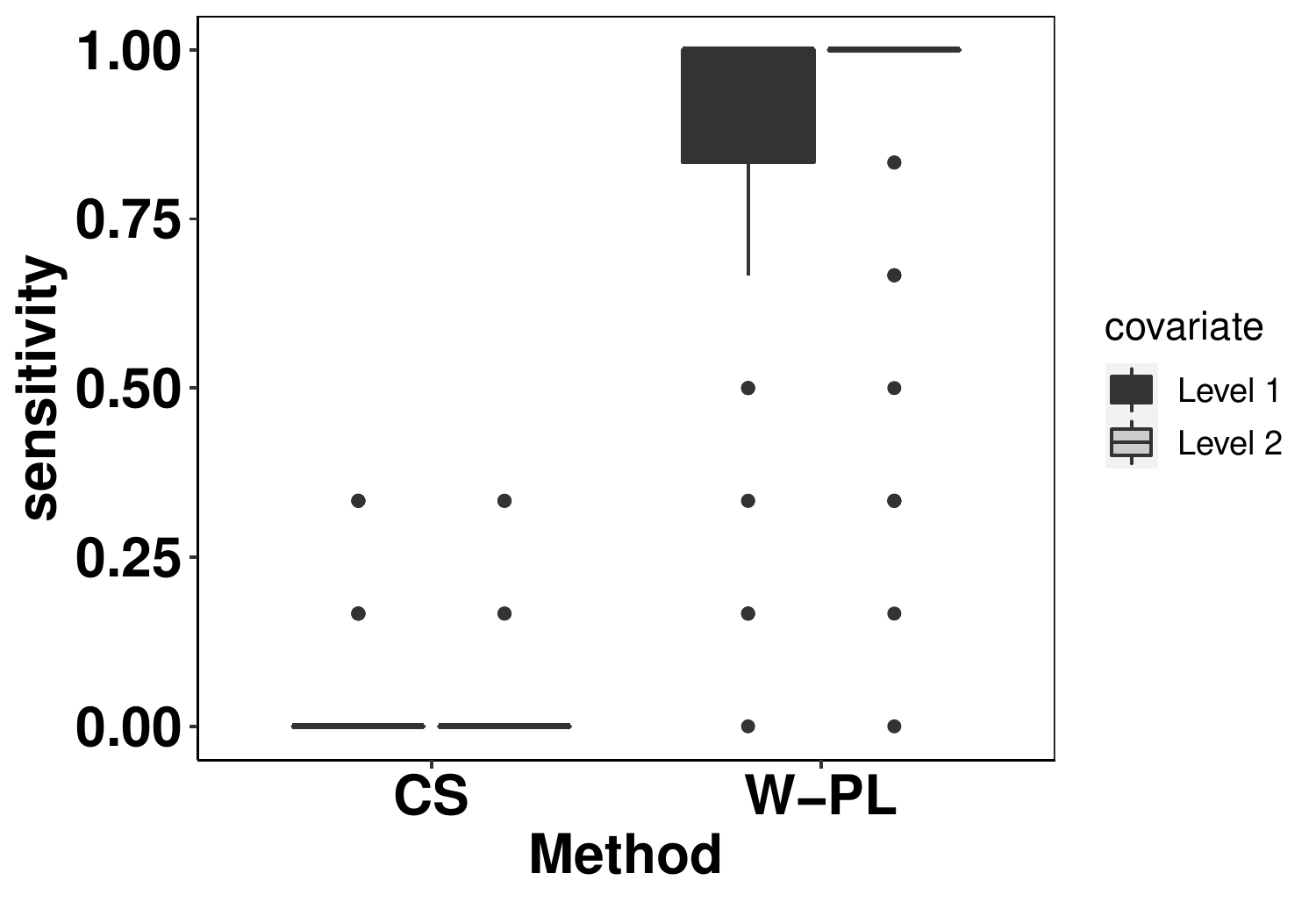}\\
		\end{tabular}
		\caption{\it Specificity and sensitivity comparisons between CS VS W-PL for data with $2\%$ (left), $5\%$ (middle) and $10\%$ (right) contamination. }
		\label{fig1111}
	\end{center}
\end{figure}

\begin{figure}[htbp]
	\begin{center}
		\begin{tabular}{ccc}
			\includegraphics[width=0.3\linewidth]{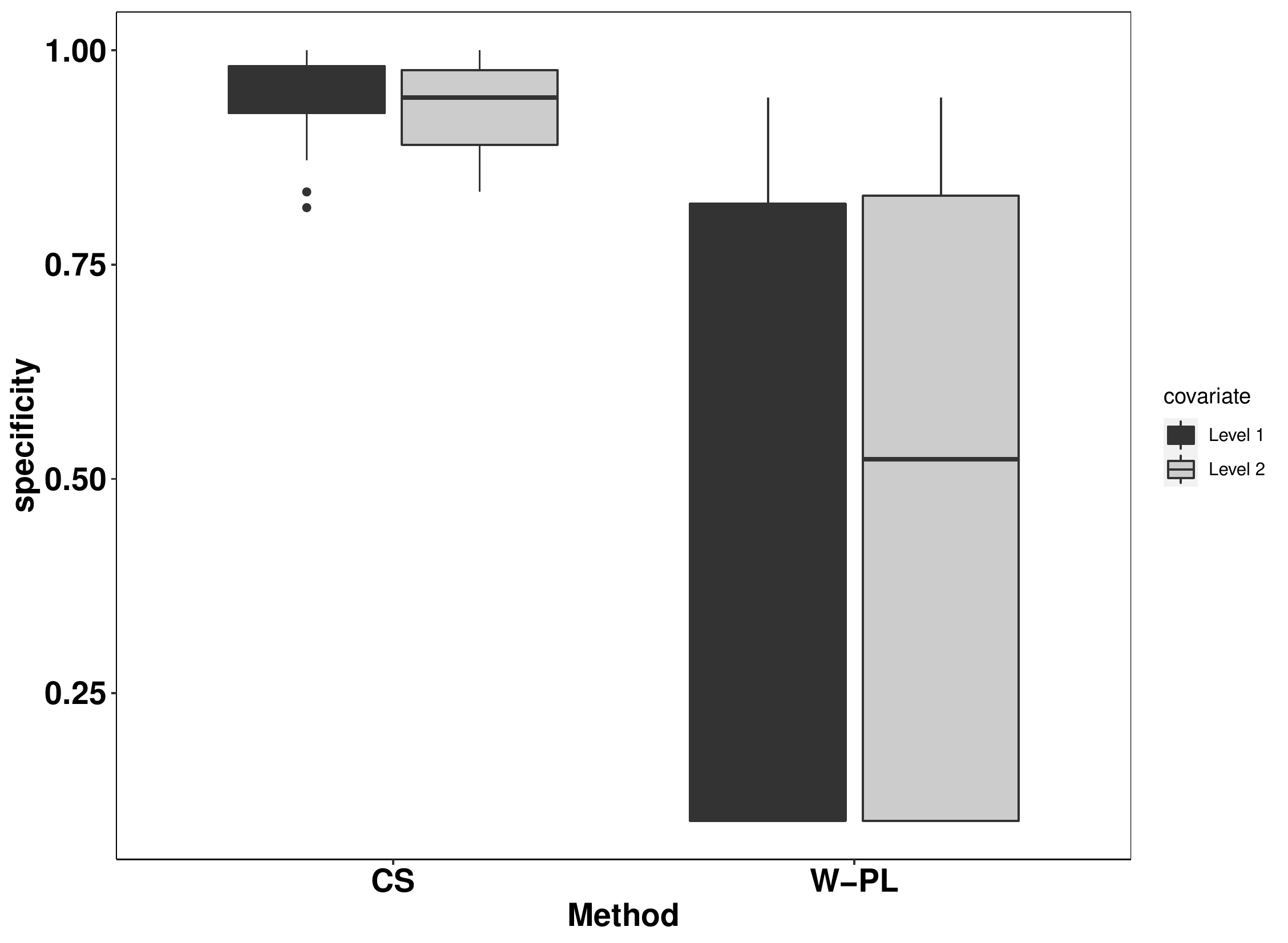} &
			\includegraphics[width=0.3\linewidth]{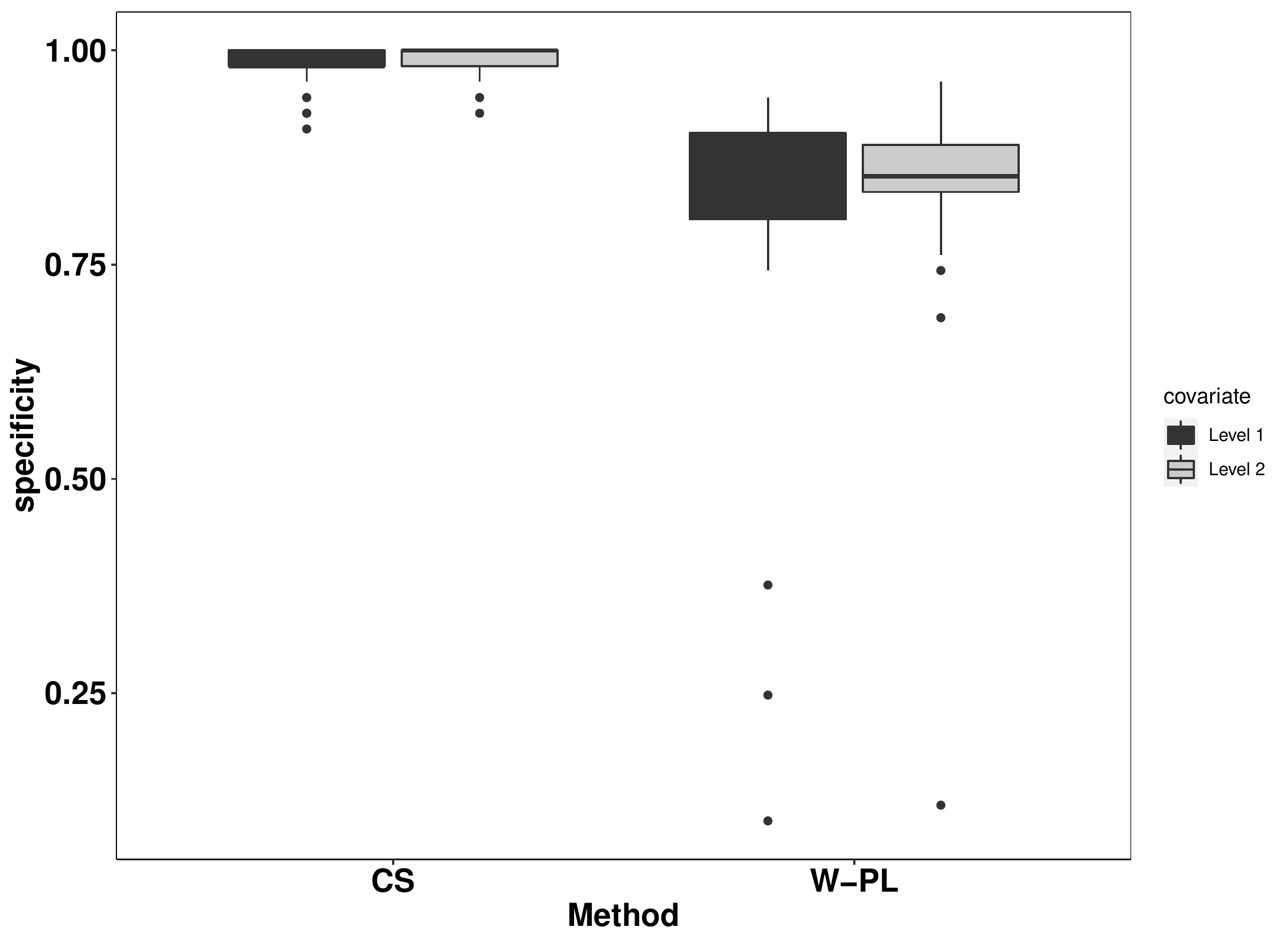}&
			\includegraphics[width=0.3\linewidth]{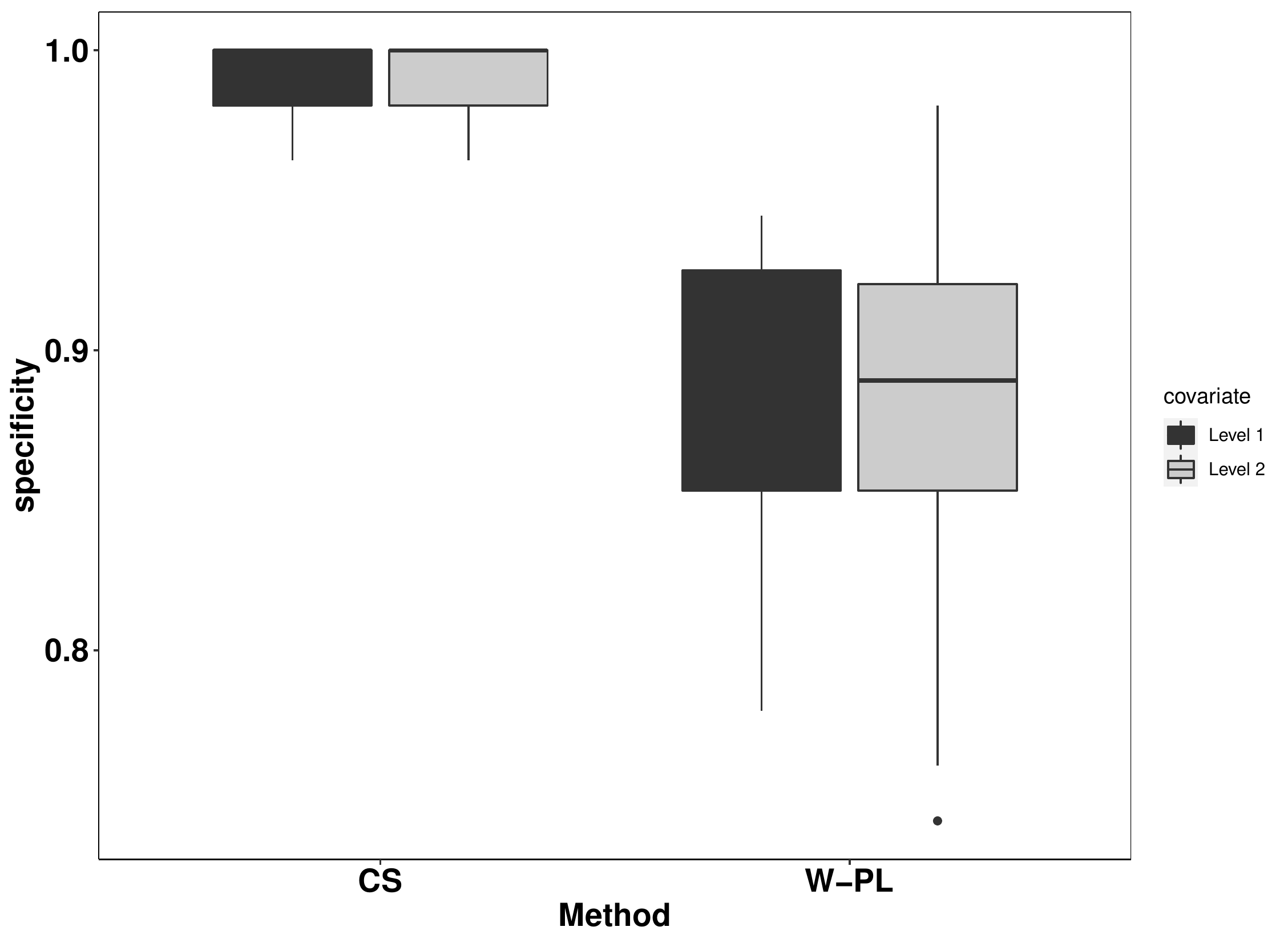}\\
			\includegraphics[width=0.3\linewidth]{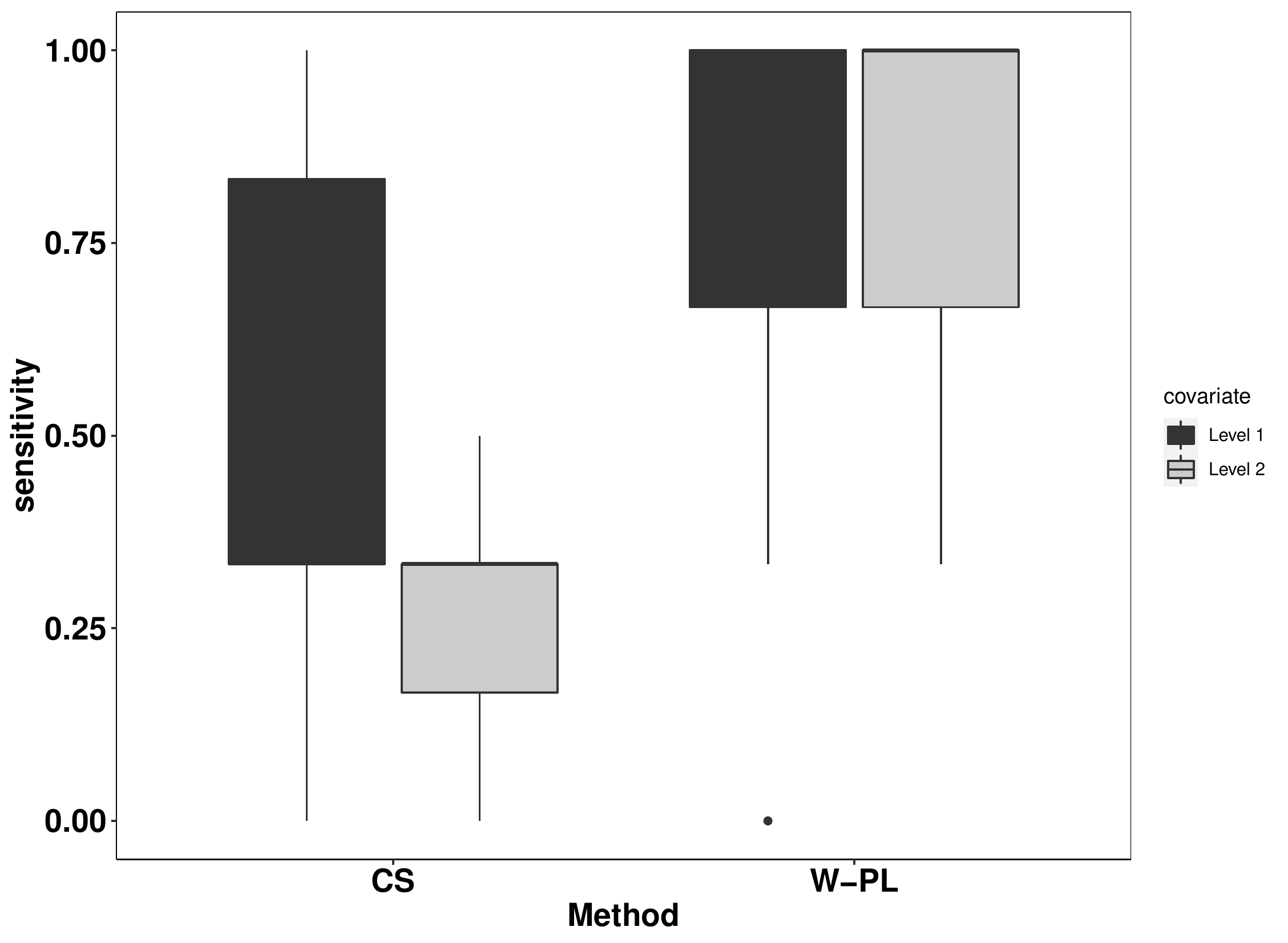} &
			\includegraphics[width=0.3\linewidth]{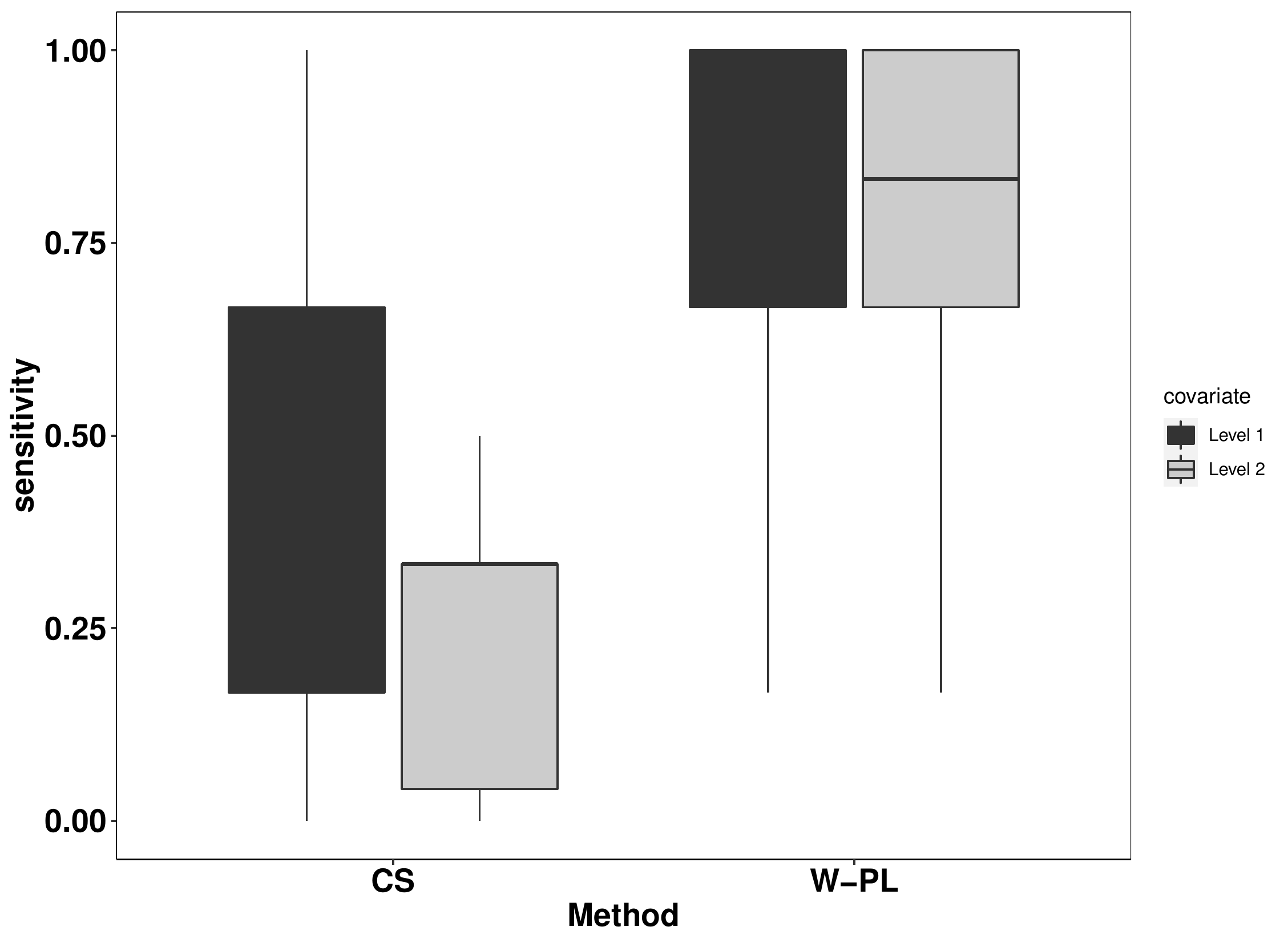}&
			\includegraphics[width=0.3\linewidth]{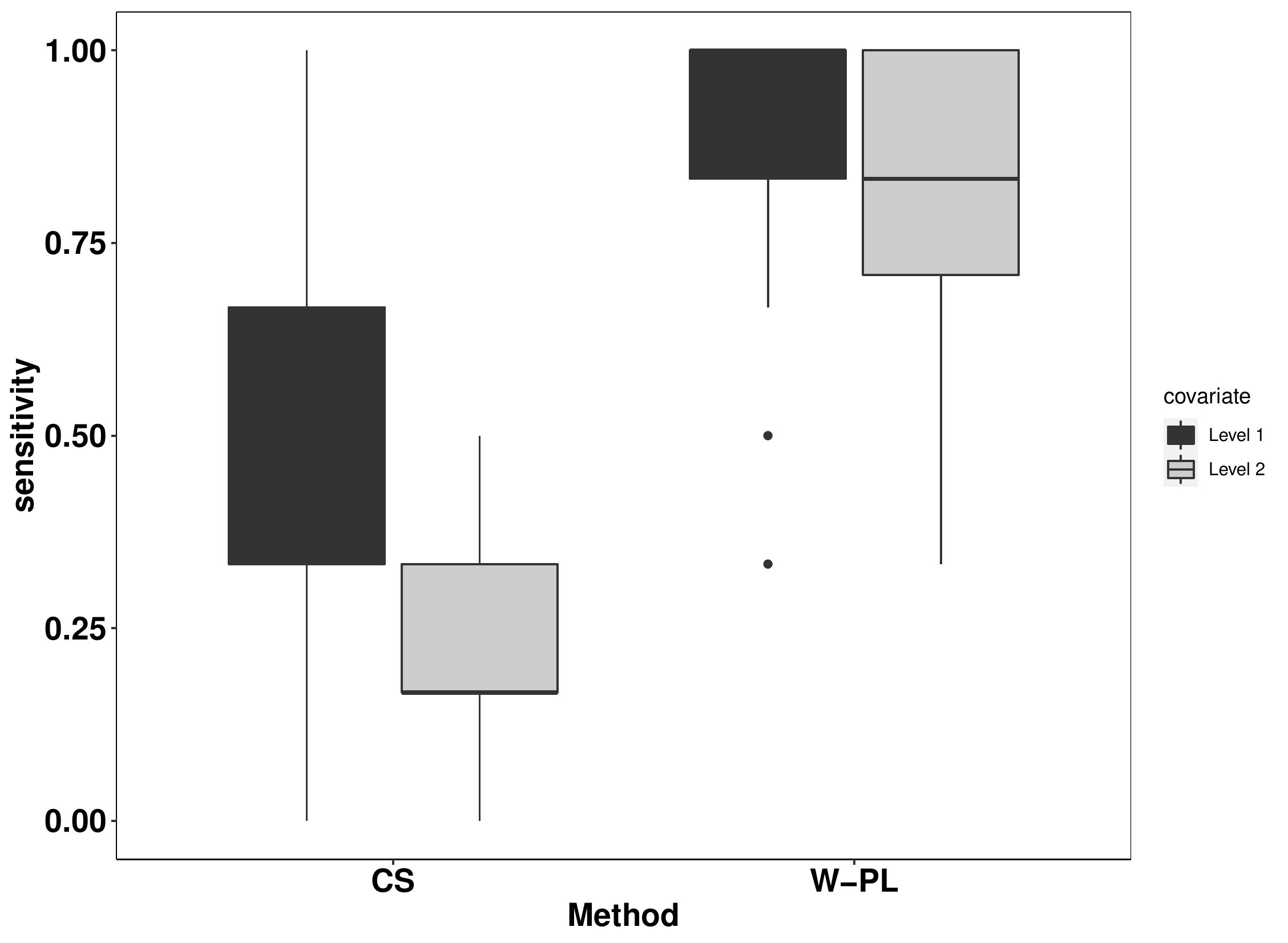}\\
		\end{tabular}
		\caption{\it Specificity and sensitivity comparisons between CS VS W-PL for t distributed data with df 3 (left), 6(middle) and 12(right), with $n=50$ in both groups and $p=11$. }
		\label{fig3612}
	\end{center}
\end{figure}

\subsection{Comparison to \cite{qiu2016joint}}\label{app:Qiu_comp}

Here, we provide a brief comparison of W-PL to the method of \cite{qiu2016joint}. Although their method is applicable in the continuous covariate setting, their work focuses on the case when there are time replicates per subject. It is possible to extend their method to the case where there are not replicates by estimating the subject-level covariance matrices using a kernel-weighted average, however, the implementation provided by the authors does not include this functionality. Thus, we mainly focused on loggle and mgm as competitors for W-PL in our experiments, since the available implementations of these models could directly handle data without replicates, and similar to \cite{qiu2016joint}, both are kernel-based models that frame the precision matrix as varying continuously with time. 

We also note that although there are methods for modeling heterogeneous dependence structures other than those that model the dependence structure as varying with time, such as \cite{ren_gaussian_2022}, we are not aware of any that are directly applicable to a continuous covariate. For example, \cite{ren_gaussian_2022} assumes that the data may be grouped into clusters such that the precision matrix is homogeneous within each of the clusters, and that the means of the clusters are sufficiently separable from one another. On the other hand, W-PL can model the precision matrices as varying continuously and does not place any restrictions on the mean structure of the data.   

For comparison to \cite{qiu2016joint}, we used their simulation setting with 2 (the minimum allowed) time replicates for 100 individuals with 10 variables. We consider estimating the graph for subject 1. For W-PL, we only used the information at the first time point, whereas for \cite{qiu2016joint}, we used the full information on both time points. In this experiment, we revert to the hyperparameter specification scheme from Section \ref{hidim}. We summarize results from this experiment in Figure \ref{fig3624}. Notably, W-PL obtains superior results even when the method of \cite{qiu2016joint} technically uses double the number of observations.

\begin{figure}[htbp]
	\begin{center}
		\begin{tabular}{cc}
			\includegraphics[width=0.43\linewidth]{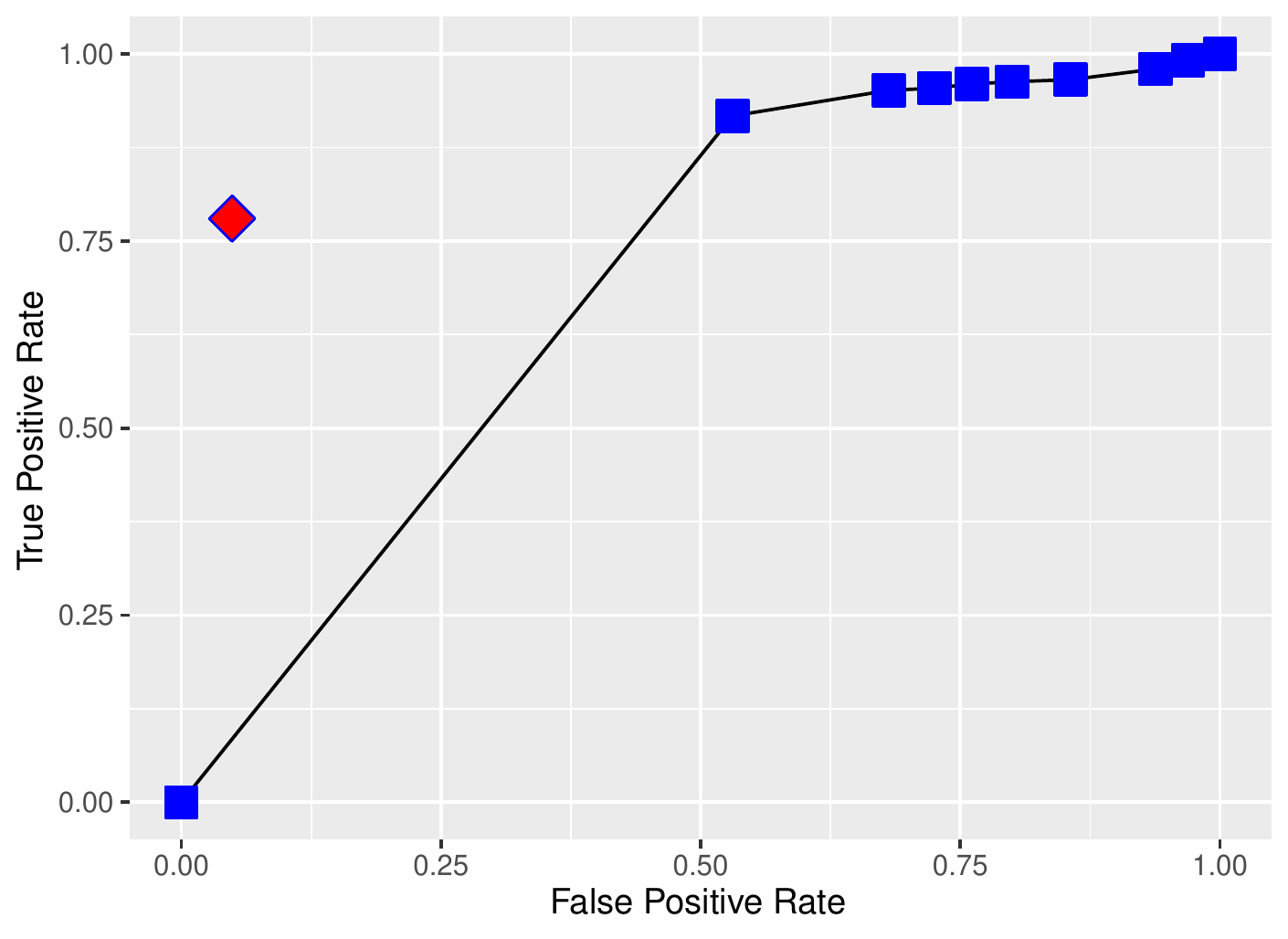} &
			\includegraphics[width=0.43\linewidth]{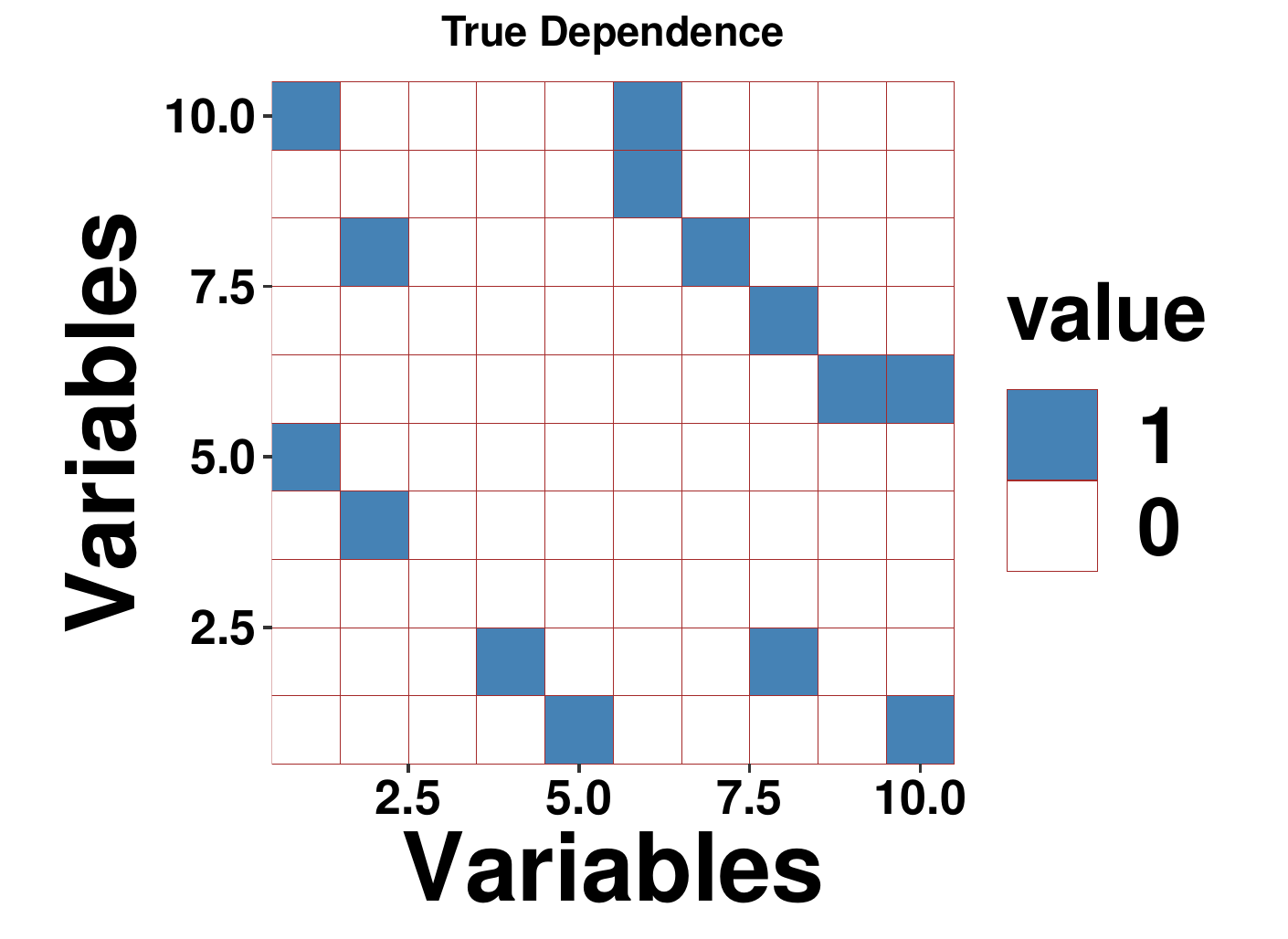}\\
			\includegraphics[width=0.43\linewidth]{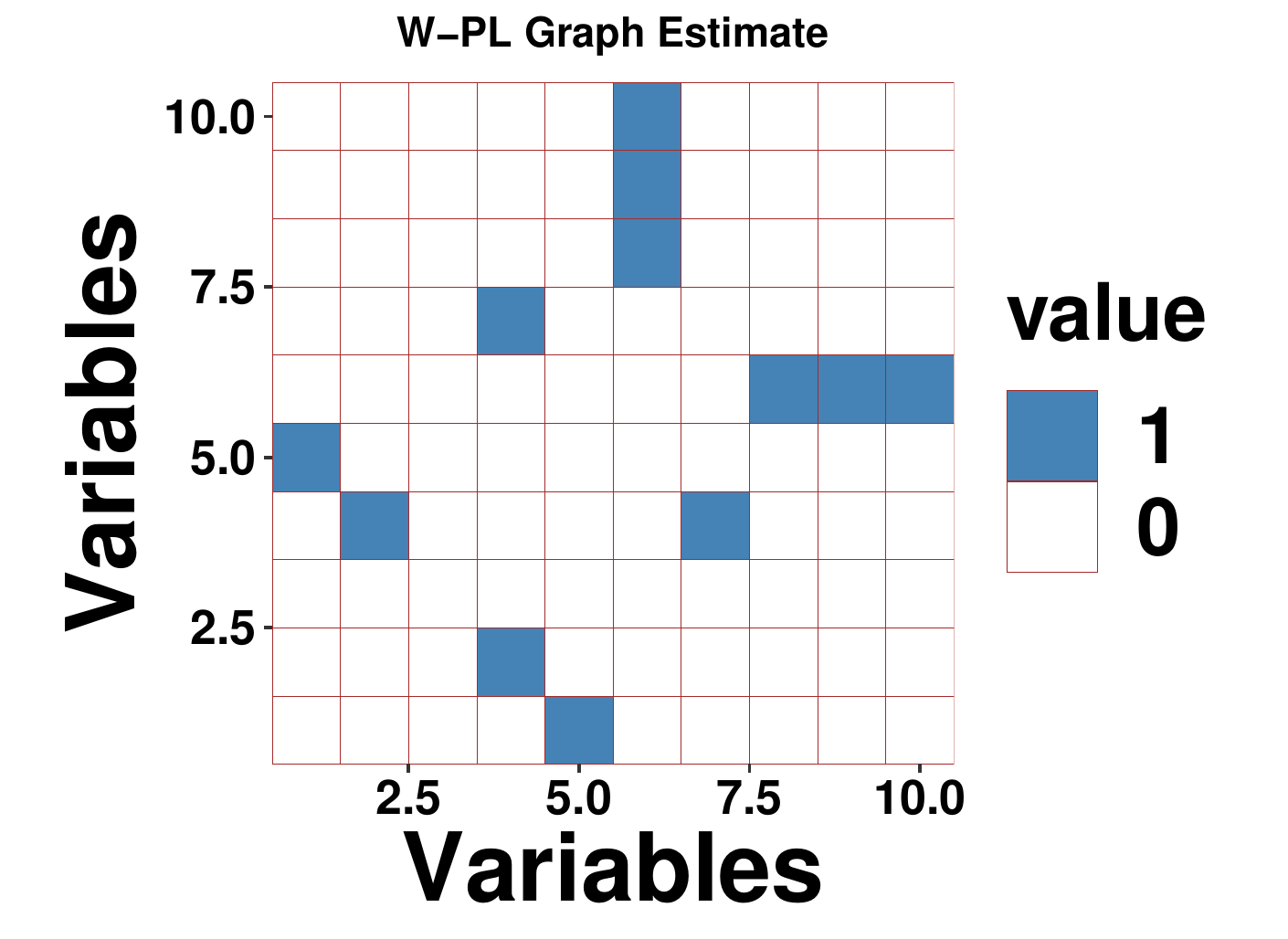}&
			\includegraphics[width=0.43\linewidth]{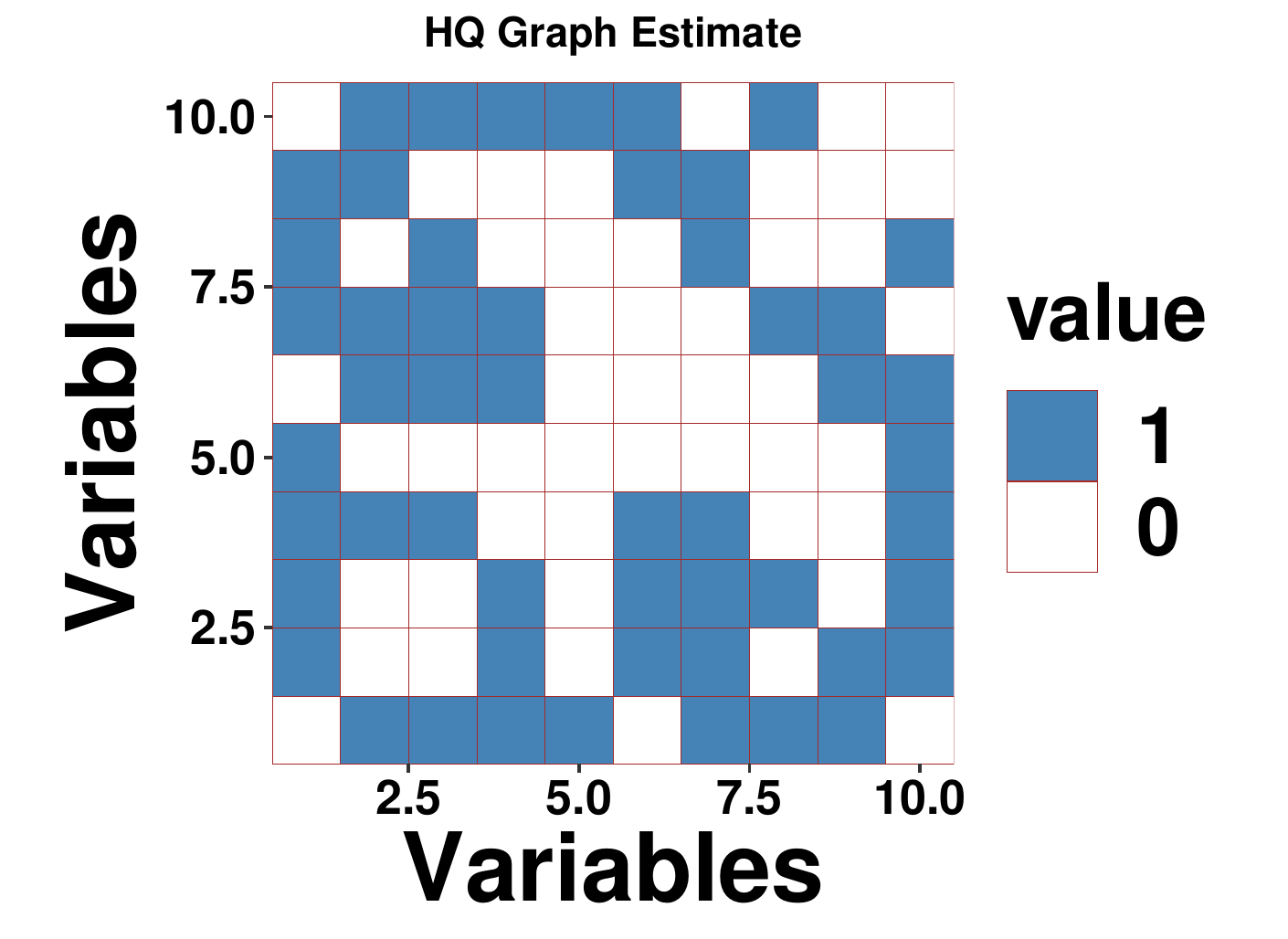} \\
		\end{tabular}
		\caption{\it Qiu's HQ method versus W-PL method. Top left shows the average TPR and FPR for HQ method (squares) versus our method (diamond) across 50 iterations. Top right shows the true dependence structure of the individual for a particular iteration, and the bottom left shows our estimate. Bottom right shows HQ estimate. }
		\label{fig3624}
	\end{center}
\end{figure}

\subsection{Ground-Truth Dependence Structures}

\subsubsection{Unidimensional Covariate}\label{uni_gr}

In the unidimensional covariate setting, we can split the individuals in three clusters based on their covariates: $\mathcal{C}_1 =\{i:-3 < \bz_i <-1\}, \mathcal{C}_2 =\{i:-1<\bz_i<1\}$ and $\mathcal{C}_3 =\{i: 1<\bz<3\}$. Then, the ground truth precision structures for each of the clusters is given by:

\begin{align*}    
\mathrm{G}^*(\mathcal{C}_1)=  \begin{bmatrix} 
0 & 1 & 0 &  \\
1 & 0 & 1 & {\bf 0}_{3,p-2} \\
0 & 1 & 0 & \\
& {\bf 0}_{p-2,3} &  & {\bf 0}_{p-2,p-2}
\end{bmatrix}
&&
\mathrm{G}^*(\mathcal{C}_2)=  \begin{bmatrix} 
0 & 1 & 1 &  \\
1 & 0 & 1 & {\bf 0}_{3,p-2} \\
1 & 1 & 0 & \\
& {\bf 0}_{p-2,3} &  & {\bf 0}_{p-2,p-2}
\end{bmatrix}
\\
\mathrm{G}^*(\mathcal{C}_3)=  \begin{bmatrix} 
0 & 0 & 1 &  \\
0 & 0 & 1 & {\bf 0}_{3,p-2} \\
1 & 1 & 0 & \\
& {\bf 0}_{p-2,3} &  & {\bf 0}_{p-2,p-2}
\end{bmatrix}
\end{align*}

\subsubsection{Multidimensional Covariate}\label{mul_gr}

In the multidimensional covariate setting, we can split the individuals in nine clusters based on their covariates: 

\begin{align*}
&\mathcal{C}_1 =\{i:\bz_i\in(-3, -1)\times(-3, -1)\} 
&&     
\mathcal{C}_2 =\{i:\bz_i\in(-3, -1)\times(-1, 1)\}
\\
&\mathcal{C}_3 =\{i:\bz_i\in(-3, -1)\times(1, 3)\} &&     \mathcal{C}_4 =\{i:\bz_i\in(-1, 1)\times(-3, -1)\}
\\
&\mathcal{C}_5 =\{i:\bz_i\in(-1, 1)\times(-1, 1)\} &&     \mathcal{C}_6 =\{i:\bz_i\in(-1, 1)\times(1, 3)\}
\\
&\mathcal{C}_7 =\{i:\bz_i\in(1, 3)\times(-3, -1)\} &&     \mathcal{C}_8 =\{i:\bz_i\in(1, 3)\times(-1, 1)\}
\\
&\mathcal{C}_9 =\{i:\bz_i\in(1, 3)\times(1, 3)\}
\end{align*}

Then, the ground truth precision structures for each of the clusters is given by:

\begin{align*}
\mathrm{G}^*(\mathcal{C}_1\cup\mathcal{C}_4)=  \begin{bmatrix} 
0 & 1 & 0 &  \\
1 & 0 & 1 & {\bf 0}_{3,p-2} \\
0 & 1 & 0 & \\
& {\bf 0}_{p-2,3} &  & {\bf 0}_{p-2,p-2}
\end{bmatrix}
&&
\mathrm{G}^*(\mathcal{C}_2\cup\mathcal{C}_3\cup \mathcal{C}_5\cup\mathcal{C}_6)=  \begin{bmatrix} 
0 & 1 & 1 &  \\
1 & 0 & 1 & {\bf 0}_{3,p-2} \\
1 & 1 & 0 & \\
& {\bf 0}_{p-2,3} &  & {\bf 0}_{p-2,p-2}
\end{bmatrix}
\\
\mathrm{G}^*(\mathcal{C}_7)=  \begin{bmatrix} 
0 & 0 & 0 &  \\
0 & 0 & 1 & {\bf 0}_{3,p-2} \\
0 & 1 & 0 & \\
& {\bf 0}_{p-2,3} &  & {\bf 0}_{p-2,p-2}
\end{bmatrix}
&&
\mathrm{G}^*(\mathcal{C}_8\cup \mathcal{C}_9)=  \begin{bmatrix} 
0 & 0 & 1 &  \\
0 & 0 & 1 & {\bf 0}_{3,p-2} \\
1 & 1 & 0 & \\
& {\bf 0}_{p-2,3} &  & {\bf 0}_{p-2,p-2}
\end{bmatrix}
\end{align*}

\bibliographystyle{biometrika} 
\bibliography{biblio}
\end{document}